\newtheorem{lemma}{Lemma}
\newtheorem{theorem}{Theorem}
\newtheorem{definition}{Definition}
\newtheorem{corollary}{Corollary}
\newtheorem{remark}{Remark}
\newtheorem{proposition}{Proposition}
\newcommand{\im}{\text{im }}
\title{Fault-Resilience of Dissipative Processes for Quantum Computing}
\author{James Purcell}
\affiliation{Department of Computer Science, University College London, London WC1E 6EA, United Kingdom}
\email{j.purcell@cs.ucl.ac.uk}
\author{Abhishek Rajput}
\affiliation{Department of Computer Science, University College London, London WC1E 6EA, United Kingdom}
\affiliation{Phasecraft Ltd., London W1T 4PW, United Kingdom}
\author{Toby Cubitt}
\email{t.cubitt@ucl.ac.uk}
\affiliation{Department of Computer Science, University College London, London WC1E 6EA, United Kingdom}
\affiliation{Phasecraft Ltd., London W1T 4PW, United Kingdom}
\begin{document}
	\maketitle
	\begin{abstract}
		Dissipative processes have long been proposed as a means of performing computational tasks on quantum computers that may be intrinsically more robust to noise.
		In this work, we prove two main results concerning the error-resilience capabilities of two types of dissipative algorithms: dissipative ground state preparation in the form of the dissipative quantum eigensolver (DQE), and dissipative quantum computation (DQC).
		The first result is that under circuit-level depolarizing noise, a version of the DQE algorithm applied to the geometrically local, stabiliser-encoded Hamiltonians that arise naturally when fermionic Hamiltonians are represented in qubits, can suppress the additive error in the ground space overlap of the final output state exponentially in the code distance. This enables us to get closer to fault-tolerance for this task without the associated overhead.
		In contrast, for computation as opposed to ground state preparation, the second result proves that DQC is no more robust to noise than the standard quantum circuit model.

	\end{abstract}

	%\newpage
	%\tableofcontents
	%\newpage

	\section{Introduction}

        The biggest challenge to realising meaningful quantum computation is making those computations robust to the noise and errors inherent in real-world quantum hardware.
        Quantum error-correction and fault-tolerant circuit constructions provide the theoretical answer to this obstacle~\cite{gottesman1997stabilizer,mikeike}.
        The quantum threshold theorem proves that once the error rate is below a fixed threshold, arbitrarily large quantum computations can be performed to arbitrary accuracy \cite{knill-laflamme,aharonov1999faulttolerantquantumcomputationconstant}.
        However, whilst the overhead of fault-tolerant constructions in terms of physical qubits and gates theoretically scales only modestly (poly-logarithmically) with problem size, the constants are very large.
        It is well-known that the overhead for fully-scalable fault-tolerant quantum computation is likewise very large in practice, conservatively requiring tens of thousands of physical qubits for each logical qubit, and gate overheads even larger than this~\cite{Gidney_2021}.

        The majority of these results are derived in the unitary circuit model of quantum computation, which satisfies the DiVincenzo critera for universal quantum computation~\cite{DiVincenzo_2000}.
        However in 2009, \cite{VWC09} proved that purely dissipative dynamics can be used to perform universal quantum computation. 
        Their dissipative quantum computation (DQC) model side-steps some of the DiVincenzo requirements; in particular, it does not require reliable state initialisation.
        Indeed, as \cite{VWC09} pointed out, the output of a DQC computation is completely insensitive to the initial state. 
        This suggested that dissipative quantum algorithms and computation may be inherently more resilient to noise and errors than other methods of performing quantum computation, though this was not rigorously proven then.

       A major application for quantum computers is to probe the ground state properties of the many-body quantum systems that appear in fields such as condensed matter physics, materials science, and quantum chemistry. In fact, a key aim of quantum physics for many decades has been exactly or approximately solving for ground states of specific Hamiltonians modeling physical phenomena. Hamiltonians for which this program was successful include, among others, extensions of the 1D Heisenberg spin models \cite{bethe1931theorie,majumdar1969next,affleck2004rigorous}, extensions of the Ising model \cite{pfeuty1970one}, models of topological order \cite{kitaev2003fault,kitaev2006anyons}, and models of strongly correlated matter \cite{sachdev1993gapless}.
    	The physical insights explaining their success involve some of the most interesting developments in quantum information, such as entanglement area laws \cite{hastings2007area} and the notion of matrix product states \cite{Klumper_1993,GarciaMPS}.

	However, finding Hamiltonian ground states is a fundamentally hard problem in general.
	Even a relaxation---probing some specific property of the ground state---often also remains intractable.
	For example, approximating the ground state energy is NP-hard for classical Hamiltonians \cite{F_Barahona_1982} and QMA-hard for quantum Hamiltonians \cite{kitaev}.
	The latter hardness still holds even for Hamiltonians with very restricted structure \cite{Aharonov_2009}.
	Efficient preparation of a general Hamiltonian ground state therefore implies BQP$=$QMA and so is thought to take at least exponential time. (Preparing the corresponding quantum state given a sufficiently structured description of the ground state -- such as an MPS, PEPS, or stabiliser state -- is a much easier problem however \cite{VWC09}).

	Despite these complexity theoretic limitations, numerous algorithms for finding ground states have been developed and indeed often show success in some specific practical applications. In the same paper as they introduced DQC, \cite{VWC09} (and independently \cite{Kraus2008pyc}) also introduced the dissipative state engineering algorithm for preparing ground states of certain classes of quantum many-body Hamiltonians, namely frustration-free Hamiltonians of which stabiliser Hamiltonians and parent Hamiltonians of MPS and PEPS are special cases (note that the algorithm works for arbitrary frustration-free Hamiltonians but is only efficient for the latter special cases).
	The idea of dissipative state engineering is to design the system such that the interactions with the environment drive the system to a steady state.
	In the case of ground state preparation, the steady state should be designed to be close to the ground state of the Hamiltonian in question. \cite{VWC09} also hypothesised improved resilience to mid-algorithm noise as a result of the observation that under a single error, the initial state will still eventually converge to the fixed point of the map.
	This intuition though does not necessarily say anything about resilience to a positive \textit{rate} of noise.

    Recently however, \cite{cubitt2023dissipative} built on the dissipative state engineering results to show that the ground state of \emph{any} quantum many-body system can be prepared dissipatively using the dissipative quantum eigensolver algorithm that incorporates weak measurements.
    Moreover, they proved rigorously that the promise of inherent fault-resilience for dissipative algorithms holds here: the DQE algorithm, as well as the original dissipative state engineering algorithm, will succeed even if there is a positive rate of errors per time-step, as long as that error rate is below some finite threshold. This is weaker than full fault-tolerance though, as the error in the final state under a constant rate of noise cannot be made arbitrarily small for a general Hamiltonian (but in contrast to full fault-tolerant quantum computation, it incurs the minimal overhead of just a single additional ancilla qubit).

	These observations about dissipative quantum computation and algorithms raise the following questions: (1)~Is dissipative quantum computation truly more resilient to noise than the circuit model?
	(2)~Are there families of Hamiltonians with exploitable structures for which we can get closer to fault-tolerance without the associated overhead under the DQE algorithm?

	In Section \ref{section:dqestab} of this paper, we develop a specific version of the DQE algorithm for geometrically-local, \textit{stabiliser-encoded Hamiltonians} and show that under circuit-level depolarizing noise with noise rate $\delta$, the overlap of the algorithm's output state with the ground space is $1- \varepsilon - O(c^{(d+1)^2/4D_R})$ (see Theorem \ref{thm:noisysDQE}). Here $d$ is the distance of the stabiliser code, $c < 1$ and depends on the stabiliser code parameters and $\delta$, $D_R$ is the constant quantum circuit depth of the stabiliser recovery operation, and $\varepsilon < 1$ is an independent constant related to the strength of the weak measurements employed in the algorithm and can be made arbitrarily small. Compared to the standard DQE algorithm result of $1 - \varepsilon - O(\delta)$ (see Theorem 61 of \cite{cubitt2023dissipative}), this is therefore a further suppression in the additive error of the prepared state that is exponential in the code distance. This result also implies that if the noise rate $\delta$ is below a certain threshold (see the discussion after Theorem \ref{thm:noisysDQE}), the algorithm inherits additional fault-resilience from the stabiliser-code structure of the Hamiltonian. In the worst-case scenario however, its run time is exponential just like the conventional DQE algorithm (though it  often converges to the ground-state quickly and in low circuit-depth in practice; see   \cite{cubitt2023dissipative} for details of the practical implementation of the DQE algorithm). 

	In Section~\ref{section:dqc}, we turn our attention to dissipative quantum computation (DQC).
        Here, in contrast to dissipative state preparation, we prove that the tolerance of DQC to errors and noise is no better than the standard circuit model.
        More precisely, we show that a variant of the original Lindbladian formulation of DQC in~\cite{VWC09} that converges to the same fixed point in at most the same amount of time is exponentially close to a discrete-time formulation of the dynamics which corresponds, operationally, to a classical random walk on the corresponding quantum circuit.
        Under a natural noise model---namely, iid depolarising noise---the tolerance of DQC to noise can then readily be shown to be no better (indeed, in general worse) than the tolerance of the raw quantum circuit. 
        
        Appendices \ref{appendix:notation} and \ref{appendix:stabcodes} outline the concepts from stabiliser code theory and the notation used throughout the paper.

	\section{Dissipative Quantum Eigensolver for Stabiliser-Encoded Hamiltonians}

	\label{section:dqestab}

        Some of the most compelling practical applications of quantum simulation are to electronic structure problems in chemistry and materials science.
        As these Hamiltonians describe the states of electrons, they are fermionic Hamiltonians.
        As quantum computers are built out of qubits, simulating electronic structure Hamiltonians on a quantum computer requires mapping the fermionic Hamiltonian to a Hamiltonian on qubits.
        Many such fermionic-to-qubit mappings are known and the original and simplest is the Jordan-Wigner transformation, which directly maps fermionic Fock states to computational basis states of the qubits \cite{JordanWigner,Nielsen2005TheFC}.
        To preserve the anti-commutation relations of the fermionic operators under it however, most Hamiltonian terms involving constant numbers of fermionic modes have to be mapped to strings of Pauli operators that act across a large fraction of all the qubits.

        Since the advent of quantum information, a number of \emph{local} fermionic-to-qubit mappings have been developed to address this limitation so that Hamiltonian terms acting locally on the fermionic modes are mapped to Pauli terms acting on a constant number of qubits.
        Examples include the Verstraete-Cirac encoding~\cite{Verstraete_2005}, the Bravyi-Kitaev superfast encoding~\cite{Bravyi_2002}, the compact encoding~\cite{DerbyCompactfermion}, and others \cite{Jiang2019}.
        Any such local fermion mapping must necessarily map into a subspace of the full qubit Hilbert space and all of the above local fermion-to-qubit mappings have in common that the subspace mapped into forms a stabiliser subspace, usually with local stabiliser generators.
        Thus one of the most important classes of Hamiltonians in practical quantum simulation applications manifests naturally as qubit Hamiltonians with an additional structure that is closely related to quantum error-correcting codes.
        In a different context, Hamiltonians with this property also arise as boundary Hamiltonians of certain toy models of AdS/CFT~\cite{Apel_2022,Kohler_2019}.
        We will call such Hamiltonians, for which the ground state lies within the codespace of a stabiliser code, ``stabiliser-encoded Hamiltonians''.

        For general gapped many-body quantum Hamiltonians, there is no good reason to expect fault-resilience of dissipative ground state preparation beyond that proven in~\cite{cubitt2023dissipative}, which ultimately (albeit indirectly) stems from stability of the ground-state to perturbations that are small relative to the spectral gap of the Hamiltonian.
        Going beyond this for general ground-state preparation would likely require the dissipative algorithm itself to be made fault-tolerant.
        However, stabiliser-encoded Hamiltonians are a clear candidate for improved inherent error resilience in ground-state preparation algorithms without incurring the full overhead of fault-tolerant quantum computation.

	Another desideratum for fermion-to-qubit mappings is that local fermionic operators are mapped to local spin operators and this locality requirement ensures the resulting stabiliser-encoded Hamiltonians are geometrically local.
	We will therefore focus first on geometrically local stabiliser-encoded Hamiltonians and later discuss to what extent the results can be extended to other classes Hamiltonians.

	In Section \ref{subsec:AGSPconvergence}, we develop a version of the DQE algorithm adapted to preparing ground states of stabiliser-encoded Hamiltonians. We first show that it has the necessary ground space convergence properties in the ideal, noiseless scenario. Then in Section~\ref{subsec:faultresilience}, we prove that this algorithm inherits additional fault-resilience ``for free'' from the stabiliser structure, as compared to the original DQE algorithm, under circuit-level depolarizing noise. We primarily analyze the accuracy of the final state as opposed to the algorithm run time, which is exponential in the worst case scenario like the general DQE algorithm (though the DQE algorithm often succeeds in finding the ground state quickly in practice). 

	\subsection{Ground Space Convergence Properties}
	\label{subsec:AGSPconvergence}

	We consider a \textit{geometrically-local}, stabiliser-encoded Hamiltonian, which has the form:
	\begin{equation}
		H = H_0 + \beta \sum_s \Pi_s = \sum_{i=1}^m h_i + \beta \sum_{s=1}^k \Pi_s. \label{eq:hamiltonianFull}
	\end{equation}
	$H$ has a stabiliser part made up of $k$ stabiliser projectors $\Pi_s$ corresponding to $k$ independent stabiliser generators for an $[n,n-k,d]$ geometrically-local stabiliser code (see \ref{appendix:stabcodes} for the relevant notation), and non-stabiliser terms $h_i$ such that $$[h_i,\Pi_s] = 0 \ \forall i,s \ \text{ and } \  [\Pi_s,\Pi_{s'}] = 0 \ \ \forall s,s'.$$  The commuting constraints imposed upon the summands of $H$ imply that its ground space lies in the stabiliser codespace.
     
     We can also rescale the stabiliser projector terms by some number $\beta \geq 0$ if desired. Lemma \ref{lemma:codespaceproj} implies this part of the preceding Hamiltonian imposes an energy penalty on those states that do not lie within the codespace, so this rescaling effectively controls the strength of the perturbation to $H_0$. The choice of rescaling however \textit{does not} affect the validity of the subsequent results and merely affects the form that certain parameters may take. Regardless of the rescaling made, there is the caveat the ground space of the $H$ is an eigenspace corresponding to \textit{some} eigenvalue of $H_0$, not necessarily the lowest eigenvalue of $H_0$. This is all that is needed in the proof of Theorem \ref{thm:AGSP} to follow.

	Now consider the following operator:
	\begin{equation}
		K = \left(\prod_{s=1}^k (I - \Pi_s) \right) \left(\prod_{i=1}^m ((1-\varepsilon) I + \varepsilon k_i)\right) \left(\prod_{i=m}^1 ((1-\varepsilon) I + \varepsilon k_i)\right) \label{eq:AGSP}
	\end{equation}

	where $$k_i = \frac{(\|h_i\|/\kappa)(I - h_i/\|h_i\|)}{2}, \hspace{1em} \kappa = \sum_{i=1}^m \|h_i\| + \beta\sum_{s=1}^k \|\Pi_s\| = \sum_{i=1}^m \|h_i\| + \beta k.$$ 
    
    This operator essentially consists of projective measurements of the codespace projector along with weak-measurements of the terms in $H_0$ (up to rescalings and shifts) in \ref{eq:hamiltonianFull}, where the ``weakness" is determined by the parameter $\varepsilon$. Setting $\varepsilon = 1$ recovers the case where all terms in the stabiliser-encoded Hamiltonian are projectively measured while setting $\varepsilon = 0$ turns off the measurements of the terms in $H_0$.
     
    $K$ is manifestly Hermitian and we would like to show that it is an approximate ground state projector (AGSP) as defined below. This AGSP will be an important ingredient in the ground state preparation algorithm we outline later.

	\begin{definition}[\textbf{Definition 6 of \cite{cubitt2023dissipative}}]
		\label{def:AGSP}
		Let $\Pi_0$ be the projector onto the ground space of a Hamiltonian. A Hermitian operator $K$ is a $(\Delta, \Gamma, \delta)$-AGSP for $\Pi_0$ if there exists a projector $\Pi$ such that:
		\begin{enumerate}[(i)]
			\item $\|\Pi - \Pi_0\| \leq \delta$
			\item $[K,\Pi] = 0$
			\item $K\Pi \geq \sqrt{\Gamma} \Pi$
			\item $\|(I-\Pi)K(1-\Pi)\| \leq \sqrt{\Delta}$
		\end{enumerate}
	\end{definition}

	We will take $\Pi = \Pi_0$ unless otherwise specified, so the first property is satisfied with $\delta = 0$ in this situation. The third and fourth properties simply quantify how much the AGSP pushes a state into the ground space than out of it respectively. A ``good" AGSP, i.e. one that pushes a state into the ground space more than out of it, is therefore characterized by having $\sqrt{\Gamma} > \sqrt{\Delta}$. The last property can equivalently be written as $K(I - \Pi) \leq \sqrt{\Delta}(I - \Pi)$ if desired. 

    A concrete example of an AGSP is as follows: 

    \begin{lemma}[\textbf{Lemma 12 of \cite{cubitt2023dissipative}}]
        \label{lemma:AGSPexample}
        If $H = \sum_i h_i$ is a Hamiltonian, then $$K = \frac{I - H/\kappa}{2} = \sum_i \kappa_i k_i$$ where $$k_i = \frac{I - h_i/\|h_i\|}{2}, \quad \kappa = \sum_i \|h_i\|, \quad \kappa_i = \frac{\|h_i\|}{\kappa}$$ is a $(\Delta, \Gamma, 0)$-AGSP for the projector $\Pi_0$ onto the ground space of $H$ with $$\sqrt{\Gamma} = \frac{1 - \lambda_0/\sum_i \|h_i\|}{2}, \quad \sqrt{\Delta} = \frac{1 - \lambda_1/\sum_i \|h_i\|}{2},$$ where $\lambda_0, \lambda_1$ are the minimum and next-lowest eigenvalues of $H$ (not counting degeneracies). 
    \end{lemma}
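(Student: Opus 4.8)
The plan is to exploit the fact that $K$ is an affine function of $H$: writing $K = \tfrac12(I - H/\kappa)$, the operators $K$ and $H$ are simultaneously diagonalisable, so every requirement in Definition~\ref{def:AGSP} reduces to a statement about the scalar map $f(\lambda) := \tfrac12(1 - \lambda/\kappa)$ evaluated on the spectrum of $H$.

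First I would record the elementary identity $\sum_i \kappa_i k_i = \tfrac12\big(I - \sum_i h_i/\kappa\big) = \tfrac12(I - H/\kappa) = K$, so the two displayed expressions for $K$ agree, and $K = K^\dagger$ since $H$ is Hermitian. Next, by the triangle inequality $\|H\| \le \sum_i \|h_i\| = \kappa$, hence every eigenvalue $\lambda$ of $H$ lies in $[-\kappa, \kappa]$ and therefore $f(\lambda) \in [0,1]$; in particular $0 \le K \le I$. Taking the spectral decomposition $H = \sum_j \lambda_j P_j$ with distinct eigenvalues $\lambda_0 < \lambda_1 < \cdots$, we have $K = \sum_j f(\lambda_j) P_j$ and $\Pi_0 = P_0$. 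This makes property~(ii), $[K, \Pi_0] = 0$, immediate, and property~(i) holds with $\delta = 0$ because we take $\Pi = \Pi_0$.

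For property~(iii), restricting to the ground space gives $K\Pi_0 = f(\lambda_0)\,\Pi_0 = \tfrac12(1 - \lambda_0/\kappa)\,\Pi_0$, so the condition holds with $\sqrt{\Gamma} = \tfrac12(1 - \lambda_0/\kappa)$ (this is also the largest value consistent with $0\le K\le I$, since $f$ is decreasing and hence $f(\lambda_0) = \|K\|$). For property~(iv), $K$ commutes with $\Pi_0$, so $(I-\Pi_0)K(I-\Pi_0) = \sum_{j \ge 1} f(\lambda_j) P_j$, whose operator norm is $\max_{j\ge1} f(\lambda_j)$; since $f$ is monotonically decreasing and nonnegative on $[-\kappa, \kappa]$, this maximum is attained at the smallest excited eigenvalue, giving $\sqrt{\Delta} = f(\lambda_1) = \tfrac12(1 - \lambda_1/\kappa)$, as claimed.

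The only real point of care --- the ``main obstacle'', modest as it is --- is in property~(iv): to conclude that the excited-space operator norm equals $f(\lambda_1)$ and not some larger $|f(\lambda_j)|$, one needs both the monotonicity of $f$ and the bound $\|H\| \le \kappa$, the latter keeping $f \ge 0$ on the whole spectrum so that no sign flips occur. There is a trivial edge case when $H$ has only one distinct eigenvalue, in which case $\Pi_0 = I$ and property~(iv) is vacuous; otherwise $\lambda_1$ exists and the argument goes through verbatim.
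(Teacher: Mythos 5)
Your proof is correct, and since the paper simply imports this statement as Lemma~12 of \cite{cubitt2023dissipative} without reproducing a proof, there is nothing to diverge from: the spectral-mapping argument you give (diagonalise $H$, observe $f(\lambda)=\tfrac12(1-\lambda/\kappa)$ is decreasing and stays in $[0,1]$ because $\|H\|\le\kappa$, then read off properties (ii)--(iv)) is the standard and essentially only route. Your reading of the first displayed parameter as $\sqrt{\Gamma}=\tfrac12(1-\lambda_0/\kappa)$ rather than $\Gamma$ is also the right call --- the lemma statement has a typo there, and your convention matches how the paper uses $\sqrt{\Gamma_0}$ in the proof of Theorem~\ref{thm:AGSP}.
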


    We will need to consider perturbations to AGSPs in what follows and therefore note the following relevant lemma.

    \begin{lemma}[\textbf{Lemma 11 of \cite{cubitt2023dissipative}}]
        \label{lemma:AGSPpert}
        Let $K$ be a $(\Delta, \Gamma, 0)$-AGSP for $\Pi_0$ with ground state degeneracy $N \coloneqq \tr \Pi_0$. If $K'$ is a Hermitian operator such that $\delta \coloneqq \|K - K'\| < |\sqrt{\Gamma}-\sqrt{\Delta}|$, then $K'$ is a $(\Delta + \delta, \Gamma - \delta, \varepsilon)$-AGSP with $$\varepsilon = \frac{2\sqrt{N}\delta}{\sqrt{\Gamma}-\sqrt{\Delta}}$$
    \end{lemma}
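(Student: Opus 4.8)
\emph{Proof proposal.} The plan is to realise the required projector $\Pi$ as a spectral projector of $K'$, to read off properties~(ii)--(iv) of Definition~\ref{def:AGSP} directly from eigenvalue perturbation theory, and to reserve the real effort for the one substantive estimate, property~(i), namely $\|\Pi-\Pi_0\|\le\varepsilon$.

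I would first fix the spectral picture. Since $[K,\Pi_0]=0$, $K$ is block diagonal with respect to $\Pi_0$, and properties~(iii) and~(iv) say that on $\im\Pi_0$ all its eigenvalues are $\ge\sqrt\Gamma$ while on $\im(I-\Pi_0)$ they lie in $[-\sqrt\Delta,\sqrt\Delta]$. Thus $\Pi_0$ is precisely the spectral projector of $K$ onto eigenvalues $\ge\sqrt\Gamma$, the $N=\tr\Pi_0$ largest eigenvalues of $K$ are $\ge\sqrt\Gamma$, and $K$ has a spectral gap of width at least $\sqrt\Gamma-\sqrt\Delta$. Writing $K'=K+E$ with $\|E\|=\delta<\sqrt\Gamma-\sqrt\Delta$ and applying Weyl's inequality, the ordered eigenvalues of $K'$ move by at most $\delta$, so the $N$ largest eigenvalues of $K'$ are $\ge\sqrt\Gamma-\delta$ and all others are $\le\sqrt\Delta+\delta$. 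I would define $\Pi$ to be the spectral projector of $K'$ onto its $N$ largest eigenvalues (breaking any tie at the $N$th eigenvalue arbitrarily; when $2\delta<\sqrt\Gamma-\sqrt\Delta$ there is a genuine gap and $\Pi$ is canonical). Then $[K',\Pi]=0$ automatically, $K'\Pi\ge(\sqrt\Gamma-\delta)\Pi$, and $\|(I-\Pi)K'(I-\Pi)\|\le\sqrt\Delta+\delta$; these are exactly conditions~(iii) and~(iv) with the square-root parameters $\sqrt\Gamma,\sqrt\Delta$ shifted down to $\sqrt\Gamma-\delta,\sqrt\Delta+\delta$, i.e.\ with $\Gamma\mapsto\Gamma-\delta$ and $\Delta\mapsto\Delta+\delta$.

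The core of the argument is the estimate $\|\Pi-\Pi_0\|\le\varepsilon$. I would pick an orthonormal basis $\{|\psi_j\rangle\}_{j=1}^N$ of $\im\Pi_0$ consisting of eigenvectors of $K$, $K|\psi_j\rangle=\mu_j|\psi_j\rangle$ with $\mu_j\ge\sqrt\Gamma$, and set $|\chi_j\rangle:=(I-\Pi)|\psi_j\rangle$. Since $[K',\Pi]=0$, the $\Pi$-component of $|\psi_j\rangle$ does not contribute to $\langle\chi_j|K'|\psi_j\rangle$, so $\langle\chi_j|K'|\psi_j\rangle=\langle\chi_j|K'|\chi_j\rangle$. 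Expanding $K'=K+E$, using $K|\psi_j\rangle=\mu_j|\psi_j\rangle$ and $\langle\chi_j|\psi_j\rangle=\|\chi_j\|^2$, and bounding $\langle\chi_j|K'|\chi_j\rangle\le(\sqrt\Delta+\delta)\|\chi_j\|^2$ (as $|\chi_j\rangle\in\im(I-\Pi)$), gives
\begin{equation*}
\mu_j\|\chi_j\|^2+\langle\chi_j|E|\psi_j\rangle=\langle\chi_j|K'|\chi_j\rangle\le(\sqrt\Delta+\delta)\|\chi_j\|^2 .
\end{equation*}
With $|\langle\chi_j|E|\psi_j\rangle|\le\delta\|\chi_j\|$ and $\mu_j\ge\sqrt\Gamma$ this rearranges to $(\sqrt\Gamma-\sqrt\Delta-\delta)\|\chi_j\|\le\delta$, the coefficient being positive by hypothesis, so $\|\chi_j\|\le\delta/(\sqrt\Gamma-\sqrt\Delta-\delta)$. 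Summing, $\tr[\Pi_0(I-\Pi)]=\sum_j\|\chi_j\|^2\le N\delta^2/(\sqrt\Gamma-\sqrt\Delta-\delta)^2$, and since $\Pi_0$ and $\Pi$ have equal rank, $\|\Pi-\Pi_0\|\le(\tr[\Pi_0(I-\Pi)])^{1/2}$ (a standard bound on the operator norm of the difference of two equal-rank projectors in terms of the sines of their principal angles), whence $\|\Pi-\Pi_0\|\le\sqrt N\,\delta/(\sqrt\Gamma-\sqrt\Delta-\delta)$. Splitting into the case $\delta\le\tfrac12(\sqrt\Gamma-\sqrt\Delta)$, where this is $\le 2\sqrt N\,\delta/(\sqrt\Gamma-\sqrt\Delta)$, and $\delta>\tfrac12(\sqrt\Gamma-\sqrt\Delta)$, where $2\sqrt N\,\delta/(\sqrt\Gamma-\sqrt\Delta)>1\ge\|\Pi-\Pi_0\|$ trivially, yields the claimed $\varepsilon$ for the whole range.

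I expect the per-vector estimate to be the only delicate point. The naive approach---writing $|\psi_j\rangle$ as a sum of its $\Pi$- and $(I-\Pi)$-components and comparing $\langle\psi_j|K'|\psi_j\rangle$ against the two spectral windows---only gives a bound of order $\sqrt\delta$, whereas the linear-in-$\delta$ scaling is needed; recovering it forces one to work with exact eigenvectors of $K$, so that the off-diagonal term collapses to $\mu_j\|\chi_j\|^2$ identically, which is precisely the Davis--Kahan $\sin\Theta$ mechanism. A related subtlety is that the hypotheses place no upper bound on $K$ restricted to $\im\Pi_0$, so one must avoid any step invoking $\|K\|$ or $\|K'\Pi\|$; the eigenvector computation above does exactly this. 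The remaining ingredients---Weyl's inequality, reading off (ii)--(iv) from the spectral decomposition of $K'$, and the projector-difference inequality---are routine.
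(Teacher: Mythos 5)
The paper itself gives no proof of this lemma---it is imported verbatim as Lemma~11 of \cite{cubitt2023dissipative}---so there is no in-paper argument to compare against; I can only assess your proof on its merits, and it is correct. Identifying $\Pi_0$ as the spectral projector of $K$ onto eigenvalues $\geq\sqrt{\Gamma}$, taking $\Pi$ to be the rank-$N$ spectral projector of $K'$ (justified by Weyl's inequality), and extracting the linear-in-$\delta$ bound on $\|\Pi-\Pi_0\|$ from the identity $\bra{\chi_j}K'\ket{\psi_j}=\bra{\chi_j}K'\ket{\chi_j}$ together with $\tr[\Pi_0(I-\Pi)]=\sum_j\|\chi_j\|^2$ and the equal-rank projector inequality is exactly the Davis--Kahan mechanism that the $\sqrt{N}/(\sqrt{\Gamma}-\sqrt{\Delta})$ form of $\varepsilon$ calls for; your case split at $\delta=\tfrac12(\sqrt{\Gamma}-\sqrt{\Delta})$ then recovers the stated constant on the whole admissible range, and your diagnosis of why the naive $\sqrt{\delta}$-type comparison is insufficient is apt. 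Two minor points. First, what you actually establish for properties (iii) and (iv) is $K'\Pi\geq(\sqrt{\Gamma}-\delta)\Pi$ and $\|(I-\Pi)K'(I-\Pi)\|\leq\sqrt{\Delta}+\delta$, i.e.\ new parameters $\Gamma'=(\sqrt{\Gamma}-\delta)^2$ and $\Delta'=(\sqrt{\Delta}+\delta)^2$ rather than literally $\Gamma-\delta$ and $\Delta+\delta$; since $(\sqrt{\Gamma}-\delta)^2\geq\Gamma-\delta$ only when $\sqrt{\Gamma}\leq(1+\delta)/2$, this does not coincide with the transcription above for all parameter ranges, but it is the natural form of the perturbation result and is all that any downstream use in this paper requires---the discrepancy is an artifact of how the external lemma is restated here, not a gap in your argument. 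Second, for property (iv) you should also record the lower bound $-(\sqrt{\Delta}+\delta)$ on the spectrum of $K'$ restricted to $\im(I-\Pi)$; this follows from the same Weyl step, since every eigenvalue of $K$ lies in $[-\sqrt{\Delta},\sqrt{\Delta}]\cup[\sqrt{\Gamma},\infty)$.
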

    We now prove the following: 
	\begin{theorem}
		Let $H$ be the stabilizer-encoded Hamiltonian of \eqref{eq:hamiltonianFull}, $K$ the Hermitian operator in \eqref{eq:AGSP}, and $\Pi$ the ground space projector of $H$. Suppose $\lambda$ is an eigenvalue of $H_0$ such that $H \Pi = \lambda \Pi$ and let $\lambda'$ be the next-largest eigenvalue of $H$. Then $K$ is an $(\Delta + O(\varepsilon^2),\Gamma - O(\varepsilon^2),O(\varepsilon^2))$-AGSP for $H$ where $$\sqrt{\Gamma} \coloneqq (1-\varepsilon)^{2m-1}\left[(1-\varepsilon) + \varepsilon \frac{\kappa_0}{\kappa} \left(1 -\frac{\lambda}{\kappa_0} \right) \right],$$ $$\sqrt{\Delta} \coloneqq (1-\varepsilon)^{2m-1} \left[(1-\varepsilon) + \varepsilon \frac{\kappa_0}{\kappa} \left(1-\frac{\lambda'}{\kappa_0}\right)\right],$$ and $$\kappa = \sum_{i=1}^m \|h_i\| + \beta\sum_{s=1}^k \|\Pi_s\| = \sum_{i=1}^m \|h_i\| + \beta k = \kappa_0 + \beta k.$$
        \label{thm:AGSP}
	\end{theorem}
	\begin{proof}
		Recall that $$K = \left(\prod_s (I - \Pi_s) \right) \left(\prod_{i=1}^m ((1-\varepsilon) I + \varepsilon k_i)\right) \left(\prod_{i=m}^1 ((1-\varepsilon) I + \varepsilon k_i)\right)$$ and define $$\tilde{K} = \left(\prod_s (I - \Pi_s) \right) \left [(1-\varepsilon)^{2m}I + 2\varepsilon(1-\varepsilon)^{2m-1}\sum_{i=1}^m k_i \right].$$ First observe that $$\sum_{i=1}^m k_i = \frac{1}{2\kappa} \left(\sum_{i=1}^m \|h_i\|I - h_i\right) = \frac{1}{2\kappa}(\kappa_0 I - H_0) = \frac{\kappa_0}{\kappa}K_0,$$ where by Lemma \ref{lemma:AGSPexample} $$K_0 = \frac{I - H_0/\kappa_0}{2}, \hspace{1em} \kappa_0 = \sum_{i=1}^m \|h_i\|$$ is a $(\Delta_0, \Gamma_0, 0)$-AGSP for the ground space projector $\Pi_0$ of $H_0$ with $$\sqrt{\Delta_0} = \frac{1 - \lambda_1/\kappa_0}{2}, \hspace{1em} \sqrt{\Gamma_0} = \frac{1- \lambda_0/\kappa_0}{2}$$ and $\lambda_0,\lambda_1$ the minimum and next-lowest eigenvalues of $H_0$ respectively.

		If $\Pi$ is the projector onto the ground space of $H$, it immediately follows that $[\Pi_s, \Pi] = 0$ for all $s$ and $[\Pi,k_i] = 0$ for all $i$. Thus $[\tilde{K},\Pi] = 0$.

		We now verify that $\tilde{K}\Pi \geq \sqrt{\Gamma} \Pi$ for some $\Gamma$ or equivalently that $\|\tilde{K}\Pi\| \geq \sqrt{\Gamma}$. Since $\Pi H_0 = \lambda \Pi$, where $\lambda$ is an eigenvalue of $H_0$ such that $\lambda \geq \lambda_0$, $\Pi K_0 =(1/2 - \lambda/2\kappa_0)\Pi$ and
		\begin{align*}
			\tilde{K}\Pi = \prod_s (I - \Pi_s) \left[(1-\varepsilon)^{2m} + 2\varepsilon(1-\varepsilon)^{2m-1} \frac{\kappa_0}{2\kappa} \left(1 -\frac{\lambda}{\kappa_0} \right) \right]\Pi.
		\end{align*}
		Since the ground space of $H$ lies inside the codespace, $$\left(\prod_{s}(I - \Pi_s)\right) \Pi = \Pi.$$ Then
		\begin{align}
			\|\tilde{K}\Pi\| = (1-\varepsilon)^{2m-1}\left[(1-\varepsilon) + \varepsilon \frac{\kappa_0}{\kappa} \left(1 -\frac{\lambda}{\kappa_0} \right) \right] \coloneqq \sqrt{\Gamma}. \label{eq:gammaAGSP}
		\end{align}
		Next, we have that $$(I - \Pi)K_0 = \frac{(I-\Pi)-(I - \Pi)H_0/\kappa_0}{2} \leq \left(\frac{1 - \lambda'/\kappa_0}{2}\right)(I - \Pi),$$ where $\lambda'$ is the next largest eigenvalue of $H_0$ after $\lambda$. We then have
		\begin{align*}
			\|(1-\Pi)\tilde{K}(1-\Pi)\| &\leq  \norm{\prod_s (I - \Pi_s)} \times \\
			&\left((1-\varepsilon)^{2m}\norm{(1-\Pi)(1-\Pi)} +2\varepsilon(1-\varepsilon)^{2m-1}\frac{\kappa_0}{\kappa} \norm{(1-\Pi)K_0(1-\Pi)}\right) \notag \\
			&\leq (1-\varepsilon)^{2m-1} \left[(1-\varepsilon) + \varepsilon \frac{\kappa_0}{\kappa} \left(1-\frac{\lambda'}{\kappa_0}\right)\right] \coloneqq \sqrt{\Delta} \label{eq:deltaAGSP}
		\end{align*}
		It follows that $\tilde{K}$ is a $(\Delta,\Gamma,0)$-AGSP with $\Delta, \Gamma$ as above.

		Now note that $$K = \left(\prod_s (I - \Pi_s) \right) \left((1-\varepsilon)^{2m}I + 2\varepsilon(1-\varepsilon)^{2m-1}\frac{\kappa_0}{\kappa}K_0 + O(\varepsilon^2)\right),$$ so $\|K - \tilde{K}\| = O(\varepsilon^2)$. It then follows from Lemma \ref{lemma:AGSPpert} above that $K$ is an $(\Delta+O(\varepsilon^2),\Gamma-O(\varepsilon^2),O(\varepsilon^2))$-AGSP.
	\end{proof}

	The preceding theorem also establishes that our choice of $K$ is in particular a good-AGSP.

	AGSPs can be thought of as Kraus operators of a completely positive (CP) map on bounded operators such that repeated iterations of this map increase the accuracy of the projection onto the ground space of the Hamiltonian. However, such a map is not necessarily trace preserving and only describes a particular set of measurement outcomes. We must therefore extend this map to a quantum instrument, i.e. a collection of CP maps that sum to a CP and trace-preserving (TP) map and that describes all possible measurement outcomes involved.

	In our scenario, we specifically wish to
	\begin{enumerate}
		\item Measure all syndrome space projectors $$P_x = \frac{\prod_{i=1}^k (I + (-1)^{x_i} g_i)}{2^k}, \hspace{1em} x = (x_1, \ldots, x_k) \in \mathbb{Z}_2^k$$ in \eqref{eq:syndromeprojs} on an input state $\rho \in \mathcal{B}(\mathcal{H})$.
		\item If measurements of $P_x$ for $x \neq 0$ give non-zero outcomes, apply rotations $U_x$ to rotate the components of $\rho$ not in the codespace back into it. There always exists a (not necessarily unique) unitary $U_x$ in the Pauli group such that \begin{equation}
			U_x P_x U_x^{\dag} = P_0, \label{eq:Ux-relation}
		\end{equation}so the $U_x$ are chosen in this manner.
		\item Weak measure all non-stabiliser terms $h_i$ in the Hamiltonian (i.e. apply the AGSP in \eqref{eq:AGSP}) and resample the initial state to a maximally mixed state on the codespace if any fail.
	\end{enumerate}
	The quantum instruments that accomplish the above steps are as follows:
	\begin{align}
		\mathcal{E}_{0}(\rho) &= K P_0 \rho P_0 K + \sum_{x \neq 0} (K U_x P_x \rho P_x U_x^{\dag} K) \label{eq:E0instrument}\\
		\mathcal{E}_{1}(\rho) &= \left(1 - \sum_x \tr(K U_x P_x \rho P_x U_x^{\dag} K)\right) \frac{P_0}{2^{n-k}} \label{eq:E1instrument}
	\end{align}

	where the sum in the second term of $\mathcal{E}_0$ is over all syndrome outcomes $x$ that are not the all 0 bit string and where we've defined $U_0 = I$ for the sum over all $x$ in $\mathcal{E}_1$. $\mathcal{E}_0$ represents the possible outcomes when the stabiliser measurements either fail or succeed but the weak measurements succeed. $\mathcal{E}_1$ represents those outcomes where the stabiliser measurements fail or succeed but the weak measurements fail, upon which we resample to the maximally mixed state on the codespace. It is clear that $\mathcal{E}(\rho) = \mathcal{E}_0(\rho) + \mathcal{E}_1(\rho)$ as defined is trace preserving, i.e. $\tr(\mathcal{E}(\rho)) = \tr \rho$.

	\begin{remark}
		{\normalfont We can write $\mathcal{E}_0 = \mathcal{K}\circ \mathcal{R}$ where $$\mathcal{R}(\rho) = P_0 \rho P_0 + \sum_{x \neq 0} U_x P_x \rho P_x U_x^{\dag}$$ and $$\mathcal{K}(\rho) = K P_0 \rho P_0 K.$$ The latter definition for the completely positive map $\mathcal{K}$, which is based on \eqref{eq:AGSP}, is technically not correct as this does not represent all the measurement outcomes. The actual definition should be $$\mathcal{K}(\rho) = K P_0 \rho P_0 K + \sum_{x \neq 0} P_x \rho P_x.$$ However, note that $$P_0 \mathcal{R}(\rho) P_0 = P_0 P_0 \rho P_0 P_0 + \sum_{x \neq 0} P_0 U_x P_x \rho P_x U_x^{\dag} P_0 = P_0 \rho P_0 + \sum_{x \neq 0} U_x P_x \rho P_x U_x^{\dag} = \mathcal{R}(\rho),$$ where the second to last equality follows from the relation that $U_x P_x = P_0 U_x$ in \ref{eq:Ux-relation}. Thus $$\mathcal{K}(\mathcal{R}(\rho)) = K P_0 \mathcal{R}(\rho) P_0 K + \sum_{x \neq 0} P_x \mathcal{R}(\rho) P_x = K \mathcal{R}(\rho)K = K P_0 \mathcal{R}(\rho) P_0 K.$$ Therefore we can effectively treat $\mathcal{K}$ as defined by $\mathcal{K}(\rho) = KP_0 \rho P_0 K$ provided it follows after the application of the recovery map $\mathcal{R}$. This will always be the case in what follows unless otherwise specified. $\blacksquare$}
	\end{remark}

	\begin{proposition}
		Let $\rho \in \mathcal{B}(\mathcal{H})$ be a density matrix on a Hilbert space $\mathcal{H}$ and $\mathcal{E}(\rho) = \mathcal{E}_0(\rho) + \mathcal{E}_1(\rho)$ the CPTP map with instruments $\mathcal{E}_0$ and $\mathcal{E}_1$ as defined in equations \eqref{eq:E0instrument}-\eqref{eq:E1instrument}. Then $\mathrm{tr}(P_0 \mathcal{E}^t(\rho)P_0) = 1$ for all $t  \geq 1$.
	\end{proposition}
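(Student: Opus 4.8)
The plan is to prove the stronger statement that $\mathcal{E}(\sigma)$ is supported on the codespace for \emph{every} operator $\sigma\in\mathcal{B}(\mathcal{H})$, i.e.\ $\mathcal{E}(\sigma) = P_0\,\mathcal{E}(\sigma)\,P_0$, and then combine this with the trace-preservation of $\mathcal{E}$ already observed after \eqref{eq:E1instrument}. The structural input is that $K$ itself is supported on the codespace, $K = P_0 K P_0$ (so in particular $P_0 K = K = K P_0$, the latter by Hermiticity of $K$). This holds because the leftmost factor $\tfrac{1}{2^k}\prod_s(I-\Pi_s)$ of $K$ in \eqref{eq:AGSP} equals the codespace projector $P_0$ — it is a product of commuting operators built from the stabiliser generators, cf.\ Lemma~\ref{lemma:codespaceproj} — while every remaining factor $(1-\varepsilon)I+\varepsilon k_i$ commutes with each $\Pi_s$ (since $[h_i,\Pi_s]=0$ forces $[k_i,\Pi_s]=0$) and hence with $P_0$.

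Given this, I would check the two instruments separately. For $\mathcal{E}_1$ it is immediate, since $\mathcal{E}_1(\sigma)$ is a scalar multiple of $P_0$. For $\mathcal{E}_0$, rewrite each Kraus operator using \eqref{eq:Ux-relation} in the form $U_x P_x = P_0 U_x$ (with $U_0 = I$), so that $K U_x P_x = K P_0 U_x = P_0 K P_0 U_x$; thus every term $K U_x P_x \sigma P_x U_x^{\dagger} K$ in \eqref{eq:E0instrument} is of the form $P_0(\cdots)P_0$, and $\mathcal{E}_0(\sigma) = P_0\,\mathcal{E}_0(\sigma)\,P_0$. Adding the two instruments gives $\mathcal{E}(\sigma) = P_0\,\mathcal{E}(\sigma)\,P_0$ for all $\sigma$.

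The proposition then follows in one line, with no induction needed beyond this: for any $t\geq 1$ we have $\mathcal{E}^t(\rho) = \mathcal{E}(\sigma)$ with $\sigma \coloneqq \mathcal{E}^{t-1}(\rho)$, hence $\mathcal{E}^t(\rho) = P_0\,\mathcal{E}^t(\rho)\,P_0$, so $\tr\big(P_0\,\mathcal{E}^t(\rho)\,P_0\big) = \tr\big(\mathcal{E}^t(\rho)\big) = \tr\rho = 1$, using that $\mathcal{E}$ is trace-preserving and $\rho$ is a density matrix. The only step that requires any care is the structural claim $\tfrac{1}{2^k}\prod_s(I-\Pi_s) = P_0$ together with the relevant commutation relations; once $K = P_0 K P_0$ is in hand, everything else is bookkeeping.
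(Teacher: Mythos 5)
Your proof is correct, but it takes a slightly different (and in fact stronger) route than the paper's. The paper computes the two traces directly: it shows $\tr(P_0\mathcal{E}_1(\rho)P_0) = 1 - \sum_x \tr(KU_xP_x\rho P_xU_x^\dag K)$ using $\tr P_0 = 2^{n-k}$, and $\tr(P_0\mathcal{E}_0(\rho)P_0) = \sum_x\tr(KU_xP_x\rho P_xU_x^\dag K)$ using $[K,P_0]=0$, $U_xP_xU_x^\dag = P_0$ and $P_x^2=P_x$; the two sum to $1$, and a short induction finishes the argument. You instead establish the operator identity $\mathcal{E}(\sigma) = P_0\,\mathcal{E}(\sigma)\,P_0$ for every input, from which the proposition follows by trace preservation alone. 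The mechanism is the same---everything reduces to $K = P_0K = KP_0$ and $U_xP_x = P_0U_x$---but your intermediate claim is strictly stronger (the one-step output is exactly supported on the codespace, not merely of unit trace when restricted to it), and it makes the induction over $t$ trivial. One small caveat: with the appendix's definitions $\Pi_s = (I-g_s)/2$, one has $\prod_s(I-\Pi_s) = \prod_s\tfrac{I+g_s}{2} = P_0$ already, so the prefactor $\tfrac{1}{2^k}\prod_s(I-\Pi_s)$ appearing in \eqref{eq:AGSP} is $2^{-k}P_0$ rather than $P_0$ as you assert (this normalization quirk is arguably present in the paper itself). It does not affect your argument, which only uses the support property $K = P_0KP_0$ and not the exact normalization.
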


	\begin{proof}
		Note that since $\tr(P_0) = 2^{n-k}$, $$\tr(P_0 \mathcal{E}_1(\rho)P_0) = 1 - \sum_x \tr(K U_x P_x \rho P_x U_x^{\dag} K).$$ From the relations $[K, P_0] = 0$, $U_x P_x U_x^{\dag} = P_0$, and the idempotency $P_x^2 = P_x$ of the projectors $P_x$, it follows that
		\begin{align*}
			\tr(P_{0} \mathcal{E}_0(\rho)P_0) &= \tr(P_0 K P_0 \rho P_0 K P_0) + \sum_{x \neq 0} \tr(P_{0} K U_x P_x \rho P_x U_x^{\dag} K P_0) \\
			&= \tr(K P_0 P_0 \rho P_0 P_0 K) + \sum_{x \neq 0} \tr(K U_x P_x P_x \rho P_x P_x U_x^{\dag} K) \\
			&= \tr(K P_0 \rho P_0 K) + \sum_{x \neq 0} \tr(K U_x P_x \rho P_x U_x^{\dag} K) \\
			&= \sum_x \tr(K U_x P_x \rho P_x U_x^{\dag} K).
		\end{align*}

		Thus $\tr(P_0(\mathcal{E}_0 + \mathcal{E}_1)P_0) = 1$. Suppose by induction that $\tr(P_0 \mathcal{E}^{t-1}(\rho)P_0) = 1$ for any $\rho$. Then $\tr(P_0 \mathcal{E}^{t-1}(\mathcal{E}(\rho))P_0) = 1$, so this implies that $ \tr(P_{0} \mathcal{E}^t(\rho)P_0) = 1$ for all $t$.
	\end{proof}

	\begin{proposition}
    	\label{prop:GSoverlap}
		Let $K$ be the AGSP defined in equation \eqref{eq:AGSP}, $D$ the dimension of the Hilbert space $\mathcal{H}$, $\Pi$ the ground space projector of $H$ in \eqref{eq:hamiltonianFull}, and $\mathrm{tr}(\Pi) = N$ the ground space degeneracy. Consider the stopped process where we iterate the instruments $\{\mathcal{E}_0, \mathcal{E}_1\}$ in \eqref{eq:E0instrument}-\eqref{eq:E1instrument} above starting from the maximally mixed state on the codespace $\rho_0 = P_0/2^{n-k}$ until we obtain a sequence of $n$ zeroes. The state $\rho_n$ at the stopping time then satisfies $$\tr(\Pi \rho_n) \geq 1 - \frac{1-\tr(\Pi \cdot \rho_0)}{\tr (\Pi \cdot \rho_0)} \left(\frac{\Delta}{\Gamma} \right)^n \geq 1 - \frac{D}{N}\left(\frac{\Delta}{\Gamma} \right)^n.$$
	\end{proposition}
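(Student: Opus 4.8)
The plan is to first obtain a closed form for the state $\rho_n$ at the stopping time, then run the standard approximate-ground-space-projector iteration on it. For the closed form, note that $\rho_0 = P_0/2^{n-k}$ is supported on the codespace and that $\mathcal{E}_0$ preserves this property: on a codespace-supported $\rho$ the syndrome-$x\neq 0$ branches of $\mathcal{R}$ vanish because $P_x P_0 = 0$, so the reduction $\mathcal{E}_0 = \mathcal{K}\circ\mathcal{R}$ from the Remark gives $\mathcal{E}_0(\rho) = K\rho K$, and $\mathrm{range}(K)\subseteq\mathrm{range}(P_0)$ thanks to the factor $\prod_s(I-\Pi_s)$ in \eqref{eq:AGSP}. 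A resampling outcome $\mathcal{E}_1$ returns the process to $\rho_0$, which is itself codespace-supported, so the normalised state produced by the first run of $n$ consecutive $\mathcal{E}_0$-outcomes — i.e.\ the state at the stopping time — is
\[
\rho_n \;=\; \frac{K^n \rho_0 K^n}{\tr\!\big(K^{2n}\rho_0\big)}.
\]

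Next, let $\Pi$ be the projector associated to the AGSP $K$ via Theorem~\ref{thm:AGSP} and Definition~\ref{def:AGSP}, so that $[K,\Pi]=0$, $\Pi K\Pi\geq\sqrt{\Gamma}\,\Pi$, $\|(I-\Pi)K(I-\Pi)\|\leq\sqrt{\Delta}$ and $\|\Pi-\Pi_0\|\leq\delta=O(\varepsilon^2)$. I would decompose $\rho_0$ into its four blocks with respect to $\{\Pi,I-\Pi\}$ and conjugate by $K^n$. Since $K$, and hence $K^n$, commutes with $\Pi$, the two off-diagonal blocks stay off-diagonal and so contribute nothing to $\tr(\Pi\,\cdot\,)$ nor to $\tr(\,\cdot\,)$; the same commutation gives $(\Pi K\Pi)^{2n} = \Pi K^{2n}\Pi$ and $((I-\Pi)K(I-\Pi))^{2n} = (I-\Pi)K^{2n}(I-\Pi)$. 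Writing $p \coloneqq \tr(\Pi\rho_0)$, $a \coloneqq \tr((\Pi K\Pi)^{2n}\rho_0)$ and $b \coloneqq \tr(((I-\Pi)K(I-\Pi))^{2n}\rho_0)$, this bookkeeping yields $\tr(\Pi K^n\rho_0 K^n) = a$ and $\tr(K^{2n}\rho_0) = a+b$ with $a,b\geq 0$. The AGSP bounds give $(\Pi K\Pi)^{2n}\succeq\Gamma^n\Pi$ and $((I-\Pi)K(I-\Pi))^{2n}\preceq\Delta^n(I-\Pi)$, hence $a\geq\Gamma^n p$ and $b\leq\Delta^n(1-p)$, so
\[
\tr(\Pi\rho_n) = \frac{a}{a+b} = 1 - \frac{b}{a+b} \;\geq\; 1-\frac{b}{a} \;\geq\; 1 - \frac{1-p}{p}\left(\frac{\Delta}{\Gamma}\right)^{\!n},
\]
and $\tr(\Pi_0\rho_n)\geq\tr(\Pi\rho_n)-\|\Pi-\Pi_0\|$ recovers the first stated inequality (the $O(\varepsilon^2)$ correction being harmless — it can be folded into $\varepsilon$ in the algorithm's overall error budget, and vanishes outright when one may take $\Pi=\Pi_0$).

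For the second inequality, observe that $\Pi K\Pi\geq\sqrt{\Gamma}\,\Pi$ means $K$ acts invertibly on $\mathrm{range}(\Pi)$, so $\mathrm{range}(\Pi) = K\,\mathrm{range}(\Pi)\subseteq\mathrm{range}(K)\subseteq\mathrm{range}(P_0)$; and $\|\Pi-\Pi_0\|<1$ forces $\tr\Pi=\tr\Pi_0=N$. Hence $p = \tr(\Pi\rho_0) = \tr(\Pi P_0)/2^{n-k} = N/2^{n-k}$, and since $\tr(P_0)=2^{n-k}\leq D$ we get $\tfrac{1-p}{p} = \tfrac{\tr(P_0)}{N}-1\leq\tfrac{D}{N}-1<\tfrac{D}{N}$; substituting into the previous display gives $\tr(\Pi_0\rho_n)\geq 1-\tfrac{D}{N}(\Delta/\Gamma)^n$.

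I do not expect a genuine obstacle: once $\rho_n$ is in closed form this is essentially the AGSP-iteration argument of \cite{cubitt2023dissipative}. The two points needing care are (i) the non-commutative bookkeeping — the vanishing of the cross-terms and the reduction of $K^{2n}$ to block powers both rely squarely on $[K,\Pi]=0$ — and (ii) the fact that $\mathrm{range}(\Pi)$ lies inside the codespace, which is what pins $p$ to the exact value $N/2^{n-k}$ and hence produces the clean $D/N$ prefactor. Unpacking the stopped-process definition (resampling on $\mathcal{E}_1$, and the $\mathcal{E}_0=\mathcal{K}\circ\mathcal{R}$ reduction on codespace-supported states) to arrive at the closed form is the only other place where one must be attentive.
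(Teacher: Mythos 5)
Your proposal is correct and follows essentially the same route as the paper's proof: reduce the stopped process to $\rho_n = K^n\rho_0 K^n/\tr(K^{2n}\rho_0)$ using $[K,P_0]=0$ and $P_0P_x=0$, then apply the AGSP bounds $\tr(\Pi K^n\rho_0K^n)\geq\Gamma^n\tr(\Pi\rho_0)$ and $\tr((I-\Pi)K^n\rho_0K^n)\leq\Delta^n(1-\tr(\Pi\rho_0))$. The only difference is that you explicitly carry out the final $a/(a+b)$ manipulation and the $\Pi$ versus $\Pi_0$ and $D/N$ bookkeeping, which the paper instead delegates to Theorem 18 of \cite{cubitt2023dissipative}.
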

	\begin{proof}
		Recall that from the definition of an AGSP, $[K,P_0] = 0$. This implies that for any state $\rho$,
		\begin{align}
			\mathcal{E}_0(\mathcal{E}_0(\rho)) &= \sum_{x'} \sum_x K P_0 U_{x'} P_{x'} (K P_0 U_x P_x \rho P_x U_x^{\dag} P_0 K) P_{x'} U_{x'}^{\dag} P_0 K \\
			&= K P_0 \left(\sum_x K U_x P_x \rho P_x U_x^{\dag}K\right) P_0 K = K P_0 \mathcal{E}_0(\rho) P_0 K
		\end{align}
		where in the second equality, we have used that $[K,P_0]=0$ and $P_0 P_x = 0$ implies that the outer sum is zero except for $x' = 0$. For our choice of $\rho_0$ in particular, the same relations imply $\mathcal{E}_0(\rho_0) = K \rho_0 K$. Thus from the definition of an AGSP in \ref{def:AGSP}, we have $$\tr(\Pi \mathcal{E}_0^n(\rho_0)) = \tr(\Pi (KP_0)^n \rho_0 (K P_0)^n) = \tr((K \Pi)^n \rho_0 (\Pi K)^n) \geq \Gamma^n \tr(\Pi \cdot \rho_0).$$ Similarly, $$\tr((1-\Pi)\mathcal{E}_0^n(\rho_0)) = \tr((I-\Pi)K^n \rho_0 K^n) \leq \Delta^n (1 - \tr(\Pi \cdot \rho_0)).$$ The proof then proceeds in the exact same way as Theorem 18 of \cite{cubitt2023dissipative}.
	\end{proof}

	\subsection{Fault Resilience}
	\label{subsec:faultresilience}

	We now prove that we obtain additional fault resilience from the DQE algorithm for stabiliser encoded Hamiltonians under circuit level depolarizing noise. Unless otherwise specified, we will work in a basis of our Hilbert space given by the syndrome outcomes, with the all 0 syndrome corresponding to the codespace projector $P_0$ coming first.

	The transfer matrices $E_0$ and $E_1$ corresponding to the channels $\mathcal{E}_0$ and $\mathcal{E}_1$ (see Appendix \ref{appendix:notation}) are
	\begin{align}
		E_0 &= (K \otimes \bar{K}) \left(P_0 \otimes \bar{P}_0 + \sum_{x \neq 0} U_x P_x \otimes \bar{U}_x \bar{P}_x \right) \\
		E_1 &= \frac{1-\sum_x \tr(K U_x P_x \rho P_x U_x^{\dag} K)}{2^{k}} \sum_{i,j} \ket{ii}\bra{jj},
	\end{align}
	where the sum over $i,j$ in $E_1$ runs up to the dimension of the codespace.

	Due to equation \eqref{eq:Ux-relation}, we can equivalently write the transfer matrix $E_0$ as $$E_0 = (K \otimes \bar{K}) \left(P_0 \otimes \bar{P}_0 + \sum_{x \neq 0} P_0 U_x \otimes \bar{P}_0 \bar{U}_x \right) = \text{DQE} \cdot R$$ where
	\begin{equation}
		\text{DQE} = K \otimes \bar{K}  \label{eq:DQEtransfer}
	\end{equation}
	and
	\begin{equation}
		R = P_0 \otimes \bar{P}_0 + \sum_{x \neq 0} P_0 U_x \otimes \bar{P}_0 \bar{U}_x. \label{eq:stabrecov}
	\end{equation}
	For our noise model, we will work with iid depolarizing noise with error probability $\delta$ which has the following form for $n$ qubits:
	\begin{equation}
		\mathcal{N}^{\otimes n}_{\delta}(\rho) = (1-\delta)^n \rho + \sum_{i=1}^n (1-\delta)^{n-i} \left(\frac{\delta}{3}\right)^i \left(\sum_{s: \text{weight s} = i} P(s) \rho P(s) \right).  \label{eq:iidDepolar}
	\end{equation}

	Here $P(s)$ denotes a Pauli string of weight $s$ and $\delta$ the noise probability.

	\begin{lemma}
		Let $S$ be a stabiliser group and let $L$ be a logical operation, i.e. an element of $Z(S) - S$ where $Z(S)$ is the centralizer of $S$. Let $U_x$ be a unitary in the Pauli group such that equation \eqref{eq:Ux-relation} holds. Then $L U_x$ also satisfies \eqref{eq:Ux-relation}. \label{lemma:unitaryambig}
	\end{lemma}

	\begin{proof}
		$L U_x P_x U_x^{\dag} L^{\dag} = L P_0 L^{\dag} = P_0$ since $[L, P_0] = 0$ and $L$ is unitary.
	\end{proof}

	\begin{remark}
		\normalfont
		Due to this ambiguity in the logical operations $U_x$, our recovery map may correct even low weight errors on an initial state $\rho$ up to some logical error $L$. However, since we have freedom in arbitrarily choosing these correction unitaries $U_x$, we simply pick them to be ones that correct errors up to weight $(d-1)/2$ without applying an extra logical operation for an error correcting code of distance $d$. We make this choice throughout the rest of the paper. $\blacksquare$ \label{remark:Uxchoice}
	\end{remark}
	We will need the following result for later.

	\begin{theorem}
		Let $N_{\delta}^{\otimes n}$ be the transfer matrix corresponding to the iid depolarizing noise channel $\mathcal{N}_{\delta}^{\otimes n}$ in \eqref{eq:iidDepolar}. Assume $d \geq 2n\delta +1$, where $d$ is the distance of the stabiliser code for the stabiliser encoded Hamiltonian in \eqref{eq:hamiltonianFull}, and suppose the correction unitaries $U_x$ satisfy the assumption in Remark \ref{remark:Uxchoice}. Then with notation as above, $\|\text{DQE}\cdot R \cdot N_{\delta}^{\otimes n} - \text{DQE}\|$ in the spectral norm is $c^{((d+1)/2)^2}$ where $c < 1$ and depends on $d$, $n$, and $\delta$. \label{thm:externalnoisedqe}
	\end{theorem}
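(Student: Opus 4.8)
The plan is to expand the noise channel in the Pauli basis and separate the errors the code can correct from those it cannot. Writing $\mathcal{N}_\delta^{\otimes n}(\rho)=\sum_s p_s\,P(s)\rho P(s)$ with $p_s=(1-\delta)^{n-|s|}(\delta/3)^{|s|}$ and the sum over all $n$-qubit Pauli strings, the transfer matrix is $N_\delta^{\otimes n}=\sum_s p_s\,\big(P(s)\otimes\bar P(s)\big)$, so
\begin{equation*}
  \text{DQE}\cdot R\cdot N_\delta^{\otimes n}=\sum_s p_s\;\text{DQE}\cdot R\cdot\big(P(s)\otimes\bar P(s)\big).
\end{equation*}
I would split this sum into the ``correctable'' terms, $|s|\le (d-1)/2$, and the ``uncorrectable'' ones, $|s|\ge (d+1)/2$, with the goal of showing that the correctable terms reproduce $\text{DQE}$ exactly, while the uncorrectable terms are suppressed by the (small) total probability of a large error.

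For a correctable $P(s)$, let $x(s)$ be its syndrome, so that $P_x P(s)=P(s)P_{x\oplus x(s)}$ and $P(s)$ maps the codespace into the syndrome-$x(s)$ sector. In $R=P_0\otimes\bar P_0+\sum_{x\ne0}P_0U_x\otimes\bar P_0\bar U_x$ only the summand indexed by $x=x(s)$ then acts nontrivially on codespace-supported operators, and there, using $U_xP_x=P_0U_x$ from \eqref{eq:Ux-relation}, the recovery acts as $\big(U_{x(s)}P(s)\big)\otimes\overline{\big(U_{x(s)}P(s)\big)}$, returning the state to the codespace. Now $U_{x(s)}P(s)$ has trivial syndrome, hence lies in $Z(S)$; since $|s|\le(d-1)/2$ and the $U_x$ are chosen as in Remark~\ref{remark:Uxchoice}, it contributes no residual logical (cf.\ Lemma~\ref{lemma:unitaryambig}), so in fact $U_{x(s)}P(s)\in S$. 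Every $g\in S$ commutes with all the $h_i$, hence with all $k_i$ appearing in \eqref{eq:AGSP}, and obeys $P_0g=P_0$; therefore $Kg=K$ and $\bar K\bar g=\bar K$, so $\text{DQE}\cdot R\cdot\big(P(s)\otimes\bar P(s)\big)=\text{DQE}$ for every correctable $s$. Summing these contributes $\big(\sum_{|s|\le(d-1)/2}p_s\big)\text{DQE}$.

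Consequently
\begin{equation*}
  \text{DQE}\cdot R\cdot N_\delta^{\otimes n}-\text{DQE}
  =\bigl({\textstyle\sum_{|s|\le(d-1)/2}p_s}-1\bigr)\text{DQE}
  +\sum_{|s|\ge(d+1)/2}p_s\;\text{DQE}\cdot R\cdot\big(P(s)\otimes\bar P(s)\big).
\end{equation*}
Bounding each of the two surviving pieces by the spectral norms $\|\text{DQE}\|=\|K\|^2\le1$, by $\|R\|$ (a constant depending only on the code; $R$ is the transfer matrix of the CPTP recovery channel $\mathcal{R}$, whose Kraus operators $\{P_0U_x\}$ satisfy $\sum_x(P_0U_x)^\dagger(P_0U_x)=\sum_xP_x=I$, and a short pinching estimate gives $\|R\|\le 2^{k/2}$), and $\|P(s)\otimes\bar P(s)\|=1$, the spectral norm of the difference is at most a code-dependent constant times $\sum_{|s|\ge(d+1)/2}p_s$. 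Since there are $\binom{n}{j}3^j$ Pauli strings of weight $j$, each occurring with probability $(1-\delta)^{n-j}(\delta/3)^j$, this sum equals $\Pr\!\big[\mathrm{Bin}(n,\delta)\ge(d+1)/2\big]$.

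Finally I would estimate this binomial upper tail. The hypothesis $d\ge 2n\delta+1$ says precisely that the cutoff $(d+1)/2$ exceeds the mean $n\delta$ by at least one, so a Chernoff bound makes $\Pr[\mathrm{Bin}(n,\delta)\ge(d+1)/2]<1$, with dominant term $\binom{n}{(d+1)/2}\delta^{(d+1)/2}(1-\delta)^{n-(d+1)/2}$ and a convergent geometric tail; rewriting the resulting bound in the form $c^{((d+1)/2)^2}$ then fixes a constant $c\in(0,1)$ depending only on $d$, $n$ and $\delta$, which is the claim. The step I expect to be the real work is the second one: carrying out the stabiliser-code bookkeeping on the doubled Hilbert space — tracking how $P(s)$ and $\bar P(s)$ shift the syndrome on the two copies, identifying which recovery index actually survives, using the distance assumption to rule out residual logicals, and verifying $Kg=K$ for $g\in S$ — and making sure the identity $\text{DQE}\cdot R\cdot(P(s)\otimes\bar P(s))=\text{DQE}$ is invoked in the sector where it holds. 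The tail estimate is then routine.
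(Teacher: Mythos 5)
Your overall strategy is the same as the paper's: expand $\mathcal{N}_\delta^{\otimes n}$ in Pauli errors, split at weight $(d\pm 1)/2$, use the choice of $U_x$ from Remark~\ref{remark:Uxchoice} to show the correctable terms act trivially, and control the uncorrectable remainder by the binomial tail $\Pr[\mathrm{Bin}(n,\delta)\ge (d+1)/2]$ via a Hoeffding/Chernoff bound under $d\ge 2n\delta+1$. Your stabiliser bookkeeping for the correctable terms ($P_xP(s)=P(s)P_{x\oplus x(s)}$, $U_{x(s)}P(s)\in S$ up to phase, $Kg=K$ for $g\in S$) is sound and, if anything, more explicit than the paper's about the fact that these identities hold only on codespace-supported inputs (a restriction the paper also makes implicitly; since $K=KP_0$, only such inputs matter in the application).

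The genuine gap is in your bound on the uncorrectable piece. You estimate $\|\text{DQE}\cdot R\cdot(P(s)\otimes\bar P(s))\|\le\|\text{DQE}\|\,\|R\|\,\|P(s)\otimes\bar P(s)\|$ with $\|R\|\le 2^{k/2}$ and call this ``a constant depending only on the code''. It is not a constant: $k$ grows with $n$ (and in fact $\|R\|=2^{k/2}$ exactly, attained on an equal mixture of the $2^k$ syndrome sectors), so your final bound is $2^{k/2}\cdot c^{((d+1)/2)^2}$, which cannot in general be recast in the claimed form $c^{((d+1)/2)^2}$ with $c<1$ --- for geometrically local codes one typically has $k=\Theta(n)$ while $c^{((d+1)/2)^2}=\exp(-\Theta(d^2/n))$, so the prefactor dominates. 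The fix is precisely the observation you already deploy for the correctable terms: for a codespace-supported input and a fixed uncorrectable Pauli $P(s)$, only the single recovery term $x=x(s)$ survives, so $\mathcal{R}(P(s)\rho P(s))=L_s\rho L_s^\dagger$ with $L_s=U_{x(s)}P(s)$ a (logical) unitary, whence $\|K L_s\rho L_s^\dagger K\|_2\le\|\rho\|_2$ and the uncorrectable contribution is bounded by $\Pr[\mathrm{Bin}(n,\delta)\ge(d+1)/2]\,\|\rho\|_2$ with no $2^{k/2}$; this is how the paper proceeds. The tail estimate you leave as a sketch is routine as you say: Hoeffding gives $\exp\!\big(-\tfrac{2}{n}((d+1)/2-n\delta)^2\big)=c^{((d+1)/2)^2}$ with $c=\exp\!\big(-\tfrac{2}{n}(1-\tfrac{2n\delta}{d+1})^2\big)$, matching the paper's constant.
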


	\begin{proof}%\phantom{\qedhere}
		We will first work in the channel representation of these operators and then relate the relevant bounds to those for the corresponding transfer matrices.

		First note that since the DQE channel is trace non-increasing, $\|\text{DQE}\| \leq 1$. From \ref{eq:transferspecnorm}, the sub-multiplicativity of the spectral norm implies that it suffices to bound $$\|R \cdot N_{\delta}^{\otimes n} - I\| = \max_{\rho \neq 0} \frac{\|\mathcal{R}(\mathcal{N}(\rho)) - \rho\|_2}{\|\rho\|_2}.$$ For any $\rho$, we have
		\begin{align}
			\mathcal{R}(\mathcal{N}^{\otimes n}_{\delta}(\rho)) &= \sum_x U_x P_x \sum_{i=0}^n (1-\delta)^{n-i} \left(\frac{\delta}{3}\right)^i \left(\sum_{s: \text{weight s} = i} P(s) \rho P(s) \right) P_x U_x^{\dag} \\
			&= \sum_{i=0}^n \sum_{{s: \text{weight s} = i}}\sum_x (1-\delta)^{n-i} \left(\frac{\delta}{3}\right)^i U_x P_x P(s) \rho P(s) P_x U_x^{\dag}.
		\end{align}
		By our assumption in Remark \ref{remark:Uxchoice}, the $U_x P_x$ operations measure the syndrome space projectors $P_x$ and perform rotations $U_x$ to rotate the components of the state $P(s) \rho P(s)$ outside of the codespace back into the codespace without an additional logical operation if the errors are weight at most $(d-1)/2$. While any stabiliser code is degenerate against iid depolarizing noise, each error still corresponds to one syndrome (different errors may still have the same syndrome, but this does not affect the subsequent calculations). There are $3^i \binom{n}{i}$ possible Pauli errors of a given weight $i$, so combining all of these facts gives
		\begin{align}
			&\|\mathcal{R}(\mathcal{N}^{\otimes n}_{\delta}(\rho))-\rho\|_2 \leq \nonumber \\
			&\left\|\left(\sum_{i=0}^{(d-1)/2} (1-\delta)^{n-i} \left(\frac{\delta}{3}\right)^i \binom{n}{i} 3^i -1 \right)\rho +\sum_{i=(\frac{d+1}{2})}^n \sum_{{s: \text{weight s} = i}} (1-\delta)^{n-i} \left(\frac{\delta}{3}\right)^i L_{i,s} \rho L_{i,s}^{\dag} \right\|_2 \nonumber \\
			&\leq \left(1- \sum_{i=0}^{(d-1)/2} \delta^i(1-\delta)^{n-i} \binom{n}{i} + \sum_{i=(d+1)/2}^n \delta^i (1-\delta)^{n-i} \binom{n}{i} \right)\|\rho\|_2, \nonumber
		\end{align}

		where the $L_{i,s}$ are logical operators, some of which may be stabilizers, and we've used that $\|\rho\|_2 = \|L_{i,s}\rho L_{i,s}^{\dag}\|_2$ since the $L_{i,s}$ are unitary and will not change the Schatten 2-norm of the density matrix.

		In the regime where $(d-1)/2 \geq n\delta$, i.e. $d \geq 2n \delta + 1$, we can bound the sums above by giving a tail bound for a binomial distribution $X$ with $n$ ``trials" and success probability $\delta$. Hoeffding's inequality applied to the binomial distribution gives $$\text{Pr}(X - E(X) \geq t) \leq  \exp(-\frac{2}{n}t^2)$$ where $E(X) = n\delta$ is the average of the binomial distribution. Thus $$\sum_{i=k}^n \delta^n(1-\delta)^{n-i} \binom{n}{i} = \text{Pr}(X - n\delta \geq k - n\delta) \leq \exp\left(-\frac{2}{n}(k - n\delta)^2\right).$$ Letting $k = (d+1)/2$ then shows that $$\text{Pr}\left(X \geq \frac{d+1}{2}\right) = \sum_{i=(d+1)/2}^n \delta^i (1-\delta)^{n-i} \binom{n}{i} \leq \exp(-\frac{2}{n}\left(1 - \frac{2n\delta}{d+1}\right)^2)^{(\frac{d+1}{2})^2}.$$

		Note that $\text{Pr}(X \leq k-1) = 1 - \text{Pr}(X \geq k)$.  Again making the substitution $k = (d+1)/2$ gives $$-\sum_{i=0}^{(d-1)/2} \delta^i (1-\delta)^{n-i} \binom{n}{i} = -1 + \text{Pr}\left(X \geq \frac{d+1}{2}\right).$$ Setting $$c = \exp(-\frac{2}{n}\left(1 - \frac{2n\delta}{d+1}\right)^2),$$ it is immediate that $c < 1$ since $e^{-x} < 1$ for $x > 0$. We then have $$\|\mathcal{R}(\mathcal{N}^{\otimes n}_{\delta}(\rho))-\rho\|_2 \leq c^{((d+1)/2)^2} \|\rho\|_2$$ which implies
		\begin{equation*}
			\|R \cdot N_{\delta}^{\otimes n} - I\| \leq c^{((d+1)/2)^2}. \qedhere
		\end{equation*}
	\end{proof}

	\begin{remark}
		{\normalfont Since the instruments act on an enlarged Hilbert space which contains the codespace as a subspace, we cannot directly apply the fault resilience theorems from \cite{cubitt2023dissipative}. To reduce our present scenario to the one considered there, we consider the following procedure. 
        
        First note that since our stabiliser-encoded Hamiltonian $H$ is assumed to be geometrically-local, the unitary part of the AGSP $K$ can be implemented as a constant depth circuit of depth $D_K$ with measurements deferred to the end. In transfer matrix form, we can therefore denote the unitary part of \eqref{eq:DQEtransfer} as $$\text{DQE} = \hat{U}_{D_K} \cdots \hat{U}_1,$$ where $\hat{U}_j$ denote the transfer matrices corresponding to the unitary gates $U_1, \ldots, U_{D_K}$ in the circuit corresponding to DQE. We then define a faulty implementation of DQE with the operator $R \cdot N_{\delta}^{\otimes n}$ applied after every gate in $\text{DQE}$, i.e. $$\text{DQE}' = R N_{\delta}^{\otimes n} \hat{U}_{D_K} \cdots R N_{\delta}^{\otimes n} \hat{U}_1.$$
        Now consider the process $$R \cdot N_{\delta}^{\otimes n} \cdot (\text{DQE}' \cdot R \cdot N^{\otimes n}_{\delta})^m,$$ where the $\cdot$ denotes matrix multiplication. Note however that since $R = \hat{P}_0 R$, where $\hat{P}_0 = P_0 \otimes \bar{P}_0$,
			\begin{align*}
				&R \cdot N_{\delta}^{\otimes n} \cdot (\text{DQE}' \cdot R \cdot N^{\otimes n}_{\delta})^m \\
				&= \hat{P}_0 \cdot R \cdot N_{\delta}^{\otimes n} (\text{DQE}' \cdot R \cdot N^{\otimes n}_{\delta}) \cdot (\text{DQE}' \cdot R \cdot N^{\otimes n}_{\delta}) \cdots (\text{DQE}' \cdot \hat{P}_0 R \cdot N^{\otimes n}_{\delta})\\
				&= \hat{P}_0 (R \cdot \text{DQE}'') \cdots (R \cdot \text{DQE}'') \hat{P}_0 (R \cdot N_{\delta}^{\otimes n})
			\end{align*}
			where in the last line, we have defined $\text{DQE}'' = N_{\delta}^{\otimes n} \cdot \text{DQE}'$.

			When acting on an input state $\ket{\rho}$, we can absorb the first $R \cdot N_{\delta}^{\otimes n}$ operation into the definition of the state and reinterpret this procedure as acting on some initial state $\ket{\rho'} = R \cdot N_{\delta}^{\otimes n} \ket{\rho}$ which lies in the codespace due to the property that $R = \hat{P}_0 R$. Then this procedure corresponds to iterating $\hat{P}_0(R \cdot \text{DQE}'') \hat{P}_0$ within the codespace. $\blacksquare$ }\label{remark:newalg}
	\end{remark}

	\begin{proposition}
        \label{prop:dqeError}
		Define the unitary part of the \text{DQE} operation in \eqref{eq:DQEtransfer} in transfer matrix form by $$\text{DQE} = \hat{U}_{D_K} \cdots \hat{U}_1$$ as in Remark \ref{remark:newalg}, where $D_K$ is the constant depth of the circuit for the AGSP $K$ and each $\hat{U}_i = U_i \otimes \bar{U}_i$. Let \text{DQE'} be a faulty-implementation of DQE with the operator $R \cdot N_{\delta}^{\otimes n}$ applied after every unitary, i.e. $$\text{DQE'} = R N_{\delta}^{\otimes n} \hat{U}_{D_K} \cdots R N_{\delta}^{\otimes n} \hat{U}_1$$ and let $$\text{DQE}'' = N_{\delta}^{\otimes n} \cdot \text{DQE}'.$$ Then with the same assumptions as in Theorem \ref{thm:externalnoisedqe}, $$\|\hat{P}_0 (R \cdot \text{DQE}'') \hat{P}_0 - \hat{P}_0(\text{DQE})\hat{P}_0\| \leq O(D_K c^{(\frac{d+1}{2})^2}),$$ where $\hat{P}_0 = P_0 \otimes \bar{P}_0$.
	\end{proposition}
	\begin{proof}
		Note that $$\|\hat{P}_0 (R \cdot \text{DQE}'') \hat{P}_0 - \hat{P}_0(\text{DQE})\hat{P}_0\| \leq \|R \cdot \text{DQE}'' - \text{DQE}\| = \|R \cdot N_{\delta}^{\otimes n} \cdot \text{DQE}' - \text{DQE}\|.$$ Since $\|\hat{U}_j\| = 1$, we have by Theorem \ref{thm:externalnoisedqe} that
		\begin{align*}
			\|R \cdot N_{\delta}^{\otimes n} \cdot \text{DQE}' - \text{DQE}\| &= \|R \cdot N_{\delta}^{\otimes n} \cdot \text{DQE}' + R \cdot N_{\delta}^{\otimes n} \cdot \text{DQE} - R \cdot N_{\delta}^{\otimes n} \cdot \text{DQE} - \text{DQE}\| \\
			&\leq \|\text{DQE}' - \text{DQE}\| + \|R \cdot N_{\delta}^{\otimes n} \cdot \text{DQE} - \text{DQE}\| \\
			&\leq \|R N_{\delta}^{\otimes n} \hat{U}_{D_K} \cdots R N_{\delta}^{\otimes n} \hat{U}_1 - \hat{U}_{D_K} \cdots \hat{U}_1 \| + c^{((d+1)/2)^2} \\
			&\leq \sum_{i=1}^{D_K} \|R N_{\delta}^{\otimes n} \hat{U}_i - \hat{U}_i\| + c^{((d+1)/2)^2} \\
			&\leq \sum_{i=1}^{D_K} \|R \cdot N_{\delta}^{\otimes n}  - I\| + c^{((d+1)/2)^2} \\
			&= O(D_Kc^{((d+1)/2)^2}). \qedhere
		\end{align*}
	\end{proof}

	\begin{remark}
		{\normalfont
			We will henceforth make the reasonable assumption that the syndrome decoding and determination of which correction unitaries $U_x$ to apply in the recovery channel are classical and a free resource. We cannot assume however that the measurements of $P_x$, i.e. the syndrome extraction portion of $R$, and the application of the correction unitaries $U_x$ themselves can be performed perfectly.

			Since we assume our Hamiltonian is geometrically local however, its ground space is embedded in a geometrically local stabiliser code and such codes are in fact quantum LDPC codes by definition. Quantum LDPC codes have constant depth syndrome extraction circuits, so the syndrome extraction part of $R$ can be implemented as a constant depth Clifford circuit \cite{delfosse2022}. The corrections $U_x$ are merely Pauli operations and can be implemented in depth one, so we may assume the entire quantum circuit for $R$ is constant-depth.

			Syndrome extraction circuits for quantum LDPC codes can in fact be made fault tolerant, so this implies the syndrome extraction in $R$ can be implemented as a constant depth fault-tolerant circuit with measurements deferred to the end \cite{gottesmanLDPC}. If we do not give a fault-tolerant implementation of the syndrome extraction in $R$ in our algorithm, then we are in the situation described by the following lemma. \textit{Whenever we refer to $R$ in what follows, we will mean the quantum circuit part of $R$ and not the classical decoding portion}. $\blacksquare$}
		\label{remark:recovery}
	\end{remark}

	\begin{lemma}
		Let $S$ be a geometrically local stabiliser code and $R'$ a noisy version of $R$ where $N_{\delta}^{\otimes n}$ is applied after every gate in $R$ (see Remark \ref{remark:recovery}). Let $P$ be the transfer matrix of the Pauli noise channel $\mathcal{P}$ that arises from commuting all $N_{\delta}^{\otimes n}$ past the gates in $R$. Then $R' = R \cdot P$ and the probability of an error of weight $w$ occurring in $P$ is at most $(1-p)^{w/2D_R} (p/3)^{w/2D_R}$, where $D_R$ is the (constant) quantum circuit depth of the syndrome extraction part of $R$. \label{lemma:recovery}
	\end{lemma}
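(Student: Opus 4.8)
The plan is to reduce the noisy recovery circuit $R'$ to the \emph{ideal} recovery $R$ preceded by a single Pauli channel, and then to control the weight profile of that channel by a light-cone argument. First I would record the structural facts already assembled in Remark~\ref{remark:recovery}: since $S$ is geometrically local it is a quantum LDPC code, so the syndrome-extraction part of $R$ is a Clifford circuit of some constant depth $D_R$ with all measurements deferred to the end (by \cite{delfosse2022}), and the corrections $U_x$ are Pauli operators in depth one. Thus $R'$ is this fixed Clifford circuit with one copy of $\mathcal{N}^{\otimes n}_\delta$ inserted after each gate (writing $p$ for the noise rate $\delta$).

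The second step is the commutation argument yielding $R'=R\cdot P$. Each $\mathcal{N}^{\otimes n}_\delta$ is a Pauli channel --- on one qubit, identity with probability $1-\delta$ and each of $X,Y,Z$ with probability $\delta/3$ --- and conjugating a Pauli channel through a Clifford gate again gives a Pauli channel, as does composing Pauli channels. Pulling every inserted noise channel leftward through the Clifford gates of $R$ to the front of the circuit therefore produces a single Pauli channel $\mathcal{P}$ with $\mathcal{R}'=\mathcal{R}\circ\mathcal{P}$, i.e. $R'=R\cdot P$ in transfer-matrix form. The point needing care is the deferred measurements and the syndrome-conditioned Pauli corrections $U_x$: moving Pauli noise past the $U_x$ only flips signs (irrelevant for channels), and moving noise to before a measurement is harmless because a pre-measurement ancilla Pauli simply becomes part of $\mathcal{P}$ and is reinterpreted, consistently, as ``apply a random Pauli, then run ideal syndrome extraction and correction'' --- using the standing convention (Remark~\ref{remark:recovery}) that the classical decoding is a free resource.

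The third and main step bounds the weight distribution of $\mathcal{P}$. A single-qubit Pauli inserted after some layer of $R$ is conjugated backward through at most $D_R$ layers of geometrically local, bounded-degree Clifford gates before reaching the front, so its support lies in the corresponding backward light cone, which by geometric locality --- and the fact that each data qubit participates in only $O(1)$ ancilla-interaction gates per syndrome round --- has size at most $2D_R$. Hence an output Pauli $F$ of weight $w$ in $\mathcal{P}$ can arise only from a configuration of inserted single-qubit faults of total weight at least $w/2D_R$. Since faults at distinct qubit--time slots are independent and each realised non-identity single-qubit Pauli carries probability $\delta/3$ while the matching ``no fault'' events carry probability $1-\delta$ --- exactly as in the expansion \eqref{eq:iidDepolar}, using $w\le n$ to supply a companion fault-free slot on the same qubit for each active slot --- the probability of any such configuration, hence the total weight-$w$ error probability in $\mathcal{P}$, is at most $\bigl[(1-\delta)(\delta/3)\bigr]^{w/2D_R}=(1-p)^{w/2D_R}(p/3)^{w/2D_R}$, as claimed.

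I expect the genuine obstacle to be making steps two and three airtight simultaneously: tracking the noise honestly through the deferred measurements and adaptive Pauli corrections while keeping the rewriting in the exact form ``$R$ composed with a Pauli channel'', and pinning down the light-cone constant so the spreading factor is genuinely bounded by $2D_R$ for the particular LDPC syndrome-extraction circuit used (rather than by some unspecified constant), together with the routine but fiddly accounting over the many configurations contributing to a given weight-$w$ $F$. Once this lemma is in place, it feeds the reduction of Remark~\ref{remark:newalg} and the perturbation estimate of Proposition~\ref{prop:dqeError} in the same way Theorem~\ref{thm:externalnoisedqe} does, now additionally accounting for circuit-level faults inside $R$ itself.
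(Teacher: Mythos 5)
Your proposal is correct and follows essentially the same route as the paper: commute the inserted depolarizing noise (a Pauli channel) through the Clifford syndrome-extraction gates and Pauli corrections to obtain $R'=R\cdot P$, bound the growth of a single-qubit fault by $2D_R$ via the geometric-locality/light-cone argument (the paper does this by an explicit layer-by-layer induction), and conclude that a weight-$w$ error in $P$ requires at least $w/2D_R$ original faults, giving the stated probability bound. The only cosmetic difference is that the paper's proof actually derives the slightly stronger $(1-p)^{n-w/2D_R}(p/3)^{w/2D_R}$ before relaxing to the statement's form, whereas you land directly on the latter.
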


	\begin{proof}
		From Remark \ref{remark:recovery}, the gates in  the syndrome extraction part of $R$ are Clifford and the Clifford group is by definition the normalizer of the Pauli group. Pauli errors are therefore mapped to Pauli errors with possibly larger weight when commuting $N_{\delta}^{\otimes n}$ past the Clifford gates in $R$. Thus $R'$ can equivalently be viewed as $R' = R \cdot P$ with $P$ as defined in the lemma statement. Note that since the corrections $U_x$ are merely Pauli operations, the Pauli errors will not grow in size when commuted through this part of $R$. It therefore suffices to consider the sydrome extraction part of $R$ in what follows.

		Let $w$ be the weight of some Pauli error $E$ that occurs with probability $(1-p)^{n-w} (p/3)^w$ in $\mathcal{N}_{\delta}^{\otimes n}$. Since $S$ is assumed to be geometrically local, the sydrome extraction part of $R$ has some constant circuit depth $D_R$ and we can assume its gates are nearest neighbor. To determine the weight the error $E$ can grow to after propagating through the circuit for $R$, consider the case of a single qubit error. After commuting it through the first layer of gates in the circuit, this error can grow to one of size at most 2 after passing through a CNOT or CZ gate. Commuting the size 2 error through the second layer gives a size 4 error at most if each error passes through separate CNOT or CZ gates. Suppose by induction that the size of the error at layer $i$ of the circuit is at most $2i$. This clearly holds for the first two cases considered above. At the $i$-th layer, the outermost 2 errors propagate to errors each of size at most 2 giving an error of size at most $2i-2+4 = 2i+2 = 2(i+1)$. This establishes the claim and shows that a single qubit error grows to size at most $2D_R$ after passing through the syndrome extraction part of $R$.

		Then for a weight $w$ error $E$, the worst case scenario is when each of the single qubit errors in $E$ are separated and grow to non-overlapping errors of size $2D_R$. Thus the overall weight of this commuted through error is at most $2wD_R$. This implies that the worst-case probability of a weight $w$ Pauli error occurring in the new distribution $P$ is $(1-p)^{n-w/2D_R}(p/3)^{w/2D_R}$.
	\end{proof}

	We can now apply the fault-resilience theorems of \cite{cubitt2023dissipative} to our current scenario.

	\begin{theorem}
		\label{thm:noisysDQE}
		Let $K$ be a $(\Delta,\Gamma,\varepsilon)$-AGSP for the ground space projector $\Pi_0$ of the geometrically local, stabiliser-encoded Hamiltonian in \eqref{eq:hamiltonianFull} on a Hilbert space of dimension $2^n$ and $\{\mathcal{E}_0,\mathcal{E}_1\}$ the quantum instruments of \eqref{eq:E0instrument}-\eqref{eq:E1instrument}. Assume the stabiliser code has distance $d$. Using the same notation and assumptions as in Remarks \ref{remark:newalg}-\ref{remark:recovery} and Lemma \ref{lemma:recovery}, let $\{\mathcal{E}'_0,\mathcal{E}'_1\}$ be quantum instruments such that $E'_0 = \hat{P}_0(R' \cdot \text{DQE}'')\hat{P}_0$ and $\mathcal{E}'_1(\rho) \geq \mu I$ for any state $\rho$. Consider the stopped process where iterate $\{\mathcal{E}'_0,\mathcal{E}'_1\}$ starting from the maximally mixed state on the codespace $\rho_0 = P_0/2^{n-k}$ and stop when we have obtained a run of $m$ 0's.
		\begin{enumerate}[(a)]
			\item Let $D_K$ be the constant circuit depth for the AGSP $K$, $D_R$ the constant quantum circuit depth for the syndrome extraction in $R$, and $$\alpha = \exp\left(-\frac{2}{n}\left(1 - \frac{4n\delta D_R}{d+1} \right)^2\right).$$ If $d \geq 4n\delta D_R - 1$ and $$O\left(D_K \alpha^{((d+1)/4D_R)^2}\right) < \frac{1}{2} \sqrt{\Gamma}(\sqrt{\Gamma} - \sqrt{\Delta}),$$ then the state $\rho'_m$ at the stopping point is such that 
            \begin{equation}
                \label{eq:mainresultFidelity}
                \lim_{m \rightarrow \infty} \tr(\Pi_0 \rho'_m) \geq 1 - \varepsilon - \frac{O\left(D_K \alpha^{((d+1)/4D_R)^2}\right)}{\sqrt{\Gamma}(\sqrt{\Gamma}-\sqrt{\Delta})-O\left(D_K \alpha^{((d+1)/4D_R)^2})\right)}.
            \end{equation}
            \item Suppose the recovery operation $R$ is implemented fault tolerantly. Let $$c = \exp(-\frac{2}{n}\left(1 - \frac{2n\delta}{d+1}\right)^2).$$ If $d \geq 2n\delta + 1$ and $$O\left(D_K c^{((d+1)/2)^2}\right) < \frac{1}{2} \sqrt{\Gamma}(\sqrt{\Gamma} - \sqrt{\Delta}),$$ then the state $\rho'_m$ at the stopping point is such that $$\lim_{m \rightarrow \infty} \tr(\Pi_0 \rho'_m) \geq 1 - \varepsilon - \frac{O\left(D_K c^{((d+1)/2)^2} \right)}{\sqrt{\Gamma}(\sqrt{\Gamma}-\sqrt{\Delta})-O\left(D_K c^{((d+1)/2)^2}\right)}.$$
		\end{enumerate}
	\end{theorem}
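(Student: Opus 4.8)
The plan is to reduce the entire statement to the fault-resilience theorems of \cite{cubitt2023dissipative} (e.g.\ their Theorem~61, whose conclusion has exactly the form $1-\varepsilon-\tfrac{\eta}{\sqrt\Gamma(\sqrt\Gamma-\sqrt\Delta)-\eta}$), so that the real content is to produce the perturbation parameter $\eta$ with the claimed dependence on the code distance $d$. First I would invoke Remark~\ref{remark:newalg}: iterating $\{\mathcal{E}'_0,\mathcal{E}'_1\}$ from $\rho_0=P_0/2^{n-k}$, after absorbing a single leading $R'\cdot N_\delta^{\otimes n}$ into the initial state, is equivalent to iterating $\hat P_0(R'\cdot\text{DQE}'')\hat P_0$ inside the codespace together with the reset branch $\mathcal{E}'_1$. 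Because $[K,P_0]=0$, the intended target $\hat P_0\,\text{DQE}\,\hat P_0=(KP_0)\otimes\overline{(KP_0)}$ is precisely the transfer matrix of the ideal AGSP channel $\mathcal{K}$ on codespace states (on which $\mathcal{R}$ acts as the identity). Hence we land exactly in the hypotheses of the \cite{cubitt2023dissipative} machinery once we control $\eta:=\|\hat P_0(R'\cdot\text{DQE}'')\hat P_0-\hat P_0\,\text{DQE}\,\hat P_0\|$; I would also note that the smallness condition $\eta<\tfrac{1}{2}\sqrt\Gamma(\sqrt\Gamma-\sqrt\Delta)$ stated in each part is exactly what Lemma~\ref{lemma:AGSPpert} requires for the perturbed success channel to remain a good AGSP, so the reduction is consistent.

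For part~(b) the recovery is implemented fault-tolerantly and hence, by Remark~\ref{remark:recovery}, does not induce the weight blow-up that an unprotected syndrome-extraction circuit would, so it may be treated as the ideal $R$; Proposition~\ref{prop:dqeError} then gives $\eta=O(D_Kc^{((d+1)/2)^2})$ directly, valid in the regime $d\ge 2n\delta+1$ required by Theorem~\ref{thm:externalnoisedqe}. Substituting this $\eta$ into the fault-resilience theorem and tracking it through the argument — the perturbed AGSP has parameters $(\Delta+\eta,\Gamma-\eta,\varepsilon+O(\eta))$ and the stopped-process fidelity is governed by a geometric series normalised by $\sqrt\Gamma(\sqrt\Gamma-\sqrt\Delta)$ — yields $\lim_{m\to\infty}\tr(\Pi_0\rho'_m)\ge 1-\varepsilon-\tfrac{\eta}{\sqrt\Gamma(\sqrt\Gamma-\sqrt\Delta)-\eta}$ with $\eta$ as above.

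For part~(a) the only new work is the noisy recovery. By Lemma~\ref{lemma:recovery} I would write $R'=R\cdot P$ with $P$ a Pauli channel whose weight-$w$ component occurs with probability at most $(1-\delta)^{n-w/2D_R}(\delta/3)^{w/2D_R}$, compose it with the trailing DQE noise $N_\delta^{\otimes n}$ (still a Pauli channel), and rerun the syndrome-correction/Hoeffding computation of Theorem~\ref{thm:externalnoisedqe} for this combined channel: a weight-$j$ physical error now fails to be corrected only once its propagated weight $2jD_R$ exceeds $(d-1)/2$, i.e.\ once $j$ exceeds roughly $(d-1)/4D_R$, so the failure tail becomes $\text{Pr}(X\ge(d+1)/4D_R)\le\alpha^{((d+1)/4D_R)^2}$ with $\alpha=\exp(-\tfrac{2}{n}(1-\tfrac{4n\delta D_R}{d+1})^2)$, valid for $d\ge 4n\delta D_R-1$. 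The telescoping estimate of Proposition~\ref{prop:dqeError} then goes through verbatim with $c^{((d+1)/2)^2}$ replaced by $\alpha^{((d+1)/4D_R)^2}$, giving $\eta=O(D_K\alpha^{((d+1)/4D_R)^2})$; feeding this $\eta$ into the fault-resilience theorem as in part~(b) gives \eqref{eq:mainresultFidelity}.

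The hardest part is the bookkeeping behind part~(a): commuting the interspersed single-qubit noise through the constant-depth, nearest-neighbour Clifford syndrome-extraction circuit, confirming that the weight grows only by the factor $2D_R$ of Lemma~\ref{lemma:recovery}, folding the result together with the trailing DQE noise into one Pauli channel, and re-verifying that the Hoeffding regime $d\ge 4n\delta D_R-1$ is precisely what makes $\alpha^{((d+1)/4D_R)^2}$ the correct tail bound, with all constants in $\alpha$ and in the exponent aligned with Theorem~\ref{thm:externalnoisedqe}. A minor but necessary step is to confirm that $\{\mathcal{E}'_0,\mathcal{E}'_1\}$ is a genuine quantum instrument (its two branches summing to a CPTP map) and that $\mathcal{E}'_1(\rho)\ge\mu I$ is exactly the regeneration hypothesis on the reset branch used in \cite{cubitt2023dissipative}.
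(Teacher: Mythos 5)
Your proposal is correct and follows essentially the same route as the paper: both parts reduce to Theorem~61 of \cite{cubitt2023dissipative} via the codespace reduction of Remark~\ref{remark:newalg}, with part~(b) obtained directly from Proposition~\ref{prop:dqeError} and part~(a) obtained by rerunning the Hoeffding tail bound of Theorem~\ref{thm:externalnoisedqe} for the propagated Pauli channel $P$ of Lemma~\ref{lemma:recovery}, using $\{P\ge k\}\subseteq\{\mathrm{Binom}\ge k/2D_R\}$ to get the $\alpha^{((d+1)/4D_R)^2}$ tail. Your added remarks on Lemma~\ref{lemma:AGSPpert} and the instrument structure are consistent elaborations rather than a different argument.
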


	\begin{proof}
		\begin{enumerate}[(a)]
			\item An analogous proof to that of Proposition \ref{prop:dqeError} goes through with the replacement $R \rightarrow R' = R \cdot P$ in principle, but we will then need to bound the quantity $\|R \cdot P - I \|$. From the proof of Theorem \ref{thm:externalnoisedqe}, this amounts to giving a tail bound on the probability that an error of weight at least $k$ occurs in $P$, i.e. computing $\text{Pr}(P \geq k)$.

			From the proof of Lemma \ref{lemma:recovery}, note that not every error of size $k/2D_R$ occurring in the original iid depolarizing noise model $N^{\otimes n}_{\delta}$ grows to size $k$ when propagated through the syndrome extraction part of $R$ since errors can cancel out. Since the errors in $N_{\delta}^{\otimes n}$ are binomially distributed, we therefore have $\{\text{binom} \geq k/2D_R\} \supseteq \{P \geq k\}$. In other words, the number of errors of size at least $k$ in the distribution $P$ is at most the number of errors of size at least $k/2D_R$ in the original iid depolarizing noise channel.

			For sets $A$ and $B$, $A \subseteq B$ implies $\text{Pr}(A) \leq \text{Pr}(B)$ by monotonicity, so it suffices to give a tail bound for $\text{Pr}(\text{Binom} \geq k/(2D_R))$. We then have from Hoeffding's inequality and the assumption $k/(2D_R) \geq n\delta$ that $$\text{Pr}(P \geq k) \leq \text{Pr}\left(\text{Binom} \geq \frac{k}{2D_R}\right) \leq \exp\left(-\frac{2}{n}\left(\frac{k}{2D_R} - n\delta \right)^2\right).$$ Substituting $k = (d+1)/2$ then gives $$\text{Pr}\left(P \geq \frac{d+1}{2}\right) \leq \text{Pr}\left(\text{Binom} \geq \frac{d+1}{4D_R}\right) \leq \exp\left(-\frac{2}{n}\left(1 - \frac{4n\delta D_R}{d+1} \right)^2\right)^{\left(\frac{d+1}{4D_R}\right)^2}.$$

			Letting $$\alpha = \exp\left(-\frac{2}{n}\left(1 - \frac{4n\delta D_R}{d+1} \right)^2\right),$$ the claim then follows from Theorem 61 of \cite{cubitt2023dissipative}.
            \item From Remark \ref{remark:recovery}, the syndrome extraction part of $R$ can be implemented fault tolerantly since the code is quantum LDPC. The claim then follows from Proposition \ref{prop:dqeError} and Theorem 61 of \cite{cubitt2023dissipative}. \qedhere
		\end{enumerate}
	\end{proof}

	These results should be compared to the standard DQE result of $$\lim_{m \rightarrow \infty} \tr(\Pi_0 \rho'_m) \geq 1-\varepsilon-\frac{2\delta}{\sqrt{\Gamma}(\sqrt{\Gamma} - \sqrt{\Delta})-2\delta}$$ under a noise rate $\delta$ (see Theorem 61 of \cite{cubitt2023dissipative}). Recall that $\varepsilon$ here is related both to the parameters of the AGSP and the strength of the weak measurements used (see Theorem \ref{thm:AGSP}) and can be made arbitrarily small. Thus in the presence of noise of rate $\delta$, the limiting factor in achieving the best possible fidelity is the noise rate itself.
    
    The significance of the above theorem is therefore twofold: 
    \begin{enumerate}[(a)]
        \item We can suppress a part of the additive error in final state output by the algorithm exponentially in the code distance, which is not something a naive application of the normal DQE algorithm can achieve. Thus we can get closer to fault-tolerance in the task of preparing ground states for stabiliser-encoded Hamiltonians \textit{without} the overhead required to make the entire algorithm fault-tolerant. Even if we opt to make the syndrome extraction part of the stabiliser recovery map $R$ fault-tolerant, this is still a much smaller overhead to pay compared to making the entire DQE procedure fault-tolerant since our stabiliser codes are qLDPC by assumption (see Remark \ref{remark:recovery}).
        \item Solving the inequalities $$d \geq 4n\delta D_R - 1$$ and $$O\left(D_K \alpha^{((d+1)/4D_R)^2}\right) < \frac{1}{2} \sqrt{\Gamma}(\sqrt{\Gamma} - \sqrt{\Delta}),$$ for $\delta$ gives the inequalities
        $$\delta \leq \frac{d+1}{4n D_R}$$ and $$\delta < O\left(\frac{d+1}{4nD_R}\left(1 + \sqrt{-\frac{n}{2} \ln \left(\frac{\sqrt{\Gamma}(\sqrt{\Gamma}-\sqrt{\Delta}}{2D_K} \right)}\right) \right).$$
        Taking the tighter upper bound for $\delta$ given by the first inequality therefore implies that if the noise rate afflicting our system is below this $\delta$, the algorithm inherently acquires error-resilience without any additional fault-tolerance overhead required. 
    \end{enumerate}   
    These points also show that it is necessary to incorporate the stabilizer structure of the Hamiltonian in the design of our algorithm to attain this additional noise-resilience; leveraging dissipation alone does not give such a guarantee. 

\section{Dissipative Quantum Computation} \label{section:dqc}
	A key desideratum of any model of quantum computation is robustness to errors and noise.
	Compatibility with existing error correction and fault-tolerant methods is also useful---for example, whilst adiabatic quantum computation is universal and therefore as computationally powerful as the circuit model, it is not clear how to make it fault-tolerant.
        In \cite{VWC09}, the authors introduced a new model of universal quantum computation: dissipative quantum computation (DQC). In the DQC model, quantum computation is not carried out by applying a sequence of gates to a quantum state, as in the circuit model. Instead, a dissipative, time-homogeneous dynamics is engineered, dependent on the desired quantum computation, such that the output of that quantum computation appears in the steady-state of this dissipative dynamics. (The original \cite{VWC09} paper engineers a continuous-time dynamics described by a Lindbladian, but directly analogous discrete-time versions thereof, described by an iterated completely-positive map, are also possible -- see below.)
        They prove that this dynamics not only converges to a steady-state that encodes the output of the computation, but that the convergence to this steady state is fast: it converges in a time that scales only polynomially with system-size. (This is non-trivial, as in general dissipative dynamics can take exponential-time or worse to converge to their steady-states.) This establishes that DQC is equivalent to the circuit model, in the strong quantum Church-Turing sense.

        Interestingly, and one of the reasons this result attracted significant attention, is that DQC does not require any state initialisation; because the output is encoded in the steady-state of a dissipative dynamics it will produce the correct output when started from an arbitrary input state.
        This also provides a measure of inherent robustness to error: even if the quantum state is completely scrambled at some point during the computation, the DQC dynamics will still re-converge it to the correct output.
        Therefore, whilst DQC could be made fault-tolerant with standard methods by implementing the DQC dynamics using a fault-tolerant quantum circuit, the hope is that it is inherently more robust to errors than the circuit model to some degree without the full overhead of quantum error-correction and fault-tolerance. 

        To analyse the noise-resilience of the DQC model, we will first show that the continuous-time dynamics described in \cite{VWC09} can be reduced by a sequence of steps to a simpler, discrete-time dynamics. This discrete-time dynamics is strictly stronger than the original continuous-time dynamics, in the following sense: it converges to the identical fixed-point; it converges to this fixed-point at least as quickly; and it is at least as robust to perturbation as the original dynamics.
        But this discrete-time dynamics will turn out to have a simple operational interpretation: it describes performing a classical random walk on the corresponding quantum circuit, where we probabilistically step forward or backward through the circuit until reaching the end.
        Thus dissipative quantum computation, whilst appearing quite different to the circuit model, is in fact just the standard circuit model in disguise.

	This leads to our main observation: the error tolerance of DQC is no better than the error tolerance of the standard quantum circuit model, irrespective of the chosen computation or its width or depth.
	Essentially, this stems from the fact that in the equivalent circuit random walk picture of DQC that we derive, errors can be corrected only at the beginning of the circuit, whilst the computational outcome can be read out only at the end.
        Therefore, since a random classical walk over the circuit takes at least as many steps to traverse the circuit as simply applying the gates in sequence in the usual circuit model (i.e.\ $T$), the computation will accumulate at least as many errors.
        (In fact, because the random walk takes $O(T^2)$ steps on average to reach the end of the circuit, it will generally accumulate \emph{more}.)
    
	In more detail, the dissipative model of quantum computation of \cite{VWC09} is constructed in the following way.
        For any given quantum computation $U = U_{T-1} \dots U_1 U_0$ on $N$ qubits, where each unitary acts non-trivially on at most two qubits, the continuous time dissipative quantum computation algorithm consists of evolving an arbitrary state $\rho$ under a Lindbladian $\mathcal{L}$ for poly($T$) time.
        The Lindbladian $\mathcal{L}$ is constructed as:
	\begin{equation}
		\mathcal{L}(\rho) := \sum_k L_k \rho L^\dagger_k -\frac{1}{2}\{L^\dagger_k L_k, \rho\}
	\end{equation}
	where
	\begin{align}
		L^{\text{init}}_i &= \ket{0}_i\bra{1} \otimes \ket{0}_\text{cl}\bra{0} \\
		L^{\text{prop}}_t &= U_t \otimes \ket{t+1}_\text{cl}\bra{t} + U^\dagger_t \otimes \ket{t}_\text{cl}\bra{t+1},
	\end{align}
	indexing over $i\in\{1,\dots,N\}$ and $t\in\{0,\dots,T-1\}$.
    $\mathcal{L}$ treats the full Hilbert space as consisting of a computation register on $N$ qubits and a clock register on one qudit of dimension $T+1$.
	By taking the further step of encoding the clock register in unary using the standard Kitaev construction, $\mathcal{L}$ can be made 5-local and defined solely on qubits (and it can be simplified further using well-known Hamiltonian complexity constructions~\cite{kempe2005complexitylocalhamiltonianproblem,oliveira2008complexityquantumspinsystems}.)

	\cite{VWC09,kastoryano_phd} proved that any state evolved under $\mathcal{L}$ converges to the unique stationary state $\rho_\infty$ in time $O(NT^3 \log T)$, where $\rho_\infty:=\frac{1}{T+1} \sum_{t=0}^T \ket{\psi_t}\bra{\psi_t} \otimes \ket{t}\bra{t}$ for $\ket{\psi_t}:=U_t \dots U_1 \ket{0}$.
	Since $\rho_\infty$ has $\frac{1}{T+1}$ support on the output $\ket{\psi_T}\bra{\psi_T}$ of the quantum computation described by the sequence of unitaries $U = \prod_t U_t$, this output can be extracted with probability $\frac{1}{T+1}$ by measuring the clock register.

    We will show that the intuition that DQC is more resilient to noise than the unitary circuit model because of guaranteed convergence to $\rho_\infty$ regardless of a mid-algorithm error fails when there is a positive rate of error above a certain threshold. Furthermore, this threshold is no better than the rate of error that can be tolerated in the raw circuit model (without any error correction or fault tolerance). Thus, unlike in the case of ground state preparation (Sec. \ref{section:dqestab}), for general-purpose quantum computation dissipative quantum computation is no more inherently robust to noise than the standard circuit model; implementing dissipative quantum computation in practice would require the same error-correction and fault-tolerance methods as the circuit model, with the same concomitant large overheads in physical qubit number and gate count.

\subsection{From continuous-time to discrete-time DQC}
        We first reduce the continuous-time dissipative quantum computation dynamics to an equivalent---but simpler to analyse---discrete-time dynamics, in a sequence of steps.
	We will do this via another continuous-time dynamics, $\mathcal{L}^{\text{global}}$, defined as having the following jump operators:
	\begin{align}
		L^{\text{global,init}}_a &= \frac{1}{\sqrt{2}}\ket{0}\bra{a}\otimes \ket{0}\bra{0} \\
		L^{\text{global,prop}}_t &= \frac{1}{\sqrt{2}} \left( U_t \otimes \ket{t+1}_\text{cl}\bra{t} + U^\dagger_t \otimes \ket{t}_\text{cl}\bra{t+1} \right) \\
		L^{\text{global,fin}} &= \frac{1}{\sqrt{2}} \mathds{1} \otimes \ket{T}\bra{T}
	\end{align}
	Note $\sum_\alpha L^\dagger_\alpha L_\alpha = \mathds{1} \otimes \mathds{1}$.
	Therefore in transfer matrix form,
	\begin{align}
		L^{\text{global}} &= \sum_\alpha L_\alpha \otimes \overline{L}_\alpha - \frac{1}{2} L^\dagger_\alpha L_\alpha \otimes \mathds{1} - \frac{1}{2} \mathds{1} \otimes L^\dagger_\alpha L_\alpha \\
		&= \sum_\alpha L_\alpha \otimes \overline{L}_\alpha - \mathds{1} \\
		&= D - \mathds{1}
	\end{align}
	where $D=\sum_\alpha L_\alpha \otimes \overline{L}_\alpha = \sum_\alpha D_\alpha \otimes \overline{D}_\alpha$ is the transfer matrix of the CPTP map $\mathcal{D}$ which has Kraus operators exactly identical to the jump operators of $\mathcal{L}^{\text{global}}$.

	Now we show that $\mathcal{L}^{\text{global}}$ is a strengthening of $\mathcal{L}$, in the sense that it converges at least as fast as $\mathcal{L}$ and to the same fixed-point.
	As in \cite{VWC09}, wlog all unitaries can be set to $\mathds{1}$, and after conjugation with $X' = X_0^{\otimes N} \otimes \sum_t \ket{tt}\bra{tt} + \mathds{1} \otimes (\mathds{1} - \sum_t \ket{tt}\bra{tt})$, where
	\begin{equation}
		X_0 := \ket{00}\bra{00} + \ket{00}\bra{11} - \ket{11}\bra{11} + \ket{01}\bra{01} + \ket{10}\bra{10}
	\end{equation}
	we have
	\begin{align}
		X'L^{\text{global}}{X'}^{-1}&= \frac{1}{2} \left( \ket{00}\bra{00} \otimes \ket{00}\bra{00} + \mathds{1} \otimes \ket{TT}\bra{TT} + \sum_t\mathds{1} \otimes \phi_t \right) - \mathds{1}
	\end{align}
	where
	\begin{align}
		\phi_t := \begin{pmatrix}
			0 & 0 & 0 & 1 \\
			0 & 0 & 1 & 0 \\
			0 & 1 & 0 & 0 \\
			1 & 0 & 0 & 0
		\end{pmatrix}
	\end{align}
	The matrix is now block-diagonal with: one-dimensional blocks $(0)-\mathds{1}$; two-dimensional blocks
	\begin{align}
		\begin{pmatrix}
			0 & \frac{1}{2} \\
			\frac{1}{2} & 0
		\end{pmatrix} - \mathds{1}
	\end{align}
	and $T+1$ dimensional blocks
	\begin{align} \label{lglobaltoeplitz}
		\frac{1}{2}\begin{pmatrix}
			1 & 1 & 0 & \cdots & 0 & 0 \\
			1 & 0 & 1 & \cdots & 0 & 0 \\
			0 & 1 & 0 & \cdots & 0 & 0 \\
			\vdots & \vdots & \vdots & \ddots & \vdots & \vdots \\
			0 & 0 & 0 & \cdots & 0 & 1 \\
			0 & 0 & 0 & \cdots & 1 & 1
		\end{pmatrix}
		- \mathds{1} \qquad, \qquad    \frac{1}{2} \begin{pmatrix}
			0 & 1 & 0 & \cdots & 0 & 0 \\
			1 & 0 & 1 & \cdots & 0 & 0 \\
			0 & 1 & 0 & \cdots & 0 & 0 \\
			\vdots & \vdots & \vdots & \ddots & \vdots & \vdots \\
			0 & 0 & 0 & \cdots & 0 & 1 \\
			0 & 0 & 0 & \cdots & 1 & 1
		\end{pmatrix} - \mathds{1}
	\end{align}
	Whereas for $\mathcal{L}$, we have \cite{VWC09,kastoryano_phd}: one-dimensional blocks $\left(\frac{1-\lambda_i}{4}\right) - \mathds{1}$, $\left(\frac{1-\lambda_j}{4}\right) - \mathds{1}$, $(0) - \mathds{1}$; two-dimensional blocks
	\begin{align}
		\begin{pmatrix}
			\frac{1-\lambda_i}{4} & \frac{1}{2} \\
			\frac{1}{2} & \frac{1-\lambda_j}{4}
		\end{pmatrix} - \mathds{1} \qquad , \qquad \begin{pmatrix}
			0 & \frac{1}{2} \\
			\frac{1}{2} & 0
		\end{pmatrix} - \mathds{1}
	\end{align}
	and $T+1$ dimensional blocks
	\begin{align} \label{ltoeplitz}
		\frac{1}{2}\begin{pmatrix}
			1-\frac{\lambda_i+\lambda_j}{2} & 1 & 0 & \cdots & 0 & 0 \\
			1 & 0 & 1 & \cdots & 0 & 0 \\
			0 & 1 & 0 & \cdots & 0 & 0 \\
			\vdots & \vdots & \vdots & \ddots & \vdots & \vdots \\
			0 & 0 & 0 & \cdots & 0 & 1 \\
			0 & 0 & 0 & \cdots & 1 & 1
		\end{pmatrix} - \mathds{1}
	\end{align}
	where $\lambda_\alpha \in \{0,\dots,N\}$.
    The gap of $\mathcal{L}^{\text{global}}$ can be seen from eq (\ref{lglobaltoeplitz}) to essentially arise from the difference between the largest eigenvalue of the Toeplitz matrix in the cases $\lambda_i + \lambda_j = 0$ and $\lambda_i + \lambda_j = 2$ in eq (\ref{ltoeplitz}).
    Therefore the gap of $\mathcal{L}$, which comes from the difference between the largest eigenvalue in the cases $\lambda_i + \lambda_j = 0$ and $\lambda_i + \lambda_j = 1$, is strictly smaller, and so $\mathcal{L}^{\text{global}}$ converges strictly faster than $\mathcal{L}$.

	Now, $L^{\text{global}}$ generates the semigroup
	\begin{align}
		e^{L^{\text{global}} m} = \sum_{k} \frac{m^k e^{-m}}{k!} D^k
	\end{align}
	i.e.\ it implements the discrete-time dynamics $D$ with a Poisson clock of rate~1.
	Therefore, the dynamics generated by iterating the CPTP map $\mathcal{D}$ is a discrete-time version of $\mathcal{L}^{\text{global}}$.

        Recall that $\mathcal{D}$ is defined by its Kraus operators:
        \begin{align} \label{D eqs 1}
		D^{\text{init}}_{a} &= \frac{1}{\sqrt{2}} \ket{00\dots0}\bra{a} \otimes \ket{0}_{\text{cl}}\bra{0} \\
		D^{\text{fin}} &= \frac{1}{\sqrt{2}} \mathds{1} \otimes \ket{T}_{\text{cl}}\bra{T} \\ \label{D eqs 3}
		D^{\text{prop}}_{t} &= \frac{1}{\sqrt{2}} \left( U_{t+1} \otimes \ket{t+1}_{\text{cl}}\bra{t} + U^{\dagger}_{t+1} \otimes \ket{t}_{\text{cl}}\bra{t+1} \right)
	\end{align}
	indexing over $a \in \{0,1\}^N$ and $t \in \{0,\dots,T-1\}$.

	We will show that $\mathcal{D}$ is exponentially well approximated by the simpler map $\mathcal{C}$.
	\begin{align} \label{F eqs 1}
		C^{\text{init}}_{a} &= \frac{1}{\sqrt{2}} \ket{00\dots0}\bra{a} \otimes \ket{0}_{\text{cl}}\bra{0} \\ \label{F eqs 2}
		C^{\text{fin}} &= \frac{1}{\sqrt{2}} \mathds{1} \otimes \ket{T}_{\text{cl}}\bra{T} \\ \label{F eqs 3}
		C^{\text{fw}}_{t} &= \frac{1}{\sqrt{2}} U_{t+1} \otimes \ket{t+1}_{\text{cl}}\bra{t} \\ \label{F eqs 4}
		C^{\text{bw}}_{t} &= \frac{1}{\sqrt{2}} U^{\dagger}_{t+1} \otimes \ket{t}_{\text{cl}}\bra{t+1}
	\end{align}
	indexing over $a \in \{0,1\}^{N}$ and $t \in \{0,\dots,T-1\}$.

        Operationally, $\mathcal{C}$ consists of measuring the clock register and flipping a coin.
        Then, conditional on the outcome, $\mathcal{C}$ either applies a unitary on the quantum register and increments the value on the clock register, or it applies the inverse of the previous unitary and decrements the clock. Except when the value on the clock is: a) $t=0$, where the action on one of the two outcomes is instead to reset the quantum register to $\ket{00\dots0}\bra{00\dots0}$ and leave the clock register untouched; or b) $t=T$, where the action on one of the two outcomes is to leave both the computational and clock registers untouched.
	That is, $\mathcal{C}$ essentially corresponds to a classical random walk with boundaries on the quantum circuit for $U=\prod_t U_t$.

	It has been noticed before that the dissipative part of a Lindbladian can model a classical walk as coherences vanish exponentially quickly \cite{Whitfield}.
	There, this observation was used to design a framework for interpolating between quantum walks and classical walks.
        Here, this leads to the observation that dissipative quantum computation, despite apparently being quite different to the circuit model, is in fact essentially just the standard circuit model in disguise.

	More precisely, we will show that $\mathcal{C}$ has the same eigenvectors (and, bar a few, the same eigenvalues) as $\mathcal{D}$ and converges to the unique fixed point $\rho_\infty$ in the exact same number of iterations, which is also the same time as $\mathcal{L}$ requires.

\subsection{Convergence of $\mathcal{D}$ and $\mathcal{C}$}
	By conjugating the Kraus operators with $W:=\sum_t U_t\dots U_1 \otimes \ket{t}\bra{t}$, the eigenvalues of $\mathcal{D}$ are unchanged upon setting all $U_{t+1}$ and $U^\dagger_{t+1}$ to $\mathds{1}$.
	Put the Kraus operators of $\mathcal{D}$ into transfer matrix form $D$.
	Conjugation of the transfer matrix by $X:=X_0^{\otimes N}\otimes \mathds{1}_{\text{cl}}$ preserves the spectrum (but now not the eigenvectors).
	Note $X = X^{-1}$.
	After rearranging indices,
	\begin{align}
		XDX^{-1} =
		S_N
		\oplus
		S_0 ^{\oplus 2^N-1}
		\oplus
		\begin{pmatrix}
			0 & \frac{1}{2} \\
			\frac{1}{2} & 0
		\end{pmatrix}^{\oplus 2^NT}
		\oplus 0
	\end{align}
	where the $T+1$ dimensional matrix $S_k$ is defined as
	\begin{equation}
		S_k := \frac{1}{2}\begin{pmatrix}
			\frac{k}{N} & 1 & 0 & \dots & 0 & 0 & 0 \\
			1 & 0 & 1 & \dots & 0 & 0 & 0 \\
			0 & 1 & 0 & \dots & 0 & 0 & 0\\
			\vdots & \vdots & \vdots & \ddots & \vdots & \vdots & \vdots\\
			0 & 0 & 0 & \dots & 0 & 1 & 0\\
			0 & 0 & 0 & \dots & 1 & 0 & 1 \\
			0 & 0 & 0 & \dots & 0 & 1 & 1
		\end{pmatrix}
	\end{equation}
	where $k \in \{0,1,\dots,N\}$.

	The smallest of the inter- and intra-block spectral gaps characterises the overall gap of $\mathcal{D}$.
	\begin{theorem} \label{f spec}
		The spectral gap of $\mathcal{D}$ is $\Theta\left(T^{-2}\right)$, and there is a unique eigenvector with an eigenvalue of magnitude 1.
	\end{theorem}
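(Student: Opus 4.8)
The plan is to read off $\mathrm{spec}(\mathcal D)$ directly from the block decomposition of $XDX^{-1}$ just established. The two conjugations used -- by the unitary $W=\sum_t U_t\cdots U_1\otimes\ket t\bra t$ and then by $X$ (which satisfies $X=X^{-1}$) -- are similarities, so $\mathrm{spec}(\mathcal D)=\mathrm{spec}(XDX^{-1})$, which as a multiset is the disjoint union of $\mathrm{spec}(S_N)$, $(2^N-1)$ copies of $\mathrm{spec}(S_0)$, $2^N T$ copies of $\{+\tfrac12,-\tfrac12\}$, and a block of $0$'s. The $2\times2$ blocks and the $0$ block contribute only eigenvalues of modulus $\leq\tfrac12$, so the whole theorem reduces to the two real symmetric tridiagonal (Jacobi) matrices $S_N$ and $S_0$; both are entrywise non-negative, irreducible and primitive (the weight-$\tfrac12$ self-loop at the last coordinate kills periodicity), so Perron--Frobenius applies to each.

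The matrix $S_N$ has all row sums equal to $1$ -- it is the transition matrix of a reflecting random walk on the path $\{0,\dots,T\}$ with sticky endpoints -- so $S_N\mathbf 1=\mathbf 1$, and primitivity makes $1$ a simple eigenvalue and the unique eigenvalue of modulus $1$ of this block (in particular $-1\notin\mathrm{spec}(S_N)$). For the gap I would diagonalise $S_N$ in closed form: a direct substitution shows its eigenvalues are $\cos\!\big(\tfrac{j\pi}{T+1}\big)$ for $j=0,\dots,T$, with eigenvectors $\ell\mapsto\cos\!\big(\tfrac{(2\ell+1)j\pi}{2(T+1)}\big)$; hence $\mathrm{spec}(S_N)\subseteq[-1+\Theta(T^{-2}),\,1]$, with $1$ simple and every other eigenvalue of modulus at most $\cos\!\big(\tfrac{\pi}{T+1}\big)=1-\Theta(T^{-2})$. (Without computing, the same qualitative conclusion follows from Cheeger's inequality -- the conductance of this walk is $\Theta(T^{-1})$ -- together with the Rayleigh quotient of a slowly varying test vector such as $v_\ell=\ell-T/2$.)

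The matrix $S_0$ is $S_N$ with its $(0,0)$ entry changed from $\tfrac12$ to $0$; it is sub-stochastic, corresponding to a walk on $\{0,\dots,T\}$ that is killed with probability $\tfrac12$ on each visit to coordinate $0$. Summing columns (column $0$ sums to $\tfrac12<1$, the rest to $1$) and pairing with the strictly positive Perron eigenvector of $S_0$ shows $\rho(S_0)<1$ strictly, so $S_0$ contributes no eigenvalue of modulus $1$. To make this quantitative -- $\rho(S_0)\leq 1-\Omega(T^{-2})$, which is the one step that is not routine -- I would again diagonalise: $S_0$ is the Jacobi matrix of a ``Dirichlet at $0$, reflecting at $T$'' path, with eigenvalues $\cos\!\big(\tfrac{(2k+1)\pi}{2T+3}\big)$ for $k=0,\dots,T$, so $\rho(S_0)=\cos\!\big(\tfrac{\pi}{2T+3}\big)=1-\Theta(T^{-2})$. (A softer alternative: a hitting-time / escape-probability comparison with the reflecting walk, whose expected hitting time of the leaky site is $\Theta(T^2)$, also yields $\rho(S_0)\leq 1-\Omega(T^{-2})$.)

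Finally, assembling: eigenvalue $1$ occurs only in the $S_N$ block and is simple there, hence is a simple eigenvalue of $\mathcal D$, and no other block carries an eigenvalue of modulus $1$; its eigenvector, after undoing the $W$ and $X$ conjugations, is the vectorisation of $\rho_\infty$. The second-largest modulus over the full spectrum is $\max\{1-\Theta(T^{-2}),\ \rho(S_0),\ \tfrac12\}=1-\Theta(T^{-2})$, so $\mathcal D$ has spectral gap $\Theta(T^{-2})$. I expect the main obstacle to be the quantitative bound on $\rho(S_0)$ for the sub-stochastic block; the rest is the classical reflecting-random-walk gap estimate together with the bookkeeping that $W$ and $X$, being invertible, preserve the spectrum, and the cleanest route throughout is to diagonalise the two Jacobi matrices $S_N$ and $S_0$ in closed form.
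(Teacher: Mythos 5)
Your proposal is correct and follows essentially the same route as the paper: reduce to the block decomposition of $XDX^{-1}$, note the $2\times2$ and zero blocks are harmless, and control the spectra of the two Jacobi matrices $S_N$ and $S_0$, whose extremal eigenvalues ($1$ and $\cos\frac{\pi}{T+1}$ for $S_N$, $\cos\frac{\pi}{2T+3}$ for $S_0$) match exactly the values the paper simply cites from Yueh--Cheng. The only difference is that you propose to derive these eigenvalues (and supply Perron--Frobenius/random-walk backup arguments) rather than quote them from the literature.
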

	\begin{proof}
		The $\begin{pmatrix}
			0 & \frac{1}{2} \\
			\frac{1}{2} & 0
		\end{pmatrix}$ blocks have eigenvalues $\pm \frac{1}{2}$.
		For $S_N$, $\lambda_1 = 1$ and $\lambda_2 = \cos \frac{\pi}{T+1} = 1 - \Theta(T^{-2})$ \cite[Thm 3.4 ii]{yueh_cheng_2008}.
		For $S_0$, $\lambda_1 = \cos \frac{\pi}{2T+3} = 1 - \Theta(T^{-2})$ \cite[Thm 3.2 viii]{yueh_cheng_2008}.
		So there is a unique eigenvector with an eigenvalue of magnitude $1$ and the next largest eigenvalue is $1-\Theta(T^{-2})$.
	\end{proof}
	The spectral gap can be used to upper bound the convergence time.
	The $\varepsilon$-convergence time $\hat{m}(\epsilon)$ for a CPTP map $\mathcal{K}$ with no non-unity eigenvalues of magnitude 1 is
	\begin{equation*}
		\hat{m}(\epsilon) := \min \left( m> 0 : \sup_\rho \norm{\mathcal{K}^m(\rho)-\mathcal{K}^\infty(\rho)}\leq \varepsilon \right)
	\end{equation*}
	If there exists $C\geq 0$, $\mu < 1$ for some CPTP map $\mathcal{K}$ with no non-unity eigenvalues of magnitude 1 such that
	\begin{equation} \label{convergence time}
		\norm{\mathcal{K}^m-\mathcal{K}^\infty}_{1\rightarrow 1}\leq C\mu^m
	\end{equation}
	is true for all integers $m>0$, then $\hat{m}=\frac{\log(\varepsilon/C)}{\log(\mu)}$ is the $\varepsilon$-convergence time as eq. (\ref{convergence time}) reduces to $\norm{\mathcal{K}^{\hat{m}}-\mathcal{K}^{\infty}}_{1\rightarrow 1}\leq \varepsilon$.
	To find $C$ and $\mu$,
	\begin{proposition}[\cite{Szehr_2013}] \label{szehr sim}
		Let $\mathcal{K}$ be a completely-positive, trace-preserving map on $M_d(\mathbb{C})$.
		If $\mathcal{K}$ has a unique stationary state and there is a similarity transformation $S$ such that $S\circ \mathcal{K}\circ S^{-1}$ is normal, then $C=\sqrt{2d}\kappa_\tau$, where $d$ is the dimension of $\mathcal{K}$, $\kappa_\tau:=\norm{S\otimes S^{-1}}_{2\rightarrow 2}$ and $\mu$ is the largest nonunity eigenvalue.
	\end{proposition}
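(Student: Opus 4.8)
Although this proposition is quoted from \cite{Szehr_2013}, here is the argument I would reconstruct. The plan is to reduce the convergence estimate $\norm{\mathcal{K}^{m}-\mathcal{K}^{\infty}}_{1\to 1}\le C\mu^{m}$ to a statement about a \emph{normal} superoperator, for which the $2\to2$ operator norm of a spectral tail equals exactly the pertinent spectral radius, and then pay separately for (i) undoing the similarity transformation and (ii) passing from the Hilbert--Schmidt norm to the trace norm.

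First I would set $\Phi := \mathcal{S}\circ\mathcal{K}\circ\mathcal{S}^{-1}$, where $\mathcal{S}$ is the similarity, read at the level of $d^{2}$-dimensional transfer matrices; by hypothesis $\Phi$ is normal with respect to the Hilbert--Schmidt structure on $M_{d}(\mathbb{C})$. A similarity preserves the spectrum, so $\mathrm{spec}(\mathcal{K})=\mathrm{spec}(\Phi)$, and uniqueness of the stationary state---together with the absence of other peripheral eigenvalues, which is needed for any bound of the form $\norm{\mathcal{K}^{m}-\mathcal{K}^{\infty}}_{1\to1}\le C\mu^{m}$ to hold and which I would record as an explicit aperiodicity/primitivity assumption---forces $1$ to be the only eigenvalue of modulus $1$ and to be simple. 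Hence $\mathcal{K}^{\infty}$ exists and equals $\mathcal{S}^{-1}\circ\Phi^{\infty}\circ\mathcal{S}$, where $\Phi^{\infty}$ is the spectral projector of $\Phi$ onto the eigenvalue-$1$ eigenspace. Writing the spectral decomposition $\Phi=\sum_{j}\lambda_{j}P_{j}$ into mutually orthogonal projectors $P_{j}$ (this orthogonality is the only use of normality), one gets $\Phi^{m}-\Phi^{\infty}=\sum_{j:\lambda_{j}\neq 1}\lambda_{j}^{m}P_{j}$ and hence the sharp identity $\norm{\Phi^{m}-\Phi^{\infty}}_{2\to 2}=\max\{\,|\lambda_{j}|^{m}:\lambda_{j}\neq 1\,\}=\mu^{m}$, with $\mu<1$ the largest modulus of a non-unity eigenvalue of $\mathcal{K}$. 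This is what pins the decay to the true asymptotic rate rather than to something merely exponential.

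Next I would transport this back and change norms. From $\mathcal{K}^{m}-\mathcal{K}^{\infty}=\mathcal{S}^{-1}\circ(\Phi^{m}-\Phi^{\infty})\circ\mathcal{S}$ and submultiplicativity of the $2\to2$ norm, $\norm{\mathcal{K}^{m}-\mathcal{K}^{\infty}}_{2\to2}\le \kappa_{\tau}\,\mu^{m}$, where $\kappa_{\tau}=\norm{\mathcal{S}}_{2\to2}\norm{\mathcal{S}^{-1}}_{2\to2}=\norm{S\otimes S^{-1}}_{2\to2}$ is the condition number of the similarity. Converting to the trace norm uses the standard Schatten inequalities $\norm{X}_{2}\le\norm{X}_{1}$ on the input and $\norm{X}_{1}\le\sqrt{d}\,\norm{X}_{2}$ on the output, together with a further bounded factor obtained by reducing the supremum defining $\norm{\cdot}_{1\to1}$ to Hermitian (indeed positive, trace-one) inputs via a real/imaginary-part splitting. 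Collecting these constants reproduces $\norm{\mathcal{K}^{m}-\mathcal{K}^{\infty}}_{1\to1}\le\sqrt{2d}\,\kappa_{\tau}\,\mu^{m}$, i.e.\ $C=\sqrt{2d}\,\kappa_{\tau}$ with $\mu$ the largest non-unity eigenvalue, as claimed.

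The conceptual core---normality implies the operator norm equals the spectral radius, so removing the stationary projector leaves a tail of norm exactly $\mu^{m}$---is straightforward. The delicate parts are the constant-tracking: matching the quoted $\sqrt{2d}$, and, more importantly, ensuring the condition number of the similarity is paid only \emph{once}. The latter forces one to be precise about whether $S$ is taken at the operator ($d\times d$) or the superoperator ($d^{2}\times d^{2}$) level, and to exploit the sharp tensor-product norm identity $\norm{S\otimes S^{-1}}_{2\to2}=\norm{S}_{2\to2}\norm{S^{-1}}_{2\to2}$ rather than a crude three-factor submultiplicative estimate, which would instead cost $\kappa_{\tau}^{2}$. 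The only other subtlety is the one already noted: ``unique stationary state'' must be supplemented by primitivity, for otherwise $\mathcal{K}^{m}$ need not converge at all and no such $C$ exists.
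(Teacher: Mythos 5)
The paper does not prove this proposition at all: it is imported verbatim from \cite{Szehr_2013} as a black box, so there is no in-paper argument to compare against. Your reconstruction is, however, a faithful rendering of the standard proof from that reference, and its structure is sound: conjugate to the normal map $\Phi = S\circ\mathcal{K}\circ S^{-1}$, use orthogonality of the spectral projectors of a normal operator to get the exact identity $\norm{\Phi^{m}-\Phi^{\infty}}_{2\to 2}=\mu^{m}$, pay the condition number $\kappa_\tau=\norm{S}_{2\to2}\norm{S^{-1}}_{2\to2}=\norm{S\otimes S^{-1}}_{2\to2}$ exactly once when conjugating back, and then convert $2\to2$ to $1\to1$ at a dimension-dependent cost. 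Two small remarks. First, your caveat about peripheral eigenvalues is well taken but is not a gap in the paper's usage: the surrounding text explicitly restricts to CPTP maps ``with no non-unity eigenvalues of magnitude 1,'' and Theorem \ref{f spec} verifies this for $\mathcal{D}$, so primitivity is in force where the proposition is applied. Second, your accounting of the $\sqrt{2}$ is the one loose thread: the elementary chain $\norm{Y}_1\le\sqrt{d}\norm{Y}_2$ and $\norm{X}_2\le\norm{X}_1$ already yields $\norm{T}_{1\to1}\le\sqrt{d}\,\norm{T}_{2\to2}$ with no splitting, while a real/imaginary decomposition would naively cost a factor $2$ rather than $\sqrt{2}$; either way the quoted $C=\sqrt{2d}\,\kappa_\tau$ remains a valid (if possibly non-tight) constant, so this affects only the bookkeeping, not the correctness of the bound.
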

	\begin{lemma} \label{conv bound prop}
		For $C = \sqrt{2^{N+1}(T+1)}\phi^{2N}\geq 0$ and $\mu_\mathcal{D} =1-\Theta(T^{-2}) < 1$, eq. (\ref{convergence time}) is true for $\mathcal{D}$ for all integers $m>0$, where $\phi:= (1+\sqrt{5})/2$.
	\end{lemma}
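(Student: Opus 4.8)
The plan is to obtain eq.~(\ref{convergence time}) for $\mathcal D$ by invoking Proposition~\ref{szehr sim}, with $\mu=\mu_{\mathcal D}$ the largest non-unity eigenvalue of $\mathcal D$, which Theorem~\ref{f spec} has already identified as $1-\Theta(T^{-2})$, and with the similarity transformation $S$ taken to be exactly the sequence of conjugations used in the spectral analysis above. Concretely, I would (i) conjugate the Kraus operators of $\mathcal D$ by the unitary $W=\sum_t U_t\cdots U_1\otimes\ket{t}\bra{t}$, which replaces every $U_{t+1},U^\dagger_{t+1}$ by $\mathds 1$ without changing the spectrum, (ii) conjugate the resulting transfer matrix by $X=X_0^{\otimes N}\otimes\mathds 1_{\mathrm{cl}}$, and (iii) permute indices to reach the explicit block-diagonal form $S_N\oplus S_0^{\oplus 2^N-1}\oplus\bigl(\begin{smallmatrix}0&1/2\\1/2&0\end{smallmatrix}\bigr)^{\oplus 2^NT}\oplus 0$. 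Every block here is a real symmetric matrix, hence normal, so the composed map $S\,\mathcal D\,S^{-1}$ is normal; together with the uniqueness-of-stationary-state clause of Theorem~\ref{f spec}, this verifies the hypotheses of Proposition~\ref{szehr sim}.

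It then remains to pin down the dimension $d$ and the condition number $\kappa_\tau$. Since $\mathcal D$ acts on operators over the computational register times the clock register, $d=2^N(T+1)$, so Proposition~\ref{szehr sim} gives $C=\sqrt{2d}\,\kappa_\tau=\sqrt{2^{N+1}(T+1)}\,\kappa_\tau$, and the stated $C$ follows once $\kappa_\tau=\phi^{2N}$. The unitary factor $W$ and the index-rearrangement permutation have condition number $1$, so the entire cost comes from $X_0^{\otimes N}$. I would compute the singular values of $X_0$ directly: in the computational basis $X_0$ acts as $\mathds 1$ on $\mathrm{span}\{\ket{01},\ket{10}\}$ and as $M=\bigl(\begin{smallmatrix}1&1\\0&-1\end{smallmatrix}\bigr)$ on $\mathrm{span}\{\ket{00},\ket{11}\}$, where $MM^\dagger=\bigl(\begin{smallmatrix}2&-1\\-1&1\end{smallmatrix}\bigr)$ has eigenvalues $(3\pm\sqrt5)/2=\phi^{\pm2}$; hence $\|X_0\|_2=\|X_0^{-1}\|_2=\phi$. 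By multiplicativity of the operator norm under tensor products, $\|X_0^{\otimes N}\|_2=\|(X_0^{\otimes N})^{-1}\|_2=\phi^N$, so $\kappa_\tau=\phi^{2N}$, and with $\mu_{\mathcal D}=1-\Theta(T^{-2})<1$ from Theorem~\ref{f spec} we recover eq.~(\ref{convergence time}) for all integers $m>0$.

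The step I expect to be the real obstacle is not any of these estimates but the bookkeeping required to make the assembled $S$ fit the precise formulation of Proposition~\ref{szehr sim}: one has to check that conjugating the transfer matrix of $\mathcal D$ by $W$, then by $X_0^{\otimes N}\otimes\mathds 1_{\mathrm{cl}}$, then permuting, is a legitimate similarity transformation in the sense of that proposition, and that the resulting condition number is genuinely captured by $\|S\otimes S^{-1}\|_{2\to 2}$ with the value computed above --- in particular that the non-product conjugator $X$ on the doubled space, combined with the unitary pieces, contributes exactly the factor $\phi^{2N}$ and nothing more. Once that identification is made, the remaining ingredients (normality of the real symmetric tridiagonal blocks, the two-by-two singular-value computation, tensor-multiplicativity of the norm, and the count $d=2^N(T+1)$) are routine.
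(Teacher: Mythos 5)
Your proposal is correct and follows essentially the same route as the paper: conjugate by $W$ and then $X=X_0^{\otimes N}\otimes\mathds{1}_{\text{cl}}$ to reach a normal (real symmetric, block-diagonal) form, apply Proposition~\ref{szehr sim} with $d=2^N(T+1)$ and $\kappa_\tau\leq\norm{X_0}_{2\to2}^{2N}=\phi^{2N}$, and take $\mu_{\mathcal D}$ from Theorem~\ref{f spec}. Your explicit singular-value computation for $X_0$ (via $MM^\dagger$ with eigenvalues $(3\pm\sqrt5)/2=\phi^{\pm2}$) just fills in the step the paper states as ``by computation.''
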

	\begin{proof}
		It is easy to check $\mathcal{D}$ is CPTP and $XWDW^{-1}X^{-1}$ is normal.
		Then $\norm{XW \otimes XW}_{2 \rightarrow 2} = \norm{X \otimes X}_{2 \rightarrow 2} \leq \norm{X_0}_{2\rightarrow 2}^{2N} \norm{\mathds{1}}_{2\rightarrow 2}^{2} = \norm{X_0}_{2\rightarrow 2}^{2N}$ by unitary invariance and submultiplicativity of the norm under the tensor product.
		By computation the largest singular value of $X_i$ is $\phi:=(1+\sqrt{5})/2$.
		The dimension of $\mathcal{D}$ is $2^N(T+1)$.
		Theorem \ref{f spec} characterises the largest nonunit eigenvalue, giving $\mu_\mathcal{D}$.
	\end{proof}
	\begin{corollary} \label{lemma O(npolyt2)}
		The $\varepsilon$-convergence time of $\mathcal{D}$ to the unique stationary state $\rho_\infty$ is
		\begin{equation*}
			\hat{m} = O(NT^2\log T\log 1/\varepsilon)
		\end{equation*}
	\end{corollary}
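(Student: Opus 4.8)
The plan is to combine Lemma~\ref{conv bound prop} with the general relation $\hat{m} = \log(\varepsilon/C)/\log\mu$ recorded above (which follows from eq.~(\ref{convergence time}) once a valid $C,\mu$ pair is known), and then to simplify the resulting expression asymptotically. Lemma~\ref{conv bound prop} already supplies the pair $C = \sqrt{2^{N+1}(T+1)}\,\phi^{2N}$ with $\phi = (1+\sqrt5)/2$, and $\mu_{\mathcal{D}} = 1-\Theta(T^{-2})$; these in turn rest on Proposition~\ref{szehr sim} and Theorem~\ref{f spec}, so no new spectral input is needed. Substituting gives $\hat{m} = \big(\log(1/\varepsilon) + \log C\big)\big/\big(-\log\mu_{\mathcal{D}}\big)$, and the remaining work is just to estimate numerator and denominator.

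For the numerator, I would expand $\log C = \tfrac12(N+1)\log 2 + \tfrac12\log(T+1) + 2N\log\phi = O(N+\log T)$, so the numerator is $O(N + \log T + \log(1/\varepsilon))$. For the denominator, I would use $-\log(1-x) = \Theta(x)$ as $x\to 0$ together with the $\Theta(T^{-2})$ spectral gap from Theorem~\ref{f spec}, obtaining $-\log\mu_{\mathcal{D}} = \Theta(T^{-2})$. Dividing then yields $\hat{m} = O\big(T^2(N + \log T + \log(1/\varepsilon))\big)$, and since each of $N$, $\log T$, $\log(1/\varepsilon)$ is at least a positive constant, the additive factor can be absorbed into the product $N\log T\log(1/\varepsilon)$, giving $\hat{m} = O(NT^2\log T\log(1/\varepsilon))$ as claimed.

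I do not expect a genuine obstacle here; the only point needing care is bookkeeping of the Landau constants. One should check that the constant hidden in the $\Theta(T^{-2})$ spectral gap of Theorem~\ref{f spec} is independent of $N$ and $\varepsilon$ — it is, since it comes purely from $\cos\frac{\pi}{T+1}$ and $\cos\frac{\pi}{2T+3}$ — and that the exponentially large (in $N$) prefactor $C$ enters $\hat m$ only through $\log C$, hence contributes only linearly in $N$. With these checks in place the stated convergence time is immediate.
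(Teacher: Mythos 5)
Your proposal is correct and follows essentially the same route as the paper: take $C$ and $\mu_{\mathcal{D}}$ from Lemma~\ref{conv bound prop}, substitute into $\hat m = \log(\varepsilon/C)/\log\mu$, and use $\log C = O(N+\log T)$ together with $-1/\log(1-\Theta(T^{-2})) = \Theta(T^2)$. The only thing the paper adds is the one-line remark that $\rho_\infty$ is indeed a stationary state and is unique by Theorem~\ref{f spec}, which you should include to justify the phrase ``to the unique stationary state $\rho_\infty$'' in the statement.
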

	\begin{proof}
		From Lemma \ref{conv bound prop},
		$C = \sqrt{2^{N+1}(T+1)}\phi^{2N}$ and $\mu_\mathcal{D} =1-\Theta(T^{-2})$.
		So
		\begin{equation*}
			\hat{m}=-\frac{\log(C/\varepsilon)}{\log(\mu)}    = \Theta(T^2 (\log C + \log 1/\varepsilon)) = O(N T^2 \log T \log 1/\varepsilon)
		\end{equation*}
		using $\frac{-1}{\log(1-\Theta\left(\frac{1}{T^{2}}\right))} = \Theta(T^2)$.
		It is easy to verify $\rho_\infty$ is a stationary state and by Theorem \ref{f spec} it is unique.
	\end{proof}

	We will now show that the classical-walk map $\mathcal{C}$ also converges to the same unique fixed point $\rho_\infty$ in the same time as $\mathcal{D}$ (and so not slower than $\mathcal{L}$).

	The above calculations for $\mathcal{D}$ go through identically for $\mathcal{C}$, except that now
	\begin{equation}
		XCX^{-1}=S_N \oplus S_0^{\oplus 2^N-1}\oplus 0.
	\end{equation}
	As a result,
	\begin{theorem}
		The spectral gap of $\mathcal{C}$ is $\Theta\left(T^{-2}\right)$, and there is a unique eigenvector with an eigenvalue of magnitude 1.
	\end{theorem}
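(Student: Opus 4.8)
The plan is to mirror the proof of Theorem~\ref{f spec} essentially verbatim, exploiting the block decomposition $XCX^{-1}=S_N\oplus S_0^{\oplus 2^N-1}\oplus 0$ established immediately above. The only structural difference from the $\mathcal{D}$ case is that the $2^N T$ copies of $\bigl(\begin{smallmatrix}0 & \frac12 \\ \frac12 & 0\end{smallmatrix}\bigr)$ are absent and a single zero block has appeared instead. Since those blocks only ever contributed eigenvalues $\pm\tfrac12$, of magnitude bounded away from both $0$ and $1$ by a constant, deleting them and introducing a $0$ eigenvalue can neither create a new eigenvalue of magnitude $1$ nor shrink the gap below $\Theta(T^{-2})$. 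So the whole argument reduces to reading off the spectra of $S_N$ and $S_0$ exactly as before.

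First I would record that, just as for $\mathcal{D}$, conjugating the Kraus operators of $\mathcal{C}$ by $W=\sum_t U_t\cdots U_1\otimes\ket{t}\bra{t}$ sets every $U_{t+1},U^\dagger_{t+1}$ to $\mathds{1}$ without changing the spectrum, and then conjugating the transfer matrix by $X=X_0^{\otimes N}\otimes\mathds{1}_{\text{cl}}$ (with $X=X^{-1}$) and rearranging indices yields precisely $S_N\oplus S_0^{\oplus 2^N-1}\oplus 0$ --- this is the identical computation to the $\mathcal{D}$ case, the difference being exactly that $C^{\text{fw}}_t,C^{\text{bw}}_t$ have no cross terms that $D^{\text{prop}}_t$ does, which is what removes the $2$-dimensional blocks. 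Then I would invoke the Yueh--Cheng eigenvalue formulas for these tridiagonal-with-corner matrices as in Theorem~\ref{f spec}: $S_N$ has $\lambda_1=1$ with a one-dimensional eigenspace and $\lambda_2=\cos\frac{\pi}{T+1}=1-\Theta(T^{-2})$, while $S_0$ has top eigenvalue $\cos\frac{\pi}{2T+3}=1-\Theta(T^{-2})$. Combining the blocks: the only magnitude-$1$ eigenvalue of $XCX^{-1}$, hence of $\mathcal{C}$, is the single $\lambda_1=1$ coming from $S_N$, so its eigenvector is unique, and the second-largest eigenvalue magnitude is $1-\Theta(T^{-2})$, i.e.\ the spectral gap is $\Theta(T^{-2})$.

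The one point needing genuine care --- and the mild obstacle I would flag --- is confirming that no eigenvalue of $S_0$ (nor the non-principal eigenvalues of $S_N$) has magnitude equal to, or within $o(T^{-2})$ of, $1$ on the \emph{negative} side, since such a mode would spoil the gap claim even though it does not affect which eigenvalue has magnitude $1$. This is handled by the explicit form $2\cos\frac{(2j-1)\pi}{2T+3}$ of the (unscaled) $S_0$ spectrum: the most negative member is $2\cos\frac{(2T+1)\pi}{2T+3}=-2+\Theta(T^{-2})$, so after the overall factor $\tfrac12$ every eigenvalue of $S_0$ satisfies $|\lambda|\le 1-\Theta(T^{-2})$; the same holds for $S_N$. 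Finally I would note that uniqueness of the magnitude-$1$ eigenvector, together with the fact (noted in the proof of Corollary~\ref{lemma O(npolyt2)}) that $\rho_\infty$ is stationary, identifies that eigenvector with $\ket{\rho_\infty}$, so $\mathcal{C}$ converges to the same unique fixed point as $\mathcal{D}$, and by the spectral-gap bound it does so in the same $O(NT^2\log T\log 1/\varepsilon)$ time --- hence no slower than $\mathcal{L}$.
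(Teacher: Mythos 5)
Your proposal is correct and follows essentially the same route as the paper, whose proof of this theorem is literally just a pointer back to the proof of the spectral-gap theorem for $\mathcal{D}$: read off the spectrum from the block decomposition $S_N\oplus S_0^{\oplus 2^N-1}\oplus 0$ using the Yueh--Cheng eigenvalue formulas, noting that dropping the $\pm\tfrac12$ blocks cannot affect the gap or the uniqueness of the magnitude-$1$ eigenvector. Your extra check that no eigenvalue approaches $-1$ is a sensible piece of added care but does not change the argument.
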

	\begin{proof}
		See the proof of Theorem \ref{f spec}.
	\end{proof}
	\begin{corollary} \label{conv time C}
		The $\varepsilon$-convergence time of $\mathcal{C}$ to the unique stationary state $\rho_\infty$ is
		\begin{equation*}
			\hat{m} = O(NT^2\log T\log 1/\varepsilon)
		\end{equation*}
	\end{corollary}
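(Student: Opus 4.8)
The plan is to repeat, essentially verbatim, the argument establishing Corollary~\ref{lemma O(npolyt2)} for $\mathcal{D}$, now applied to the classical-walk map $\mathcal{C}$. That is, I would invoke Proposition~\ref{szehr sim}, so it suffices to (i) check that $\mathcal{C}$ is CPTP with a unique stationary state, (ii) exhibit a similarity transformation rendering $\mathcal{C}$ normal and control the constant $\kappa_\tau$, and (iii) read off the largest non-unit eigenvalue $\mu_{\mathcal{C}}$; then I would substitute into $\hat{m}=\log(\varepsilon/C)/\log\mu$.

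First I would note that $\sum_\alpha C_\alpha^\dagger C_\alpha = \mathds{1}$: this is the same computation as for $\mathcal{D}$, because $(D^{\text{prop}}_t)^\dagger D^{\text{prop}}_t = (C^{\text{fw}}_t)^\dagger C^{\text{fw}}_t + (C^{\text{bw}}_t)^\dagger C^{\text{bw}}_t$ (the would-be cross terms vanish since $\braket{t}{t+1}=0$), so $\mathcal{C}$ is CPTP. Combined with the spectral-gap theorem for $\mathcal{C}$ proved just above (a unique eigenvalue of modulus $1$) and the direct check $\mathcal{C}(\rho_\infty)=\rho_\infty$, this gives that $\rho_\infty$ is the unique stationary state. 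For normality I would use the already-observed fact that conjugating the Kraus operators of $\mathcal{C}$ by $W$ and then the transfer matrix by $X$ yields $XCX^{-1}=S_N\oplus S_0^{\oplus 2^N-1}\oplus 0$, which is block diagonal with real symmetric (or zero) blocks and hence normal; so $S:=XW$ works in Proposition~\ref{szehr sim}. The same bound $\kappa_\tau\le\phi^{2N}$ as in Lemma~\ref{conv bound prop} then holds, since $W$ is unitary and the remaining conjugation $X$ involves $X_0$, whose largest singular value is $\phi$; and the dimension of $\mathcal{C}$ is again $2^N(T+1)$. Thus $C=\sqrt{2^{N+1}(T+1)}\,\phi^{2N}$, while the spectral-gap theorem for $\mathcal{C}$ gives $\mu_{\mathcal{C}}=1-\Theta(T^{-2})$.

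Substituting into the convergence-time formula exactly as in the proof of Corollary~\ref{lemma O(npolyt2)}, $\hat{m}=-\log(C/\varepsilon)/\log\mu_{\mathcal{C}}=\Theta\big(T^2(\log C+\log 1/\varepsilon)\big)=O(NT^2\log T\log 1/\varepsilon)$, using $-1/\log(1-\Theta(T^{-2}))=\Theta(T^2)$ and $\log C=O(N+\log T)=O(N\log T)$. There is no real obstacle here; the only point deserving a moment's attention is that $\mathcal{C}$ differs from $\mathcal{D}$ precisely by the absence of the $2$-dimensional $\left(\begin{smallmatrix}0 & \frac12 \\ \frac12 & 0\end{smallmatrix}\right)$ blocks, so one should confirm this changes neither the conclusion of Proposition~\ref{szehr sim} nor the gap. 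Both are immediate: deleting symmetric blocks keeps the conjugated map normal, the extra $0$ block contributes only the eigenvalue $0$ (so no new modulus-$1$ eigenvalue), and the gap is still governed by $S_N$ and $S_0$ exactly as in Theorem~\ref{f spec}.
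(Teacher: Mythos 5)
Your proposal is correct and takes exactly the route the paper intends: the paper's proof of Corollary~\ref{conv time C} is simply a pointer back to the proof of Corollary~\ref{lemma O(npolyt2)}, and you have filled in precisely that argument (CPTP check, normality via $XW$-conjugation, the same constant $C=\sqrt{2^{N+1}(T+1)}\phi^{2N}$, and the gap $\mu_{\mathcal{C}}=1-\Theta(T^{-2})$). Your closing remark that the only structural difference is the deletion of the two-dimensional off-diagonal blocks, which affects neither normality nor the gap, is the right sanity check.
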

	\begin{proof}
		See the proof of Corollary \ref{lemma O(npolyt2)}.
	\end{proof}
	Using that the eigenvectors of $\mathcal{D}$ and $\mathcal{C}$ are the same, and the eigenvalues differ only by $\pm \frac{1}{2}$ for some eigenvectors, we can show that the difference between $\mathcal{D}^m$ and $\mathcal{C}^m$ shrinks exponentially with $m$.

	\begin{lemma} \label{closeness}
		$\norm{D^m-C^m}_2 \leq \frac{2^{N+1}T}{2^{m}}$.
	\end{lemma}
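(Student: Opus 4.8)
The plan is to exploit the structure already exposed in the block decompositions of $XDX^{-1}$ and $XCX^{-1}$: the two transfer matrices agree except on the small subspace carrying the $2^{N}T$ blocks $\left( \begin{smallmatrix} 0 & 1/2 \\ 1/2 & 0 \end{smallmatrix} \right)$. The only genuine difference between the maps is how they propagate --- $D^{\mathrm{prop}}_t = C^{\mathrm{fw}}_t + C^{\mathrm{bw}}_t$, while all other Kraus operators of $\mathcal D$ and $\mathcal C$ are identical --- so in transfer-matrix form
\begin{equation*}
	D - C \;=\; \sum_{t} \Bigl( C^{\mathrm{fw}}_t \otimes \overline{C^{\mathrm{bw}}_t} + C^{\mathrm{bw}}_t \otimes \overline{C^{\mathrm{fw}}_t} \Bigr),
\end{equation*}
and each of these ``cross'' terms only connects the two clock matrix-elements $\ket{t}\bra{t+1}$ and $\ket{t+1}\bra{t}$, acting on the doubled computation register by a unitary.

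Accordingly, let $\mathcal V$ be the span of the ``distance-one'' clock coherences $\ket{t}\bra{t+1}$ and $\ket{t+1}\bra{t}$ ($0\le t\le T-1$) tensored with the doubled computation register, and $\mathcal V^{\perp}$ its orthogonal complement (the clock-diagonal sector together with all clock coherences of distance $\ge 2$). First I would check, by direct inspection of the Kraus operators, that: (i) both $D$ and $C$ preserve the distance of a clock coherence, so $\mathcal V$ and $\mathcal V^{\perp}$ are each invariant; (ii) $D$ and $C$ restrict to the \emph{same} operator on $\mathcal V^{\perp}$ --- the initialisation and finalisation pieces are literally identical and act only on clock-diagonal elements, on the clock-diagonal sector the propagation contributions agree since $D^{\mathrm{prop}}_t\otimes\overline{D^{\mathrm{prop}}_t}$ restricted there equals $C^{\mathrm{fw}}_t\otimes\overline{C^{\mathrm{fw}}_t}+C^{\mathrm{bw}}_t\otimes\overline{C^{\mathrm{bw}}_t}$, and clock coherences of distance $\ge 2$ are annihilated by both maps; and (iii) on $\mathcal V$ one has $C|_{\mathcal V}=0$, whereas $D|_{\mathcal V}$ is a direct sum over $t$ of $2\times 2$ operator-valued blocks of the form $\tfrac12\left(\begin{smallmatrix} 0 & B_t^{\dagger} \\ B_t & 0 \end{smallmatrix}\right)$ with $B_t$ unitary; in particular $D|_{\mathcal V}$ is Hermitian and $(D|_{\mathcal V})^{2}=\tfrac14\,\mathds{1}_{\mathcal V}$.

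Granting (i)--(iii) the bound follows at once. Since $D$ and $C$ are block-diagonal with respect to $\mathcal V\oplus\mathcal V^{\perp}$ and coincide on $\mathcal V^{\perp}$, for every integer $m\ge 1$ we get $D^{m}-C^{m}=(D|_{\mathcal V})^{m}-(C|_{\mathcal V})^{m}=(D|_{\mathcal V})^{m}$, regarded as an operator supported on $\mathcal V$. From $(D|_{\mathcal V})^{2}=\tfrac14\mathds{1}$ its spectrum is $\{\pm\tfrac12\}$, so $(D|_{\mathcal V})^{m}$ has operator norm $2^{-m}$; bounding the Schatten-$2$ norm by $\sqrt{\dim\mathcal V}$ times the operator norm (and noting $\dim\mathcal V\le(2^{N+1}T)^{2}$ for $T\ge 1$) --- or, matching the ``same eigenvectors'' remark above, summing $|\lambda_v|^{m}=2^{-m}$ over the $2\cdot 2^{N}T$ eigenvectors spanning $\mathcal V$ --- gives $\norm{D^{m}-C^{m}}_{2}\le 2^{N+1}T/2^{m}$.

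The one step that requires real care is (ii): one has to keep careful track of exactly which clock matrix-elements each Kraus operator of $\mathcal D$ and of $\mathcal C$ connects, in particular that the ``diagonal'' propagation of $\mathcal D$ is faithfully reproduced by the split forward/backward moves of $\mathcal C$, so that $\mathcal V$ genuinely decouples and the two maps agree on its complement. The remaining ingredients --- Hermiticity of $D|_{\mathcal V}$, the involution identity $(D|_{\mathcal V})^{2}=\tfrac14\mathds{1}$, and the final counting --- are then mechanical.
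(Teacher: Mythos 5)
Your proof is correct, and it rests on the same core observation as the paper's --- that $D$ and $C$ differ only on the sector of distance-one clock coherences, where $D$ has eigenvalues $\pm\tfrac12$ and $C$ acts as zero --- but you execute it differently. The paper invokes the earlier spectral analysis: it writes $D=\sum_k\lambda_k\ket{R_k}\bra{L_k}$ and $C=\sum_k\mu_k\ket{R_k}\bra{L_k}$ with shared biorthogonal eigenvectors and bounds $\norm{\sum_k(\lambda_k^m-\mu_k^m)\ket{R_k}\bra{L_k}}_2$ by ``counting eigenvectors,'' which implicitly relies on the block decompositions obtained after the similarity transformations $W$ and $X$ and glosses over the fact that the $\ket{R_k}\bra{L_k}$ need not be Hilbert--Schmidt orthonormal. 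You instead work directly with the Kraus operators, exhibit $\mathcal V\oplus\mathcal V^\perp$ as an explicitly $D$- and $C$-invariant splitting on which the two transfer matrices coincide off $\mathcal V$, and use that $2\,D|_{\mathcal V}$ is a Hermitian involution (hence unitary), so that $D^m-C^m=(D|_{\mathcal V})^m$ has \emph{all} singular values equal to $2^{-m}$ and $\norm{D^m-C^m}_2=\sqrt{\dim\mathcal V}\,2^{-m}\le 2^{N+1}T\,2^{-m}$. This is more self-contained and arguably tighter: your count $\dim\mathcal V=2^{2N+1}T$ (the coherence sector carries the full doubled computation register, so there are $2^{2N}T$ two-by-two blocks rather than the $2^NT$ quoted in the paper's block decomposition) gives $2^N\sqrt{2T}\,2^{-m}$, which comfortably implies the stated bound. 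The only cosmetic slip is describing $C$ as ``preserving the distance'' of clock coherences when it in fact annihilates all of them, but that only strengthens the invariance you need.
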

	\begin{proof}
		Both $C$ and $D$ are non-defective and have the same eigenvectors and so we can write $D = \sum_k \lambda_k \ket{R_k}\bra{L_k}$ and $C = \sum_k \mu_k \ket{R_k}\bra{L_k}$ where $\bra{L_k}\ket{R_l}=\delta_{kl}$.
		\begin{align}
			\norm{D^m-C^m}_2 &= \norm{\left(\sum_k \lambda_k \ket{R_k}\bra{L_k} \right)^m-\left(\sum_k \mu_k \ket{R_k}\bra{L_k} \right)^m}_2 \\
			&= \norm{\sum_k \left(\lambda_k^m-\mu_k^m \right) \ket{R_k}\bra{L_k}}_2 \\
			&\leq \frac{2^{N+1}T}{2^{m}}
		\end{align}
		by counting the number of eigenvectors for which $\lambda_k = \pm \frac{1}{2}$ and $\mu_k = 0$.
	\end{proof}

\subsection{Noise tolerance of $\mathcal{D}$}
	A simple argument about the error resilience of $\mathcal{C}$ in conjunction with the closeness between $\mathcal{C}$ and $\mathcal{D}$ allows for a lower bound on the error resilience of $\mathcal{D}$.
	We restrict to a noise model that acts only on the computational register.
	Recall the interpretation of $\mathcal{C}$ as being a classical random walk on the circuit $U$.
	If we take the noise to be iid depolarising noise, as the walker traverses the circuit the error incurred is non-decreasing in expectation.
	For this noise model, the only decrease in error is via a reset to $\ket{00\dots0}\bra{00\dots0}$ at the leftmost end of the circuit.
	For a walker at this leftmost end, moving directly to position $T$ where the computation outcome can be read out accumulates the smallest overall error, which is exactly the action of the standard circuit model.
	Therefore the error incurred by the classical walk variant of DQC when iterated to convergence is lower bounded by the error incurred by the standard circuit model.

	Proposition \ref{proposition lb} formalises the above.
	\begin{proposition} \label{proposition lb}
		Fix a computation $U=U_{T-1} \dots U_1 U_0$.
		Define a channel $\mathcal{U}$ acting analogously to the circuit model for $U$:
		\begin{itemize}
			\item $\mathcal{U}:=\mathcal{U}_{T-1} \circ \dots \circ \mathcal{U}_1 \circ \mathcal{U}_0 $, where
			\begin{align}
				\mathcal{U}_t(\bullet)= \left( U_t \otimes \ket{t+1}\bra{t} \right) \bullet \left( U^{\dagger}_t\otimes \ket{t}\bra{t+1} \right)
			\end{align}
		\end{itemize}
		Let $\tilde{\mathcal{U}}$ be $\mathcal{U}$ subjected to iid depolarising noise on the computational register:
		\begin{itemize}
			\item $\tilde{\mathcal{U}}:=(\mathcal{N}\otimes \mathcal{I})\circ \mathcal{U}_{T-1} \circ \dots \circ (\mathcal{N}\otimes \mathcal{I})\circ \mathcal{U}_1 \circ (\mathcal{N}\otimes \mathcal{I})\circ \mathcal{U}_0 $, where $\mathcal{N}$ is an iid depolarising channel.
		\end{itemize}
		Let $\mathcal{C}$ have Kraus operators as in eqs (\ref{F eqs 1})-(\ref{F eqs 4}) for $U$, and let $\mathcal{\tilde{C}}:=(\mathcal{N}\otimes \mathcal{I}) \circ \mathcal{C}$ be $\mathcal{C}$ subjected to iid depolarising noise on the computational register.
		Then for a density matrix $\rho$,
		\begin{equation}
			\text{\upshape{tr}}\left(Q\cdot \frac{P \cdot \mathcal{\tilde{C}}^{\hat{m}}(\rho)}{\text{\upshape{tr}}\left(P \cdot \mathcal{\tilde{C}}^{\hat{m}}(\rho)\right)}\right) \geq \text{\upshape{tr}}\left(Q\cdot \frac{P \cdot \tilde{\mathcal{U}}\circ (\mathcal{N}\otimes \mathcal{I})(\ket{0}\bra{0}^{\otimes N}\otimes \ket{0}\bra{0})}{\text{\upshape tr}\left(P \cdot \tilde{\mathcal{U}} \circ (\mathcal{N}\otimes \mathcal{I})(\ket{0}\bra{0}^{\otimes N}\otimes \ket{0}\bra{0})\right)} \right)
		\end{equation}
		where $P := \mathds{1}^{\otimes N} \otimes \ket{T}\bra{T}$ and $Q := \ket{\overline{b}}\bra{\overline{b}} \otimes \mathds{1}^{\otimes N}$ where $b\in\{0,1\}$ is the correct output of $U$, and $\hat{m}$ is the convergence time of $\mathcal{C}$.
	\end{proposition}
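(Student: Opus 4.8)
The plan is to unravel $\tilde{\mathcal C}^{\hat m}(\rho)$ into a classical mixture over random-walk histories on the circuit and reduce the inequality to a per-history comparison. Since every application of $\mathcal C$ measures the clock register, $\tilde{\mathcal C}^{\hat m}(\rho)$ is a sum over histories $\gamma$ (a starting clock value in the support of $\rho$'s clock marginal, followed by $\hat m$ forward/backward coin flips, with ``backward'' replaced by ``reset to $\ket{0}^{\otimes N}$'' at clock $0$ and ``forward'' replaced by ``idle'' at clock $T$) of non-negative terms $w_\gamma\,\sigma_\gamma\otimes\ket{e(\gamma)}\bra{e(\gamma)}$; here $e(\gamma)$ is the terminal clock value and $\sigma_\gamma$ is the result of applying to the appropriate clock-marginal of $\rho$ the composition of gate conjugations ($U_{t+1}$ on a forward edge, $U_{t+1}^{\dagger}$ on a backward edge), reset channels, and depolarising layers $\mathcal N$ prescribed by $\gamma$. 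Acting with $P$ and renormalising leaves a convex combination of the normalised $\sigma_\gamma$ over those $\gamma$ with $e(\gamma)=T$, so the left-hand side is the matching convex combination of per-history error probabilities $\mathrm{tr}(Q\sigma_\gamma)/\mathrm{tr}(\sigma_\gamma)$, hence at least $\min_{\gamma:\,e(\gamma)=T}\mathrm{tr}(Q\sigma_\gamma)/\mathrm{tr}(\sigma_\gamma)$. It then suffices to show this minimum is at least the right-hand side. The candidate minimiser is the history $\gamma^\star$ that resets and then marches $0\to1\to\cdots\to T$ (preceded, if $\rho$ is not supported at clock $0$, by enough backward steps to reach clock $0$, which the ensuing reset erases): because the reset overwrites the computational register with $\ket{0}^{\otimes N}$, the state it produces is precisely the reset, then a depolarising layer, then the $T$ circuit gates each immediately followed by a depolarising layer --- i.e.\ exactly $\tilde{\mathcal U}\circ(\mathcal N\otimes\mathcal I)(\ket{0}\bra{0}^{\otimes N}\otimes\ket{0}\bra{0})$ up to the trivial clock factor (on which $P$ and the normalisation act trivially), so $\mathrm{tr}(Q\sigma_{\gamma^\star})$ equals the right-hand side.

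Two structural facts organise the lower bound. First, the noiseless composition of the edge unitaries along \emph{any} walk from $0$ to $T$ equals the whole circuit $U$, since a forward then backward traversal of an edge contributes $U_{t+1}$ followed by $U_{t+1}^{\dagger}$ and cancels; hence every $\sigma_\gamma$ with $e(\gamma)=T$ agrees with $\sigma_{\gamma^\star}$ only in the number and placement of the interleaved $\mathcal N$'s. Second, a walk reaching $T$ has net displacement $+T$ and so makes at least $T$ forward moves; together with the $\mathcal N$ that follows the last reset (or, if there is no reset, the $\hat m\ge T+1$ layers over the whole history) this gives at least $T+1$ depolarising layers, with $\gamma^\star$ attaining exactly $T+1$. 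Third, $\mathcal N$ is unital and strictly contracting toward $\mathds 1/2^N$ (a weight-$w$ Pauli component is damped by $\mu^w$ for some fixed $\mu<1$), a property stable under conjugation by circuit unitaries.

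The crux, and the main obstacle, is to upgrade ``at least $T+1$ circuit-ordered depolarising layers'' into ``error no smaller than $\gamma^\star$'s'', i.e.\ that no history beats $\gamma^\star$. I would split into two regimes. For a history with a reset and only boundedly many extra walk steps, $\sigma_\gamma$ stays close to $\sigma_{\gamma^\star}$ and hence remains \emph{correct-aligned} (the output qubit's Bloch bias points toward $b$, inherited from the correctness of the noiseless circuit); for correct-aligned states one shows that an extra depolarising layer --- equivalently an extra noisy back-and-forth edge traversal --- never decreases $\mathrm{tr}(Q\cdot)$. The delicate point is that this monotonicity genuinely requires the alignment: for an anti-aligned state extra depolarisation would instead help, so the alignment invariant has to be propagated along the walk, and it is exactly this that breaks once the walk wanders far. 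That far-wandering case, and every reset-free history (where the marginal of $\rho$ fed in is arbitrary, even adversarial), is instead handled by the contraction estimate: $\hat m$ is polynomially large in $N$ and $T$ (Corollary~\ref{conv time C}), so $\|\sigma_\gamma-\mathds 1/2^N\|_1\le C\mu^{\hat m}$ with $C$ only exponential in $N$, forcing $\mathrm{tr}(Q\sigma_\gamma)\ge\tfrac12-C\mu^{\hat m}$, which exceeds the raw-circuit error (the right-hand side, which is $\le\tfrac12$ for any nontrivial computation). Reconciling the ``few extra steps'' and ``many extra steps'' regimes so that the two bounds meet --- or, more elegantly, replacing both by a single operator inequality on the subspace reachable after a reset-and-march --- is where the real work sits; granting it, $\min_{\gamma:\,e(\gamma)=T}\mathrm{tr}(Q\sigma_\gamma)/\mathrm{tr}(\sigma_\gamma)=\mathrm{tr}(Q\sigma_{\gamma^\star})$, which is the right-hand side, and since $\mathrm{tr}(P\cdot\tilde{\mathcal C}^{\hat m}(\rho))>0$ the claimed inequality follows.
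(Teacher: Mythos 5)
Your proposal follows essentially the same route as the paper's proof: unravel $\tilde{\mathcal C}^{\hat m}$ into a classical mixture over walk histories (justified because every Kraus operator of $\mathcal C$ is diagonal in the clock basis), split into the reset and no-reset cases, observe that the noiseless gate content of any walk from $0$ to $T$ telescopes to $U$ while the walk must traverse at least $T$ forward edges and hence at least $T+1$ depolarising layers after the last reset, and handle reset-free histories via the $\hat m=\Omega(T^2)$ convergence time. The step you single out as ``where the real work sits'' --- upgrading ``at least as many depolarising layers as the raw circuit'' into ``error probability $\tr(Q\,\cdot\,)$ no smaller than the raw circuit's'' --- is indeed the crux, and you are right that it is not automatic: an extra depolarising layer increases $\tr(Q\sigma)$ only when $\sigma$ is biased toward the correct outcome, and that alignment must be propagated along the walk. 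The paper does not close this gap any more rigorously than you do: its proof establishes only the channel-norm chain $\norm{(\mathcal N\otimes\mathcal I)\circ\mathcal U_t^\dagger\circ(\mathcal N\otimes\mathcal I)\circ\mathcal U_t\circ(\mathcal N\otimes\mathcal I)\circ\mathcal U_{t-1}-\mathcal U_{t-1}}\ge\norm{(\mathcal N\otimes\mathcal I)\circ\mathcal U_{t-1}-\mathcal U_{t-1}}$, i.e.\ that the walk's channel is at least as far from the ideal one as the circuit's single noisy gate, which does not by itself imply the claimed inequality for the specific observable $\tr(Q\,\cdot\,)$ evaluated on the specific state. So your proposal is at the same level of rigour as --- and more candid about the missing step than --- the paper's argument; the monotonicity lemma for correct-aligned states under interleaved depolarising layers that you identify is left implicit in both.
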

	\begin{proof}
		The first application of $\mathcal{\tilde{C}}$ collapses the walker's clock register into a classical mixture of states.
		Consider a single element of the mixture.
		There are two cases to analyse: either the walker resets his state to all-zeros at some time $m$, or the walker never resets. For the first case, the walker is in the state $(\mathcal{N}\otimes \mathcal{I})(\ket{0}\bra{0}^{\otimes N}\otimes \ket{0}\bra{0})$.
		Without loss of generality say the walker does not reset again.
		Consider applying noisy versions of $\mathcal{U}_{t-1}$, $\mathcal{U}_t$, and $\mathcal{U}^\dagger_{t}$ sequentially, as in $\mathcal{C}$:
		\begin{align}
			\norm{\left((\mathcal{N}\otimes \mathcal{I}) \circ \mathcal{U}^\dagger_t \circ (\mathcal{N}\otimes \mathcal{I}) \circ \mathcal{U}_t \circ (\mathcal{N}\otimes \mathcal{I}) \circ \mathcal{U}_{t-1} \right) - \mathcal{U}_{t-1}} \\
			\geq \norm{\left((\mathcal{N}\otimes \mathcal{I}) \circ \mathcal{U}^\dagger_t \circ \mathcal{U}_t \circ (\mathcal{N}\otimes \mathcal{I}) \circ \mathcal{U}_{t-1} \right) - \mathcal{U}_{t-1}}\\
			= \norm{\left((\mathcal{N}^2\otimes \mathcal{I}) \circ \mathcal{U}_{t-1} \right) - \mathcal{U}_{t-1}}
		\end{align}
		This does not incur less noise than only applying the noisy version of $\mathcal{U}_{t-1}$ (as in the circuit model):
		\begin{align}
			\norm{\left((\mathcal{N}^2\otimes \mathcal{I}) \circ \mathcal{U}_{t-1} \right) - \mathcal{U}_{t-1}} \geq \norm{\left((\mathcal{N}\otimes \mathcal{I}) \circ \mathcal{U}_{t-1} \right) - \mathcal{U}_{t-1}}
		\end{align}
		Generalising, we can conclude that moving from position $t$ to position $t'$ requires passing through at least $|t'-t|$ depolarising channels.
		And so if the walker is to reach position $T$ after a reset at position $0$ he must pass through at least $T$ depolarising channels without any further resetting.

		In the second case where the walker never resets, from the proof of Corollary \ref{conv time C} we have $\hat{m} = \Omega(T^2)$.
		Therefore a walker in any initial position $t$ will pass through at least $T^2$ depolarising channels without resetting.
	\end{proof}
	Theorem \ref{D lower bound} uses Proposition \ref{proposition lb} to lower bound the error tolerance of the discrete-time dissipative quantum computation map $\mathcal{D}$.
	\begin{theorem}\label{D lower bound}
		Let $U$ and $\tilde{\mathcal{U}}$ be as in Proposition \ref{proposition lb}.
		Let $\mathcal{D}$ have Kraus operators as in eqs (\ref{D eqs 1})-(\ref{D eqs 3}) for $U$, and let $\mathcal{\tilde{D}}:=(\mathcal{N}\otimes \mathcal{I})\circ \mathcal{D}$ be $\mathcal{D}$ subjected to iid depolarising noise on the computational register.
		Then for a density matrix $\rho$,
		\begin{align}
			\text{\upshape{tr}}\left(Q \cdot \frac{P\cdot\mathcal{\tilde{D}}^{\hat{m}}(\rho)}{\text{\upshape{tr}}\left(P \cdot \mathcal{D}^{\hat{m}}(\rho)\right)}\right) \geq \text{\upshape{tr}}\left(Q\cdot \tilde{\mathcal{U}}\circ (\mathcal{N}\otimes \mathcal{I})(\ket{0}\bra{0}^{\otimes N}\otimes \ket{0}\bra{0}) \right) - O(2^{-T})
		\end{align}
		for $P, Q$ from Proposition \ref{proposition lb}, and where $\hat{m}$ is the convergence time of $\mathcal{D}$.
	\end{theorem}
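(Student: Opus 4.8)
The plan is to reduce the statement to Proposition \ref{proposition lb}, which already proves the analogous lower bound for the classical‑walk map $\mathcal{C}$, by showing that the noisy process $\tilde{\mathcal{D}}^{\hat m}$ differs from $\tilde{\mathcal{C}}^{\hat m}$ only by an exponentially small (in $T$) term on the observables that appear in the statement. First I would record the exact decomposition $\mathcal{D} = \mathcal{C} + \mathcal{M}$, where $\mathcal{M} = \sum_{t=0}^{T-1}\mathcal{M}_t$ with $\mathcal{M}_t(\rho) = C^{\text{fw}}_t\rho C^{\text{fw}}_t + C^{\text{bw}}_t\rho C^{\text{bw}}_t$ — these are precisely the ``wrong‑pairing'' terms produced when the Hermitian jump operator $D^{\text{prop}}_t = C^{\text{fw}}_t + C^{\text{bw}}_t$ is squared against $\rho$ rather than against its adjoint, the remaining terms reproducing $\mathcal{C}$. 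The key structural facts are: $\mathcal{M}$ depends only on, and outputs into, the off‑diagonal clock blocks, whereas $\mathcal{C} = \Phi\circ\mathcal{C}\circ\Phi$ where $\Phi$ dephases the clock register; consequently $\mathcal{C}\circ\mathcal{M} = \mathcal{M}\circ\mathcal{C} = 0$, and, since each $\mathcal{M}_t$ acts on a clock block by $\tfrac12$ times a unitary conjugation and distinct $\mathcal{M}_t$ map into mutually orthogonal blocks, $\|\mathcal{M}\|_{2\to 2}\le \tfrac12$.

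Next I would push this through the noise. The channel $\mathcal{N}\otimes\mathcal{I}$ acts trivially on the clock, so it is block‑preserving in the clock index and hence commutes with $\Phi$; writing $\tilde{\mathcal{C}} = (\mathcal{N}\otimes\mathcal{I})\circ\mathcal{C}$ and $\tilde{\mathcal{M}} = (\mathcal{N}\otimes\mathcal{I})\circ\mathcal{M}$, we still have $\tilde{\mathcal{C}}\tilde{\mathcal{M}} = \tilde{\mathcal{M}}\tilde{\mathcal{C}} = 0$, and $\|\tilde{\mathcal{M}}\|_{2\to 2}\le\tfrac12$ using that $\mathcal{N}\otimes\mathcal{I}$ is unital, hence Schatten‑$2$ contractive. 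Because $\tilde{\mathcal{C}}$ and $\tilde{\mathcal{M}}$ annihilate one another, any word in the two letters that contains both has a vanishing adjacent pair, so $\tilde{\mathcal{D}}^{\hat m} = (\tilde{\mathcal{C}}+\tilde{\mathcal{M}})^{\hat m} = \tilde{\mathcal{C}}^{\hat m} + \tilde{\mathcal{M}}^{\hat m}$ with $\|\tilde{\mathcal{M}}^{\hat m}\|_{2\to 2}\le 2^{-\hat m}$. Converting to trace norm costs a factor $\sqrt{2^N(T+1)}$, which by Corollary \ref{lemma O(npolyt2)} is negligible against $2^{-\hat m}$ since $\hat m = \Omega(NT^2\log T)$; thus $\|\tilde{\mathcal{D}}^{\hat m}(\rho)-\tilde{\mathcal{C}}^{\hat m}(\rho)\|_1 = O(2^{-T})$ for every state $\rho$.

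Finally I would match the normalisations and invoke Proposition \ref{proposition lb}. Since $\tilde{\mathcal{M}}^{\hat m}(\rho)$ lives in the off‑diagonal clock sector, its $\ket{T}\bra{T}$ clock block vanishes, so $\mathrm{tr}(P\cdot\tilde{\mathcal{D}}^{\hat m}(\rho))$ and $\mathrm{tr}(QP\cdot\tilde{\mathcal{D}}^{\hat m}(\rho))$ coincide with the corresponding quantities for $\tilde{\mathcal{C}}^{\hat m}(\rho)$ up to $O(2^{-T})$; moreover $\mathrm{tr}(P\cdot\mathcal{D}^{\hat m}(\rho))$ agrees with $\mathrm{tr}(P\cdot\tilde{\mathcal{C}}^{\hat m}(\rho))$ because the clock marginal is insensitive both to $\mathcal{M}$ and to noise on the computational register. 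Substituting into the left‑hand side of the theorem and then applying Proposition \ref{proposition lb} — whose right‑hand side simplifies because $\tilde{\mathcal{U}}$ drives the clock to $\ket{T}$ and is trace‑preserving, so $P\cdot\tilde{\mathcal{U}}\circ(\mathcal{N}\otimes\mathcal{I})(\ket{0}\bra{0}^{\otimes N}\otimes\ket{0}\bra{0})$ is already normalised — produces exactly the claimed inequality with an $O(2^{-T})$ slack.

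The main obstacle is the second step: Lemma \ref{closeness} is proved by diagonalising $\mathcal{D}$ and $\mathcal{C}$ in a common eigenbasis, and that common eigenbasis is destroyed as soon as a noise channel is interleaved between iterations, so the lemma cannot simply be quoted. The remedy is the coherent/diagonal sector decomposition above, which replaces ``$\mathcal{D}$ and $\mathcal{C}$ share eigenvectors'' by the more robust pair of facts that $\mathcal{D}-\mathcal{C}$ is confined to a clock sector the noise leaves invariant and that $\mathcal{C}$ and $\mathcal{D}-\mathcal{C}$ are mutually annihilating; the remaining ingredients — the norm conversion, the normalisation bookkeeping, and the appeal to Proposition \ref{proposition lb} — are routine.
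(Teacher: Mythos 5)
Your proposal is correct, and it takes a genuinely different route from the paper's proof. The paper writes both transfer matrices in their common left/right eigenbasis, $D=\sum_k\lambda_k\ket{R_k}\bra{L_k}$, $C=\sum_k\mu_k\ket{R_k}\bra{L_k}$, and then also expands the noise as $N=\sum_k\alpha_k\ket{R_k}\bra{L_k}$ so that $(N\cdot D)^{\hat m}-(N\cdot C)^{\hat m}=\sum_k\alpha_k^{\hat m}(\lambda_k^{\hat m}-\mu_k^{\hat m})\ket{R_k}\bra{L_k}$, after which Lemma~\ref{closeness} gives the $O(2^{-T})$ bound. That step tacitly assumes the noise transfer matrix is diagonal in the same basis as $D$ and $C$ --- precisely the issue you flag, since interleaving a generic channel between iterations destroys the common eigenbasis on which Lemma~\ref{closeness} rests. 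Your replacement --- the exact Kraus-level decomposition $\mathcal{D}=\mathcal{C}+\mathcal{M}$ with $\mathcal{M}$ confined to the off-diagonal clock sector, together with the mutual annihilation $\tilde{\mathcal{C}}\circ\tilde{\mathcal{M}}=\tilde{\mathcal{M}}\circ\tilde{\mathcal{C}}=0$ surviving any noise that acts trivially on the clock --- yields the identity $\tilde{\mathcal{D}}^{\hat m}=\tilde{\mathcal{C}}^{\hat m}+\tilde{\mathcal{M}}^{\hat m}$ with no spectral assumption on the noise at all. It buys you three things: a rigorous justification of the $O(2^{-T})$ closeness that the paper's argument only gestures at; the stronger observation that, since $P=\mathds{1}\otimes\ket{T}\bra{T}$ reads only a diagonal clock block and $\tilde{\mathcal{M}}^{\hat m}(\rho)$ has none, the $P$- and $QP$-expectations of $\tilde{\mathcal{D}}^{\hat m}(\rho)$ and $\tilde{\mathcal{C}}^{\hat m}(\rho)$ in fact agree \emph{exactly}; and immediate generality to arbitrary (not just iid depolarising) noise supported on the computational register. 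The remaining bookkeeping --- the $\sqrt{2^N(T+1)}$ norm conversion, the equality of the four normalising traces via the classical clock marginal, and the appeal to Proposition~\ref{proposition lb} --- matches the paper's.
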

	\begin{proof}
		\begin{align}
			\text{\upshape{tr}}\left(QP\cdot\mathcal{\tilde{D}}^{\hat{m}}(\rho)\right) &= \text{\upshape{tr}}\left(QP\cdot \left(\mathcal{\tilde{D}}^{\hat{m}} - \mathcal{\tilde{C}}^{\hat{m}} + \mathcal{\tilde{C}}^{\hat{m}}\right)(\rho)\right) \\
			&= \text{\upshape{tr}}\left(QP\cdot\mathcal{\tilde{C}}^{\hat{m}}(\rho) \right)
			-\text{\upshape{tr}}\left(QP\cdot \left(\mathcal{\tilde{C}}^{\hat{m}} - \mathcal{\tilde{D}}^{\hat{m}}\right)(\rho)\right) \\
			&\geq \text{\upshape{tr}}\left(QP\cdot\mathcal{\tilde{C}}^{\hat{m}}(\rho) \right)
			- \norm{\mathcal{\tilde{D}}^{\hat{m}} -\mathcal{\tilde{C}}^{\hat{m}}}_1
		\end{align}
		The first term can be lower bounded using Proposition \ref{proposition lb}.
		Then use $\text{\upshape{tr}}\left(P \cdot \mathcal{\tilde{C}}^{\hat{m}}(\rho)\right)=\text{\upshape{tr}}\left(P \cdot \mathcal{\tilde{D}}^{\hat{m}}(\rho)\right)=\text{\upshape{tr}}\left(P \cdot \mathcal{C}^{\hat{m}}(\rho)\right)=\text{\upshape{tr}}\left(P \cdot \mathcal{D}^{\hat{m}}(\rho)\right)=\frac{1}{T+1}$ and $P \cdot \tilde{\Phi}\circ(\mathcal{N}\otimes \mathcal{I})(\ket{0}\bra{0})=\tilde{\Phi}\circ (\mathcal{N}\otimes \mathcal{I})(\ket{0}\bra{0})$.

		For the second term, by the proof of Corollary \ref{lemma O(npolyt2)} we have $\hat{m} = \Omega(T^2)$.
		As the eigenvectors of non-defective matrices form a complete basis, write $D= \sum_k \lambda_k \ket{R_k}\bra{L_k}$, $C= \sum_k \mu_k \ket{R_k}\bra{L_k}$, and $N = \sum_k \alpha_k \ket{R_k}\bra{L_k}$.
		Then, using Lemma \ref{closeness} and $T = \text{poly}(N)$, we have
		\begin{align}
			\norm{\mathcal{\tilde{D}}^{\hat{m}} -\mathcal{\tilde{C}}^{\hat{m}}}_1 &\leq \sqrt{2^N(T+1)} \norm{\tilde{D}^{\hat{m}}-\tilde{C}^{\hat{m}}}_2 \\
			&= \sqrt{2^N(T+1)} \norm{(N\cdot D)^{\hat{m}}-(N\cdot C)^{\hat{m}}}_2 \\
			&= \sqrt{2^N(T+1)} \norm {\sum_k \alpha_k^{\hat{m}} (\lambda_k^{\hat{m}} - \mu_k^{\hat{m}})\ket{R_k}\bra{L_k}}_2 \\
			&\leq \sqrt{2^N(T+1)} \norm{D^{\hat{m}} -C^{\hat{m}}}_2 \leq \frac{2^{2N}T^2}{2^{cT^2}} = O\left(2^{-T}\right)
		\end{align}
		as $|\alpha_k| \leq 1$ for all $k$, using that $\norm{\mathcal{X}}_1 \leq \sqrt{\text{dim}\mathcal{X}} \norm{X}_2$ for a map $\mathcal{X}$ and its associated transfer matrix $X$.
	\end{proof}
	By Theorem \ref{D lower bound}, under a natural noise model---iid depolarising noise---and even when the noise is restricted to not act on DQC's auxiliary clock register, DQC is not more robust than the circuit model.
	Theorem \ref{D lower bound} can likely be extended to other incoherent noise models.
	Furthermore, the error tolerance of $\mathcal{D}$ is worse than we prove, given it is unlikely that a state would have full support on the leftmost node of the circuit exactly $T$ steps before the algorithm ends.

	However, some comments on the error model are in order.
	If we imagine a scenario in which errors are coherent, it could be that large cancellations could occur between the cumulative error from many steps forwards and the error from single step backwards.
	Then the circuit model, unable to take advantage of a tactical step backwards, may incur more error than a DQC-like algorithm.
	Theorem \ref{D lower bound} may not necessarily hold under such an error model.
	But as we have shown DQC is well-approximated by a classical random walk, this is no longer an observation specific to the merits of DQC as opposed to the circuit model, but rather to that of classically choosing a path along the circuit as opposed to traversing it directly.

	We also have not ruled out the possibility that a variant of $\mathcal{D}$ with, for example, different weightings on each Kraus operator may fare better.
	However probably there would still exist a variant of $\mathcal{C}$ that could mimic its behaviour.

	On the other hand, the upper bound on the convergence time can be used to bound the strength $\delta$ of (arbitrary) noise that $\mathcal{D}$ is able to tolerate.
	As a comparison, the bound $\delta = O\left(\frac{1}{NT}\right)$ holds for the circuit model.
	\begin{lemma}[\cite{Szehr_2013}] \label{szehr}
		For positive integer $m$ let $\rho_m:=\mathcal{K}^m(\rho_0)$ and $\widetilde{\rho}_m:=\widetilde{\mathcal{K}}^m(\widetilde{\rho}_0)$ be the evolution of two density matrices $\rho_0$, $\widetilde{\rho}_0$ with respect to two CPTP maps $\mathcal{K}, \widetilde{\mathcal{K}}: M_d(\mathbb{C}) \rightarrow M_d(\mathbb{C})$. If $\mathcal{K}$ has a unique stationary state and $\norm{\mathcal{K}^m-\mathcal{K}^\infty}_{1\rightarrow 1}\leq C\mu^m$ for $C\geq 0, \mu < 1$ and all positive integers $m$, then the distance between the evolved states can be bounded by
		\begin{equation*}
			\norm{\rho_m - \widetilde{\rho}_m}_1 \leq \begin{cases}
				\norm{\rho_0 - \widetilde{\rho}_0}_1 + m\norm{\mathcal{K}-\widetilde{\mathcal{K}}}_{1\rightarrow 1} & \text{\upshape{ for }} m\leq \hat{m} \\
				C\mu^m\norm{\rho_0 - \widetilde{\rho}_0}_1 + \left(\hat{m} + C\frac{\mu^{\hat{m}}-\mu^{m}}{1-\mu}\right)\norm{\mathcal{K}-\widetilde{\mathcal{K}}}_{1\rightarrow 1} & \text{\upshape{ for }} m > \hat{m}
			\end{cases}
		\end{equation*}
		where $\hat{m}:=\frac{\log(1/C)}{\log(\mu)}$.
	\end{lemma}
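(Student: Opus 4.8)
The plan is to run the standard ``hybrid'' telescoping argument, splitting the total error into (i)~the error coming from the two different initial states propagated under the \emph{same} good map $\mathcal{K}$, and (ii)~the error accumulated because the two maps differ, propagated onto a fixed initial state. Concretely, I would start from
\[
\rho_m-\widetilde{\rho}_m=\mathcal{K}^m(\rho_0-\widetilde{\rho}_0)+\bigl(\mathcal{K}^m-\widetilde{\mathcal{K}}^m\bigr)(\widetilde{\rho}_0),
\]
and bound the two terms separately in each of the two regimes $m\leq\hat{m}$ and $m>\hat{m}$.

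For the first term, the point is that $\mathcal{K}^\infty$ sends every density matrix to the unique stationary state $\rho_\infty$, so by linearity $\mathcal{K}^\infty(\rho_0-\widetilde{\rho}_0)=0$. Hence for $m>\hat{m}$ one may replace $\mathcal{K}^m$ by $\mathcal{K}^m-\mathcal{K}^\infty$ and use the hypothesis $\norm{\mathcal{K}^m-\mathcal{K}^\infty}_{1\rightarrow1}\leq C\mu^m$ to get $\norm{\mathcal{K}^m(\rho_0-\widetilde{\rho}_0)}_1\leq C\mu^m\norm{\rho_0-\widetilde{\rho}_0}_1$; for $m\leq\hat{m}$ one instead uses only that a CPTP map is a trace-norm contraction on Hermitian operators, giving the cruder bound $\norm{\rho_0-\widetilde{\rho}_0}_1$. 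For the second term I would telescope,
\[
\mathcal{K}^m-\widetilde{\mathcal{K}}^m=\sum_{j=0}^{m-1}\mathcal{K}^{\,m-1-j}\circ(\mathcal{K}-\widetilde{\mathcal{K}})\circ\widetilde{\mathcal{K}}^j,
\]
and set $Y_j\coloneqq(\mathcal{K}-\widetilde{\mathcal{K}})\bigl(\widetilde{\mathcal{K}}^j(\widetilde{\rho}_0)\bigr)$. Since $\widetilde{\mathcal{K}}^j(\widetilde{\rho}_0)$ is a density matrix and both maps are Hermiticity- and trace-preserving, $Y_j$ is Hermitian and traceless with $\norm{Y_j}_1\leq\norm{\mathcal{K}-\widetilde{\mathcal{K}}}_{1\rightarrow1}$. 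The key observation is that $\mathcal{K}^\infty(Y_j)=0$: writing $Y_j$ as the difference of its positive and negative parts, which have equal trace because $Y_j$ is traceless, and using that $\mathcal{K}^\infty$ sends a positive operator to $\rho_\infty$ scaled by its trace, the two contributions cancel. Therefore $\norm{\mathcal{K}^{\,m-1-j}(Y_j)}_1=\norm{(\mathcal{K}^{\,m-1-j}-\mathcal{K}^\infty)(Y_j)}_1\leq C\mu^{\,m-1-j}\norm{Y_j}_1$, while trivially $\norm{\mathcal{K}^{\,m-1-j}(Y_j)}_1\leq\norm{Y_j}_1$ as well.

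It then remains to sum the better of the two per-term bounds, i.e.\ to evaluate $\sum_{\ell=0}^{m-1}\min\{1,C\mu^\ell\}$ with $\ell=m-1-j$, multiplied by $\norm{\mathcal{K}-\widetilde{\mathcal{K}}}_{1\rightarrow1}$. At the crossover $\ell=\hat{m}$ one has $C\mu^{\hat{m}}=1$, so for $\ell<\hat{m}$ the $\min$ equals $1$ and for $\ell\geq\hat{m}$ it equals $C\mu^\ell$. For $m\leq\hat{m}$ every index $\ell\leq m-1<\hat{m}$ contributes $1$, giving the sum $m$; for $m>\hat{m}$ one splits off the constant part ($\leq\hat{m}$ terms equal to $1$) from the geometric tail $\sum_{\ell\geq\hat{m}}C\mu^\ell=C\tfrac{\mu^{\hat{m}}-\mu^m}{1-\mu}$. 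Adding the two terms in each regime reproduces exactly the two cases of the statement.

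The only genuinely delicate points—and the steps I would state most carefully rather than the geometric-series bookkeeping—are the two structural facts: that a CPTP map is a trace-norm contraction on Hermitian operators, and that $\mathcal{K}^\infty$ annihilates Hermitian traceless operators (both standard, but needed to turn the convergence hypothesis into a bound on each telescoped term). One should also note that $\hat{m}=\log(1/C)/\log\mu$ is in general not an integer, so the count of ``constant'' terms is really $\lceil\hat{m}\rceil$; the (slightly loose) bound in the statement absorbs this discrepancy, and the rest is routine.
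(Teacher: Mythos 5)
Your proof is correct. Note that the paper does not prove this lemma at all: it is imported verbatim from the cited reference (Szehr et al.), so there is no in-paper argument to compare against. Your hybrid decomposition $\rho_m-\widetilde{\rho}_m=\mathcal{K}^m(\rho_0-\widetilde{\rho}_0)+(\mathcal{K}^m-\widetilde{\mathcal{K}}^m)(\widetilde{\rho}_0)$, the telescoping of $\mathcal{K}^m-\widetilde{\mathcal{K}}^m$, and the two structural facts you isolate (trace-norm contractivity of CPTP maps on Hermitian operators, and $\mathcal{K}^\infty$ annihilating traceless Hermitian operators so that the decay hypothesis applies term by term) are exactly the standard route to this stability bound, and all the steps check out. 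On the one point you flag as loose: the integer-versus-real discrepancy at the crossover actually resolves in your favour, since with $C\mu^{\hat{m}}=1$ the comparison reduces to $\theta\leq(1-\mu^{\theta})/(1-\mu)$ for $\theta\in[0,1]$, which holds because $\theta\mapsto(1-\mu^{\theta})/(1-\mu)$ is concave with value $0$ at $\theta=0$ and $1$ at $\theta=1$; so your sum $\sum_{\ell=0}^{m-1}\min\{1,C\mu^{\ell}\}$ is genuinely bounded by $\hat{m}+C(\mu^{\hat{m}}-\mu^{m})/(1-\mu)$ with no slack needed.
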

	\begin{corollary} \label{corollary evolution error}
		The distance between the evolution of an arbitrary initial state under $\mathcal{D}$ and the evolution of the same state under a perturbed map $\widetilde{\mathcal{D}}$, where $\norm{\mathcal{D}-\widetilde{\mathcal{D}}}_{1\rightarrow 1}\leq \delta$, is
		\begin{equation*}
			\norm{\rho_m - \widetilde{\rho}_m }_1 \leq
			\begin{cases}
				m \delta  & \text{\upshape for } m\leq \hat{m} \\
				\hat{m} \delta & \text{\upshape for } m > \hat{m}
			\end{cases}
		\end{equation*}
		where
		$\hat{m}=O(N T^2\log T)$ is the convergence time of $\mathcal{D}$.
	\end{corollary}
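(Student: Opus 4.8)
The plan is to obtain Corollary~\ref{corollary evolution error} as a direct specialisation of Lemma~\ref{szehr} to the case $\widetilde{\mathcal{K}} = \widetilde{\mathcal{D}}$, $\mathcal{K} = \mathcal{D}$, with identical initial states $\rho_0 = \widetilde{\rho}_0 = \rho$. First I would assemble the hypotheses Lemma~\ref{szehr} requires: $\mathcal{D}$ is CPTP with a unique stationary state (Theorem~\ref{f spec}), and $\norm{\mathcal{D}^m-\mathcal{D}^\infty}_{1\rightarrow 1}\leq C\mu^m$ for all integers $m>0$ with the explicit constants $C = \sqrt{2^{N+1}(T+1)}\,\phi^{2N}$ and $\mu_{\mathcal{D}} = 1-\Theta(T^{-2})$ from Lemma~\ref{conv bound prop}. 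Since $\rho_0 = \widetilde{\rho}_0$, the term $\norm{\rho_0-\widetilde{\rho}_0}_1$ vanishes everywhere, so the first branch of Lemma~\ref{szehr} immediately yields $\norm{\rho_m-\widetilde{\rho}_m}_1 \leq m\,\norm{\mathcal{D}-\widetilde{\mathcal{D}}}_{1\rightarrow 1} \leq m\delta$ for $m\leq\hat{m}$.

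For the branch $m > \hat{m}$, the term proportional to $\norm{\rho_0-\widetilde{\rho}_0}_1$ again drops out, leaving the coefficient $\hat{m} + C\frac{\mu^{\hat{m}}-\mu^m}{1-\mu}$ multiplying $\norm{\mathcal{D}-\widetilde{\mathcal{D}}}_{1\rightarrow 1}\leq\delta$. The key simplification is to use the Szehr cutoff $\hat{m} = \log(1/C)/\log\mu$, which gives $\mu^{\hat{m}} = e^{\log(1/C)} = 1/C$, hence $C\mu^{\hat{m}} = 1$ and $C\frac{\mu^{\hat{m}}-\mu^m}{1-\mu} = \frac{1-C\mu^m}{1-\mu}\leq\frac{1}{1-\mu}$. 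Because $\mu = 1-\Theta(T^{-2})$, this is $O(T^2)$, which is subsumed by $\hat{m}$ itself; so the coefficient is $\hat{m} + O(T^2) = O(\hat{m})$ and the bound collapses to $\hat{m}\delta$ up to the implicit constant.

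Finally I would reconcile the two meanings of $\hat{m}$: the Szehr cutoff $\log(1/C)/\log\mu$ and the ``convergence time of $\mathcal{D}$'' appearing in the statement. Since $\log C = \Theta(N+\log T)$ and $-1/\log\mu = \Theta(T^2)$, we get $\log(1/C)/\log\mu = \Theta\!\big(T^2(N+\log T)\big) = O(NT^2\log T)$, which agrees with the convergence-time bound of Corollary~\ref{lemma O(npolyt2)} at constant $\varepsilon$. I do not expect any genuine obstacle here: the whole argument is plugging the explicit $C,\mu$ from Lemma~\ref{conv bound prop} into Lemma~\ref{szehr} plus the elementary identity $C\mu^{\hat{m}}=1$; the only point demanding care is the bookkeeping that ties the Szehr $\hat{m}$ to the $O(NT^2\log T)$ convergence time and absorbs the residual $O(T^2)$ term, as above.
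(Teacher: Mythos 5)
Your proposal is correct and follows essentially the same route as the paper: the paper's proof is exactly the specialisation of Lemma~\ref{szehr} with $\rho_0=\widetilde{\rho}_0$ (killing the $\norm{\rho_0-\widetilde{\rho}_0}_1$ terms) together with the bound $C\frac{\mu^{\hat{m}}-\mu^{m}}{1-\mu}\leq\frac{1}{1-\mu}=\Theta(T^2)$, which is absorbed into $\hat{m}$. Your extra bookkeeping identifying the Szehr cutoff with the $O(NT^2\log T)$ convergence time is a reasonable elaboration of what the paper leaves implicit.
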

	\begin{proof}
		By the definition of convergence time
		\begin{align}
			C\frac{\mu^{\hat{m}}-\mu^{m}}{1-\mu} \leq C\frac{\mu^{\hat{m}}}{1-\mu}\leq \frac{1}{1-\mu}=\Theta(T^2).
		\end{align}
	\end{proof}

	\begin{theorem}
		$\mathcal{D}$ is robust to arbitrary noise for $\delta = O\left( \frac{1}{NT^3 \log T} \right)$.
	\end{theorem}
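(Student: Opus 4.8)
The plan is to push the convergence-time estimate $\hat{m}=O(NT^2\log T)$ for $\mathcal{D}$ through the perturbation bound of Corollary~\ref{corollary evolution error}, and then convert the resulting estimate on the whole evolved state into one on the \emph{extracted} computational output, at the cost of a factor $T$ for the clock post-selection that read-out requires.

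First I would model the noise as an arbitrary channel $\widetilde{\mathcal{D}}$ with $\norm{\mathcal{D}-\widetilde{\mathcal{D}}}_{1\to1}\leq\delta$ applied at every step, started from an arbitrary state $\rho_0$, and set $\rho_m=\mathcal{D}^m(\rho_0)$, $\widetilde{\rho}_m=\widetilde{\mathcal{D}}^m(\rho_0)$. Corollary~\ref{corollary evolution error} gives $\norm{\rho_m-\widetilde{\rho}_m}_1\leq\hat{m}\delta$ for all $m>\hat{m}$. Since $\mathcal{D}$ has a unique eigenvalue of modulus $1$ and spectral gap $\Theta(T^{-2})$ (Theorem~\ref{f spec}, Corollary~\ref{lemma O(npolyt2)}), $\rho_m\to\rho_\infty$, and in fact $\norm{\rho_m-\rho_\infty}_1$ drops below, say, $\tfrac{1}{16(T+1)}$ already at some $m=O(NT^2\log T)$. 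For such an $m$,
\[
  \norm{\widetilde{\rho}_m-\rho_\infty}_1\;\leq\;\hat{m}\delta+\frac{1}{16(T+1)}\;=\;O(NT^2\log T)\,\delta+\frac{1}{16(T+1)}.
\]

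Next, to read out the answer one measures the clock register of $\widetilde{\rho}_m$. As $\rho_\infty=\frac{1}{T+1}\sum_{t=0}^{T}\ket{\psi_t}\bra{\psi_t}\otimes\ket{t}\bra{t}$ has weight exactly $\frac{1}{T+1}$ on clock value $T$, the probability $p:=\tr\!\big((\mathds{1}^{\otimes N}\otimes\ket{T}\bra{T})\widetilde{\rho}_m\big)$ of obtaining outcome $T$ obeys $|p-\tfrac{1}{T+1}|\leq\norm{\widetilde{\rho}_m-\rho_\infty}_1$, and conditioned on that outcome the computational register is left in a state $\widetilde{\sigma}_T$ with
\[
  \norm{\widetilde{\sigma}_T-\ket{\psi_T}\bra{\psi_T}}_1\;\leq\;\frac{2\norm{\widetilde{\rho}_m-\rho_\infty}_1}{p}\;=\;O\!\big(T\,\norm{\widetilde{\rho}_m-\rho_\infty}_1\big),
\]
the factor $T$ coming from renormalising by the $\Theta(1/T)$ post-selection probability. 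Substituting the previous display gives $\norm{\widetilde{\sigma}_T-\ket{\psi_T}\bra{\psi_T}}_1=O(NT^3\log T)\,\delta+O(1)$, and taking the implicit constant in $\delta=O\!\left(\frac{1}{NT^3\log T}\right)$ small enough keeps the right-hand side below any prescribed constant $<1$; thus the noisy DQC output still faithfully approximates the ideal circuit output $\ket{\psi_T}$, which is the asserted robustness.

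The one genuinely delicate point is the post-selection step: this is exactly where the extra factor $T$ enters relative to the naive estimate $\delta=O(1/(NT^2\log T))$ one would get for the global state, and hence the factor $T^2\log T$ worse than the circuit-model tolerance $\delta=O(1/(NT))$ quoted above. The mechanism is that the computation's output lives in the weight-$\frac{1}{T+1}$ clock-$T$ sector of $\rho_\infty$, so additive error in the full state is amplified by $\Theta(T)$ once one conditions on reading a valid clock value. The remaining claim---that $\norm{\rho_m-\rho_\infty}_1$ can be made $O(1/T)$ while $m$ stays $O(NT^2\log T)$---is routine given the $\Theta(T^{-2})$ gap together with the similarity-transform constant $C=\sqrt{2^{N+1}(T+1)}\,\phi^{2N}$ from Corollary~\ref{lemma O(npolyt2)}.
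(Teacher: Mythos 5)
Your proposal is correct and follows essentially the same route as the paper: both arguments combine the perturbation bound $\norm{\rho_m-\widetilde{\rho}_m}_1\leq\hat{m}\delta=O(NT^2\log T)\,\delta$ from Corollary~\ref{corollary evolution error} with the $\Theta(1/T)$ clock post-selection probability, which amplifies the additive error by a factor of $T$ and yields the threshold $\delta=O\!\left(\frac{1}{NT^3\log T}\right)$. The only cosmetic difference is that you bound the trace distance of the conditional output state to $\ket{\psi_T}\bra{\psi_T}$, whereas the paper bounds the conditional probability of the wrong outcome directly using the BQP promise that the ideal error is at most $\tfrac{1}{3}$.
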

	\begin{proof}
		Again define projectors $P, Q$ onto time $T$ and computation outcome $\overline{b}$ respectively.
		For any $m > \hat{m}$, the probability of error (i.e. of measuring $\overline{b}$ on the computational register) given a measurement outcome of $T$ on the clock register is
		\begin{align}
			\tr \left( Q \cdot \frac{P \cdot \widetilde{\rho}_\infty}{\tr ( P \cdot \widetilde{\rho}_\infty) } \right) &= \tr \left( Q \cdot \frac{P \cdot (\widetilde{\rho}_\infty-\rho_\infty)}{\tr (P \cdot \widetilde{\rho}_\infty) } \right) + \frac{\tr \left( P \cdot \rho_\infty \right)}{\tr \left( P \cdot \widetilde{\rho}_\infty \right)} \tr \left( Q \cdot \frac{P \cdot \rho_\infty}{\tr ( P \cdot \rho_\infty) } \right) \\
			&\leq \frac{1}{\tr ( P \cdot \widetilde{\rho}_\infty)} \left( \tr \left (QP \cdot (\widetilde{\rho}_\infty - \rho_\infty ) \right) + \frac{1}{3}\tr ( P \cdot \rho_\infty) \right) \\
			&\leq \frac{1}{\tr ( P \cdot \widetilde{\rho}_\infty)} \left( \norm{\widetilde{\rho}_\infty-\rho_\infty}_1 + \frac{1}{3}\tr ( P \cdot \rho_\infty) \right) \\
			&\leq \frac{1}{\tr ( P \cdot \widetilde{\rho}_\infty)} \left(\hat{m}\delta + \frac{1}{3}\tr ( P \cdot \rho_\infty) \right)
		\end{align}
		where we have used $\norm{X}_1 := \sup_{0\leq \Pi \leq \mathds{1}} \tr (\Pi X)$, Corollary \ref{corollary evolution error}, and that by the definition of BQP we can take $\tr (Q \cdot \frac{P \cdot \rho_\infty}{\tr P \cdot \rho_\infty})~\leq~\frac{1}{3}$.
		\begin{align}
			\frac{1}{\tr ( P \cdot \widetilde{\rho}_\infty)} \left(\hat{m}\delta + \frac{1}{3}\tr ( P \cdot \rho_\infty) \right) \leq \frac{1}{\frac{1}{T+1}-\delta} \left( N T^2 \log T \delta + \frac{1}{3(T+1)} \right)
		\end{align}
		Choosing $\delta = O \left(\frac{1}{N T^3 \log T} \right)$ upper bounds the probability of error by a constant.
	\end{proof}

\section{Discussion and Outlook}
	\label{section:discussion}

	We have shown two main results in this paper: (1) our version of the DQE algorithm suppresses a part of the additive error in the ground state overlap of the final output state exponentially in the code distance and (2) the dissipative quantum computation algorithm is not more robust to iid depolarising noise than the circuit model. We now briefly discuss a few avenues for further investigation.

	The condition of geometric locality on the stabiliser encoded Hamiltonians we analyse for the DQE algorithm is physically motivated, but still raises the question of to what extent our results can be extended to other classes of Hamiltonians. The geometric locality assumption becomes relevant when considering circuit level depolarizing noise in our algorithm, since it ensures the unitary parts of the AGSP and stabiliser recovery $R$ operations admit constant-depth circuits (see Proposition \ref{prop:dqeError} and Lemma \ref{lemma:recovery}). Without this assumption, the depth of these operations will generally grow with the number of qubits $n$ and this worse scaling will propagate into the norm bounds we've given. Other types of Hamiltonians admit constant depth circuits for measurements of their terms however, and it would be interesting to determine the broadest class of Hamiltonians for which we can achieve similar error resilience bounds.

	As in normal error correction, the error correction procedure outlined in our algorithm actually introduces more errors than it fixes if the code distance is small. The fermionic encodings employed in quantum simulations are typically those that output low weight spin operators to reduce computational overhead, but these result in small distance codes \cite{DerbyCompactfermion}. Techniques nevertheless exist to boost distances while fixing the encoding rate between logical fermions and physical qubits; fermion-qubit mappings can also be chosen to optimise other quantities, such as the number of qubits required per fermion \cite{chien2022optimizing}. Our algorithm will integrate more effectively with some fermion-qubit mappings than others.

    We analyzed circuit level depolarizing noise in Section \ref{section:dqestab} due to its analytical tractability, but the results there can apply to more general noise models as the only assumptions needed are that the stabiliser recovery map $R$ corrects errors up to weight $(d-1)/2$ without introducing additional logical errors and that the noise acts locally. It is possible that similar if not tighter error resilience bounds can be obtained via instruments that simply measure the AGSP and resample to the maximally mixed state upon a failure. However, this procedure will have an exponentially worse run-time in general. We leave for future investigation the question of whether or not other stabiliser correction schemes can achieve better fault-resilience bounds.

    Recent work has also shown how dissipative dynamics can be used to prepare the ground state of specific families of Hamiltonians rapidly \cite{zhan2025rapidquantumgroundstate,lin2025dissipativepreparationmanybodyquantum}. As our work focuses primarily on the fidelity of the final output state as opposed to the algorithm run-time, one intriguing possibility for further research is whether both these techniques can be combined to yield a dissipative ground state preparation algorithm that rapidly converges to a more accurate final state for specific families of Hamiltonians.

    Lastly, we proved that the circuit-to-Lindbladian mapping of \cite{VWC09} gives a dynamics that is less robust to noise than implementing the circuit itself. 
    However, this mapping is not the only way of defining a Lindbladian, i.e. a set of operators $\{L_k\}$ and a self-adjoint operator describing the coherent evolution, such that the state reached after polytime evolution encodes a BQP computation outcome. 
    Investigating the noise resilience of alternative dissipative schemes \cite{RFA24} remains a promising direction for future research. 
    
	\section{Acknowledgements}

	We thank Daniel Zhang, Christopher Pattison, Dylan Airey, and Joel Klassen for many helpful discussions.
	J.P. is supported by the Engineering and Physical Sciences Research Council (grant number EP/S021582/1). T.S.C and A.R. are supported by the EPSRC Prosperity Partnership in Quantum Software for Simulation and Modelling (grant EP/S005021/1), and by the UK Hub in Quantum Computing and Simulation, part of the UK National Quantum Technologies Programme with funding from UKRI EPSRC (grant EP/T001062/1).

    \bibliographystyle{alpha}
	\bibliography{ref}
	\appendix
	\section{Background and Notation}

	We review here the notation and necessary background material used in the rest of the paper.

	\subsection{Notation}
	\label{appendix:notation}

	We use the notation $\| A \|_p$ to denote the Schatten $p$-norm of a bounded operator $A$.

	$\|A \|$ denotes the spectral norm of $A$, which is the largest singular value of $\|A\|$ and coincides with the Schatten-infinity norm (i.e. $p = \infty$). It also coincides with the induced matrix 2-norm which we denote as $$\|A\| = \|A\|_{2 \rightarrow 2} \coloneqq \sup_{v \neq 0} \frac{\|Av\|_2}{\|v\|_2},$$ where $\| \cdot \|_2$ is the Euclidean vector 2-norm. We analogously write the induced $p$-norms on matrices as $\|A\|_{p \rightarrow p}$. All of these matrix norms have the sub-multiplicativity property, e.g. $\|AB\| \leq \|A\| \|B\|$.

	It is known that the action of a completely positive (CP) map $\mathcal{E}$ on bounded operators can be written as $\mathcal{E}(\rho) = \sum_i E_i \rho E_i^{\dag}$, where the $E_i$ are operators obeying $E_i^{\dag} E_i \leq I$ and equality occurs when $\mathcal{E}$ is also trace-preserving (TP). This is called the Kraus operator representation and the operators $E_i$ are the Kraus operators of $\mathcal{E}$.

	The action of $\mathcal{E}$ can also be represented as matrix multiplication of an associated operator $E$ acting on vectorized density matrices denoted by $\text{vec } \rho = |\rho \rangle \rangle$ and we will occasionally use this throughout the paper. The vectorization of an operator involves the non-canonical isomorphism between a finite dimensional vector space and its dual given by $\bra{v} \mapsto \ket{v}$ and thus acts on basis elements of the vector space of operators as $\ket{l}\bra{m} \mapsto \ket{l} \ket{m}$. $E$ has matrix elements $\bra{i}\mathcal{E}(\ket{l}\bra{m})\ket{j}= \tr((\ket{i}\bra{j})^{\dag} \mathcal{E}(\ket{l}\bra{m}))$ so
	\begin{align*}
		\tr((\ket{i}\bra{j})^{\dag} \mathcal{E}(\ket{l}\bra{m})) &= \tr \left(\sum_{\alpha} (\ket{i}\bra{j})^{\dag}E_{\alpha}(\ket{l}\bra{m})E_{\alpha}^{\dag}\right)\\
		&= \sum_{\alpha} \bra{i} E_{\alpha} \ket{l} \bra{m} \bar{E}_{\alpha}^T \ket{j} \\
		&= \sum_{\alpha} \bra{ij} E_{\alpha} \otimes \bar{E}_{\alpha} \ket{lm}.
	\end{align*}
	Thus we can define $$E = \sum_{\alpha} E_{\alpha} \otimes \bar{E}_{\alpha},$$ and this is called the \textit{transfer matrix} of $\mathcal{E}$. It acts as $$E|\rho \rangle \rangle = |\mathcal{E}(\rho)\rangle \rangle.$$ We record the convenient fact that
	\begin{equation}
		\|E\| \coloneqq \max_{X \neq 0} \frac{\|\mathcal{E}(X) \|_2}{\|X\|_2} \label{eq:transferspecnorm}
	\end{equation}
	in terms of the Schatten $2$-norm of operators.

	\subsection{Stabiliser Codes}
	\label{appendix:stabcodes}

	Central to the construction of stabiliser codes is the \textit{Pauli group} denoted by $G_n$, where $n$ is the number of qubits. For $n=1$, it is the group generated by $$G_1 = \langle X,Z, iI \rangle$$ where $X,Z$ are the Pauli $\sigma_X$ and $\sigma_Z$ matrices respectively. For $n>1$, it is generated by all possible $n$-fold tensor products of the generators of $G_1$. An abelian subgroup $S$ of $G_n$ that does not contain $-I$ is called a \textit{stabiliser group} and is denoted by $S = \langle g_1, \ldots, g_k \rangle$, where $g_1,\ldots,g_k$ are its generators. $S$ acts upon a vector space $V$ of dimension $2^n$ and the common $+1$ eigenspace of the generators $g_1,\ldots,g_k$ is called the \textit{codespace} and has dimension $2^{n-k}$ as a subspace of $V$ \cite{mikeike}.

	Let $x = (x_1,\ldots,x_k)$ be a vector in $(\mathbb{Z}/2\mathbb{Z})^{\oplus k}$. These correspond to the measurement outcomes from measuring the $k$ stabilisers $g_i$, called the syndrome outcomes. For each $x$, we define the syndrome projectors
	\begin{equation}
		P_x = \frac{\prod_{i=1}^k (I + (-1)^{x_i} g_i)}{2^k}. \label{eq:syndromeprojs}
	\end{equation}
	The codespace is precisely $\im P_{(0,\ldots,0)}$. For ease of notation, we write $P_0 \coloneqq P_{(0,\ldots,0)}$ and define for each $s = 1, \ldots, k$ the projectors $$\Pi_s = \frac{I - g_s}{2}$$ onto the -$1$ eigenspace of $g_s$.

	We prove the following simple lemma:

	\begin{lemma}
		With notation as above, $\ker (\sum_{s=1}^k \Pi_s) = \mathrm{im } \ P_{(0,\ldots,0)}$. \label{lemma:codespaceproj}
	\end{lemma}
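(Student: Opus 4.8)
The plan is to exploit the fact that the $\Pi_s$ are mutually commuting positive semidefinite projectors, so that the kernel of their sum collapses to the intersection of their individual kernels. First I would note that since $S$ is abelian the generators $g_1,\dots,g_k$ commute, hence so do the projectors $\Pi_s = (I-g_s)/2$, and each $\Pi_s \geq 0$. For any vector $v$ one then has
\begin{equation}
  \Braket{v | \Big(\textstyle\sum_{s=1}^k \Pi_s\Big) | v} = \sum_{s=1}^k \Braket{v|\Pi_s|v} = \sum_{s=1}^k \norm{\Pi_s v}_2^2,
\end{equation}
using that each $\Pi_s$ is an orthogonal projector. Therefore $(\sum_s \Pi_s)v = 0$ forces $\Pi_s v = 0$ for every $s$; the converse inclusion is immediate. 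This gives $\ker(\sum_s \Pi_s) = \bigcap_{s=1}^k \ker \Pi_s$.

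Next I would identify each $\ker \Pi_s$. Since $\Pi_s = (I-g_s)/2$ projects onto the $-1$-eigenspace of $g_s$, its kernel is exactly the $+1$-eigenspace $\{v : g_s v = v\}$. Hence $\ker(\sum_s \Pi_s) = \bigcap_{s=1}^k \{v : g_s v = v\}$ is the common $+1$-eigenspace of all the stabiliser generators, which is by definition the codespace.

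Finally I would verify that this common eigenspace coincides with $\im P_{(0,\dots,0)}$. Writing $P_0 = \prod_{s=1}^k (I + g_s)/2$ and diagonalising the commuting $g_s$ simultaneously, a joint eigenvector with $g_s$-eigenvalue $(-1)^{x_s}$ is mapped by $P_0$ to $\big(\prod_s \tfrac{1 + (-1)^{x_s}}{2}\big) v$, which equals $v$ when $x = (0,\dots,0)$ and $0$ otherwise; so $\im P_0$ is precisely the span of the joint $+1$-eigenvectors, completing the chain of equalities. There is no substantive obstacle here — the only point requiring a moment's care is the elementary observation that the kernel of a sum of positive semidefinite operators is the intersection of their kernels, together with invoking commutativity so that the relevant eigenspace decompositions are simultaneous.
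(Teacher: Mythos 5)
Your proof is correct, and it follows the same overall skeleton as the paper's: both reduce the claim to the identity $\ker\left(\sum_{s}\Pi_s\right)=\bigcap_{s}\ker\Pi_s$ and then identify that intersection with $\im P_{(0,\dots,0)}$. The differences are in how each half is justified. For the first half, the paper simply asserts the kernel-of-sum identity, whereas you supply the standard positivity argument $\braket{v|\sum_s\Pi_s|v}=\sum_s\norm{\Pi_s v}_2^2$; this is a genuine improvement in rigour over the paper's version, and it is the right way to see why the step needs the $\Pi_s$ to be positive semidefinite (commutativity is irrelevant there). For the second half, the paper works purely with projector algebra --- it proves $\im\Pi_s=\ker(I-\Pi_s)$ and then runs an inductive two-containment argument showing $P_{(0,\dots,0)}v=v$ for $v\in\bigcap_s\ker\Pi_s$ --- while you diagonalise the commuting $g_s$ simultaneously and read off the action of $P_0=\prod_s(I+g_s)/2$ on joint eigenvectors. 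Your spectral route is shorter and makes transparent that $\im P_0$ is exactly the common $+1$-eigenspace; the paper's route avoids invoking simultaneous diagonalisation and so is marginally more elementary, at the cost of a longer chain of containments. Either argument is complete.
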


	\begin{proof}
		Observe $I = \Pi_s + (I - \Pi_s) = (I - g_s)/2 + (I + g_s)/2$ and $\Pi_s(I - \Pi_s) = (I - \Pi_s)\Pi_s = 0$ for every $s$.

		We now show $\text{im } \Pi_s = \ker (I - \Pi_s)$ for each $s$. To show this, let $v \in \text{im } \Pi_s$. Then $(I-\Pi_s) \Pi_s v = (I-\Pi_s) v = 0$. For the other containment, let $v \in \ker I - \Pi_s$. Then $(I - \Pi_s) v = 0$ so $\Pi_s v = v$.

		Now note that $\ker (\sum_{s=1}^k \Pi_s) = \cap_{s=1}^k \ker \Pi_s$. It is a set-theoretic fact that $\text{im }(f \circ g) \subseteq \text{im } f$, so since the projectors $\Pi_s, \Pi_{s'}$ commute for all $s,s'$, $\im P_{(0,\ldots,0)} = \im (I - \Pi_s)$ for all $s$. Therefore, if $v \in \im P_{(0,\ldots,0)}$, $v \in \im (I - \Pi_s)$ for all $s$. So $(I - \Pi_s)v = v$, which implies $\Pi_s v = 0$ for all $s$. Thus $\im P_{(0,\ldots,0)} \subseteq \cap_s \ker \Pi_s$. Now let $v \in \cap_s \ker \Pi_s$. From the result in the previous paragraph, $v \in \im (I - \Pi_s)$. But then $(I - \Pi_{s'}) (I - \Pi_s) v = (I - \Pi_{s'}) v = v$ since $v$ is also in $\ker \Pi_{s'} = \im (I - \Pi_{s'})$. Continuing inductively in this manner, we get that $P_{(0,\ldots,0)}v = v$.
	\end{proof}

\end{document}